\documentclass[12pt]{article}
\usepackage{color}

\usepackage{amsfonts,amsmath,amsxtra,amsthm,amssymb,latexsym}
\usepackage{amssymb}  
\usepackage{bbm} 
\usepackage{graphicx}

 \usepackage[utf8]{inputenc}

\textwidth=155mm \textheight=225mm \hoffset=-5mm \voffset=-15mm
\pagestyle{myheadings}

\newtheorem{thm}{Theorem}[section]
 \newtheorem{cor}[thm]{Corollary}
 \newtheorem{lem}[thm]{Lemma}
 \newtheorem{prop}[thm]{Proposition} 

 \theoremstyle{definition}
 \newtheorem{defn}{Definition}[section]
 \theoremstyle{remark}
 \newtheorem{rem}{Remark}[section]
 
 \newtheorem{ex}[thm]{Example}

 \numberwithin{equation}{section}

\let\emptyset\varnothing

\DeclareMathOperator{\im}{Im}
\DeclareMathOperator{\re}{Re}

\DeclareMathOperator{\Ln}{Ln}
\DeclareMathOperator{\diam}{diam}
\DeclareMathOperator{\Arg}{Arg}
\DeclareMathOperator{\supp}{supp}
\DeclareMathOperator{\I}{Is}

\def\RR{\mathbb R}
\def\CC{\mathbb C}
\def\NN{\mathbb N}
\def\ZZ{\mathbb Z}
\def\QQ{\mathbb Q}

\def\DD{\mathbb D}

\def\BB{\mathbb B}
\def\AAA{\mathbb{F}}
\def\Aa{\mathbb{A}}

\def\ii{\mathrm{i}}
\def\dd{\mathrm{d}}
\def\ee{\mathrm{e}}

\def\wt{\widetilde}

\def\al{\alpha}

\def\vphi{\varphi}
\def\la{\lambda}
\def\th{\theta}
\def\de{\delta}
\def\ep{\epsilon}
\def\vep{\varepsilon}
\def\ga{\gamma}
\def\Ga{\Gamma}
\def\pa{\partial}
\def\om{\omega}

\def\De{\Delta}
\def\Si{\Sigma}
\def\si{\sigma}

\def\L{\mathcal{L}}
\def\E{\mathcal{E}}

\def\N{\mathfrak{N}}

\def\V{\mathcal{V}}

\def\Z{\mathcal{Z}}
\def\K{\mathcal{K}}
\def\G{\mathcal{G}}

\def\Size{\mathfrak{s}}

\def\d{\mathrm{Ad}}
\def\gen{\mathrm{gen}}
\def\Sf{\mathfrak{S}}
\def\g{\wt r}
\def\M{\mathcal{M}}
\def\Gr{\mathfrak{G}}

\def\loc{\mathrm{loc}}

\def\hor{\mathrm{hor}}
\def\cont{\mathrm{cont}}
\def\comp{\mathrm{comp}}
\def\narrow{\mathrm{narrow}}

\def\id{\mathfrak{e}}
\def\Fr{\mathrm{Fr}}
\def\OO{\mathcal{O}}
\def\cub{\mathrm{cub}}

\begin{document}
\title{On the multilevel internal structure of the asymptotic distribution of resonances}
\author{}
\date{}
\maketitle

{\center 
{\large 
Sergio Albeverio $^{\text{a}}$ and Illya M. Karabash $^{\text{b,c,*}}$
\\[4ex]
}}

\vspace{4ex}

{\small \noindent
$^{\text{a}}$  Institute for Applied Mathematics, Rheinische Friedrich-Wilhelms Universität Bonn,
and Hausdorff Center for Mathematics, Endenicher Allee 60,
D-53115 Bonn, Germany\\[1mm]
$^{\text{b}}$
Mathematical Institute, Rheinische Friedrich-Wilhelms Universität Bonn,
 Endenicher Allee 60, D-53115 Bonn, Germany \\[1mm]
 $^{\text{c}}$ Institute of Applied Mathematics and Mechanics of NAS of Ukraine,
Dobrovolskogo st. 1, Slovyans'k 84100, Ukraine\\[1mm]
$^{\text{*}}$ Corresponding author: i.m.karabash@gmail.com\\[2mm]
E-mails: 
albeverio@iam.uni-bonn.de, i.m.karabash@gmail.com
}

\begin{abstract}
We prove that the asymptotic distribution of resonances has a multilevel internal structure for the following
classes of Hamiltonians $H$: Schrödinger operators with point interactions in $\RR^3$, quantum graphs, and 1-D photonic crystals.
In the case of $N \ge 2$ point interactions, the set of resonances $\Si (H)$ essentially consists of a finite number of sequences 
with logarithmic asymptotics. We show how the leading parameters $\mu$ of these sequences are connected with the geometry of 
the set $Y=\{y_j\}_{j=1}^N$ of interaction centers $y_j \in \RR^3$.  The minimal parameter $\mu^{\min}$ corresponds to 
the sequences with `more narrow' and so more observable resonances. 
The asymptotic density of such narrow resonances can be expressed via 
the multiplicity of $\mu^{\min}$, 
which occurs to be connected with the symmetries 
of $Y$ and naturally introduces a finite number of classes of configurations of $Y$.
In the case of quantum graphs and 1-D photonic crystals, 
the decomposition of $\Si(H)$ into a finite number of asymptotic sequences is proved under additional 
commensurability conditions. To address the case of a general quantum graph, we introduce families 
of special asymptotic density functions for two classes of strips in $\CC$.
The obtained results and effects are compared with those of 
obstacle scattering.
\end{abstract}

{
\small \noindent
MSC-classes: 
35B34, 
35P20, 
35J10, 
35P25, 
81Q37, 
81Q35, 
78A45,  
81Q80, 
70J10,  
05C90 
\\[2mm]
Keywords: quasi-normal-eigenvalue, narrow resonance, delta-interaction, topological resonance, quantum graph, Weyl-type asymptotics, 
exponential polynomial,  scattering, symmetry

\tableofcontents

}

\normalsize

\section{Introduction}
\label{s:intro}

\subsection{Main goals and related studies}

Let $\De= \sum_{j=1}^N \pa^2_{x_j}$ be the Laplacian operator 
in the complex Lebesgue space $L^2_\CC (\RR^m)$ with odd $m \ge 1$.
For operators $H$ obtained as various types of perturbations of $(-\De)$ on compact subsets of $\RR^m$, \emph{resonances} $k$  
are defined as poles of the resolvent $(H-z^2)^{-1}$ extended in a generalized sense through $\RR$ into the 
 lower complex half-plane $\CC_- := \{ z \in \CC : \im z <0\}$ (this and other types of definitions can be found, e.g., in \cite{AH84,DZ17,Z99,Z17}).

Reviews on resonances for obstacle and geometrical scattering can be found in \cite{DZ17,Z99,Z17}.
Due to various  engineering applications, wave equations, resonances, and related optimization problems for  
noncompact quantum graphs (see \cite{KS99,EL10,DP11,DEL10,GSS13,LZ16,L16,EL17,dVT18} and the monographs \cite{BK13,P12})
and for point interactions (see \cite{AH84,DK07,Ka14,AK17,HL17,LL17,AK18} and the monographs \cite{AGHH12,AK00}) 
attracted a substantial attention during the last decade.



The collection of all resonances $\Si (H) \subset \CC$ that are associated with an operator $H$ (in short, resonances of $H$) is 
a \emph{multiset}, i.e., a set in which 
an element $e$ can be repeated a finite number $m_e \in \NN$ of times 
(this number $m_e$ is called the multiplicity of $e$). The \emph{multiplicity of a resonance} $k$ 
is defined as the multiplicity of the corresponding generalized pole of 
$(H-z^2)^{-1}$ (e.g., \cite{DZ17,Z17}) or as the multiplicity of 
a certain analytic function built from the resolvent of $H$ 
and generating resonances as its zeros (\cite{AH84,DEL10,DP11}).

Presently, there are only few Hamiltonians $H$ for those it is known that 
$\Si (H)$ essentially decomposes into sequences 
$\{k_n\}_{n \in \ZZ}$ 
with prescribed asymptotics (`essentially' here and below means that the decomposition takes place 
after a possible exclusion of a finite number of resonances).
Almost all such $H$ are either one-dimensional \cite{K68,Z87,CZ95,S96,P97}
or radial symmetric  (see \cite{Z89JFA,S06,DZ17,Z17} and references therein). 
The exceptions are two  examples 
of 3-dimensional (3-D) Schrödinger Hamiltonians 
with 2 and 3 point interactions that were considered in \cite{AH84} and for those
$\Si (H)$ consists of one and two  sequences, resp., both with logarithmic asymptotics of the form 
$C_1  t  - \ii C_2 \Ln  |t|  + C_0 +   o (1)$ as integer $t$ goes to $\pm \infty$
(where $C_1, C_2>0$ and $C_0 \in \CC$ are constants).
For the arguments of \cite{AH84} concerning 3 point interactions 
the presence of additional symmetries was important, i.e., the example 
possesses the group of symmetries of a regular triangle embedded into 
$\RR^3$.
Here and below under \emph{the group of symmetries of} $Y \subset \RR^3$ we understand the group of 
isometries $\I$ of $\RR^3$ such that $\I Y=Y$.

The first goal of the present paper is to prove that 
$\Si (H)$ essentially decomposes into a finite number of sequences with logarithmic asymptotics 
for an arbitrary point interaction Hamiltonian
$H=H_{a,Y}$ 
corresponding to the formal differential expression 
\begin{equation}\label{e:H}
-\De u (x) + `` \sum_{j=1}^N \mu (a_j) \de (x - y_j) u (x)  ``, \quad x \in \RR^3 , \ N \in \NN, 
\end{equation}
with a finite number $N \ge 2$ of \emph{interaction centers} $y_j \in \RR^3$ and 
the tuple $a=(a_j)_{j=1}^N \in \CC^N$ of \emph{`strength' parameters} (see \cite{AH84,AGHH12,AK00,AK17} 
and Section \ref{s:Def} for the definition of 
$H_{a,Y}$).
The second goal is to connect the leading parameters of these asymptotic sequences 
to the geometry of the set $Y = \{ y_j \}_{j=1}^N$ 
(Section \ref{s:Geom}).

Then we investigate to which extent the result about the structure of $\Si (H_{a,Y})$ can be carried over 
from the case of point interactions 
to the case of quantum graph Hamiltonians (Section \ref{s:G}).

 Our interest in the internal structure of $\Si (H)$ 
is motivated, in particular, by the necessity to consider the notion of
 high-energy resonance asymptotics from the point of view of Physics ,
where only the  resonances  that are closer to $\RR$ play role (see 
\cite{EL17} and Section \ref{s:Physics}),
and by the recent studies of narrow `topological' resonances \cite{GSS13,dVT18}.
The rigorous definition of structural parameters of $\Si (H)$ can also provide an approach to optimization 
problems of the type of \cite[Section 8]{CZ95}, which involve the whole set $\Si (H)$ and are much less studied 
than problems on optimization of an individual resonance \cite{K13,Ka14,KLV17}.

The results on the structure of $\Si (H)$ in the cases of point interactions and quantum graphs 
allow us to consider logarithmic  
$\{z \in \CC \ : \ -C_2 \ln (|\re z|+1) -C_0 \le \im z  \le -C_1 \ln (|\re z|+1) + C_0 \}$ 
or horizontal $\{ -C_2  \le \im z \le - C_1 \}$ strips  that contain `more narrow' resonances,
to define related asymptotic densities, and, in the case of point interaction Hamiltonians,
to connect the density of narrow resonances to the group of symmetries of the set of interaction centers.

Note that various types of strips containing `more narrow' resonances 
have been intensively studied in the context of obstacle 
scattering \cite{SZ93,SZ95,SZ99,Z99,I07,Z17}, 
mainly, with the use of corresponding counting functions.


Recall that the \emph{counting function} $\N_H (\cdot)$ for resonances 
is defined by 
\begin{gather} \label{e:N(R)}
\N_H (R) := \# \{ k \in \Si (H) : |k| \le R \} ,
\end{gather}
where $\#E$ is the number of elements of a multiset $E$.
The study of the asymptotics of $\N (R)$ as $R \to \infty$ for scattering poles associated 
with compactly supported potentials
in $\RR^m$ with odd $m \ge 3$ was initiated in \cite{M83} 
(for the relation between the notions of scattering poles and resonances, 
see \cite{DZ17,Z17}). This study was continued \cite{Z89JFA,F98,SaB01,CH05,S06}
and extended to a geometric scattering (see reviews in \cite{Z99,DZ17,Z17}), 
to quantum graphs \cite{DEL10,DP11,L16,EL17}, and to point interaction Hamiltonians
$H_{a,Y}$ \cite{LL17,AK18}.

For any unbounded set $S \subset \CC$, one can define a restricted version of the counting function 
$ \N_H^{S} (R) := \# \{ k \in S \cup \Si (H) \ : |k|\le R \}  $.
It seems that the study of the asymptotics of $ \N_H^{S} (R)$ 
was originally motivated \cite{I86} by the fact 
that in the case of even $m$, the generalized resolvent has a $\log$-type branching point at $z=0$.
Therefore the counting of resonances was restricted to sectors $\th \le \arg z \le 0$, $\th \in \RR_-$, of 
the associated Riemannian surface \cite{V94_DMJ} (for the case of odd $m$, see \cite{SZ93}). 
On the other side, the existence of infinitely many resonances 
in a certain horizontal strip $S$
and asymptotic sequences in $S$ were studied for the case of two convex obstacles 
 in connection with trapped broken characteristic rays (see \cite{I83,G88,I07} and the
  reviews in \cite{Z99,I07}).
Trapping effects for obstacle scattering have motivated also the investigation of asymptotics 
of $ \N_H^{S} (R)$  for different types of shaped strips $S$, including semi-logarithmic strips 
$\{ -C_1 \ln (|\re z|+1)  \le \im z  \le 0 \}$
and cubic strips $\{ -C_2 |\re z|^{1/3} -C_0 \le \im z \le -C_1 |\re z|^{1/3} +C_0\}$
(see \cite{SZ93,SZ95,SZ99,I07,J15} and references therein; here $C_0, C_1, C_2>0$). 

In \cite{DP11},
it was shown that the set  of resonances $\Si (H_\G)$ 
for a noncompact quantum graphs $\G$ with the Kirchhoffs 
coupling  lie in a certain horizontal strip $S$ and the leading term in the asymptotics  of $ \N_H (R) = 
\N_H^{S} (R) $ was connected with the structure of the graph.  In \cite{DEL10}, 
the study of asymptotics of $ \N_H (R)$ 
was extended to quantum-graphs with general self-adjoint couplings, and 
it was shown that the set of resonances is contained in a union of a horizontal strip with a finite number of logarithmic strips. 

In Section \ref{s:Dis}, we discuss some similarities between the structure of $\Si (H)$ for point interactions and quantum graphs on one hand,
and various cases of obstacle scattering on the other hand.

\subsection{Overview of main results and methods of the paper}
\label{ss:iPI}

Our initial step to find asymptotic sequences inside of $\Si (H)$ is to use the observation of  
\cite{DEL10,DP11,LL17} that the set of resonances is a set of zeroes of a certain exponential 
polynomial $F$ for the cases of noncompact quantum graphs and of point interaction Hamiltonians (see also \cite{L16,EL17,AK17,AK18}).
By the P\'{o}lya-Dickson theorem, the zeroes of exponential polynomials are concentrated 
in a finite number of logarithmic strips (see \cite{BG12,DEL10,L16}; note that 
horizontal strips can be considered as a degenerated type of logarithmic ones \cite{BC63}).

We show that, in the case of a point interaction Hamiltonian $H_{a,Y}$ associated with (\ref{e:H}), 
the corresponding $F$ falls into a special class of exponential polynomials whose distribution 
diagrams generate only retarded logarithmic strips (here and below the terminology of \cite{BC63} is used).
The set of zeroes for such exponential polynomials splits into a finite number of asymptotic sequences.
Translating this result to the (multi-)set $\Si (H_{a,Y})$ we show that
$\Si (H_{a,Y})$ essentially
consists of sequences that for  integer $t $ with large $|t|$ have an asymptotics of the form
\[ 
2 \pi \mu_n t  - \ii \mu_n \Ln | t|  + C +   o (1)
\] 
and whose leading parameters $\mu_n$ 
form a finite set $\{ \mu_n \}_{n=1}^M \subset \RR_+$.
It is natural to say that $\mu_n$ are the \emph{structural parameters of 1st order}
(\emph{the parameters of 2nd order} $\om_{n,j}$ are hidden inside constants $C$, see Theorem \ref{t:ra} for details).
Each of the parameters $\mu_n$ participates in a number $r_n \in \NN $ of asymptotic sequences. It is natural to call the number $r_n$
the \emph{multiplicity of $\mu_n$}.

To show in Theorems \ref{t:gen}-\ref{t:MaxStr} the interplay between  $\{ \mu_n \}_{n=1}^M$ and the metric geometry of the tuple $Y$, 
we introduce a sequence of values $\Size_m$, which are called \emph{$m$-sizes of $Y$} and are generalizations 
of the size of $Y$ introduced in \cite{LL17}.  For integer $m \in [0,N]$, 
let $S_{N,m}$ be the set of permutations $\si \in S_N$ of the symmetric group $S_N$ such that $\si $ 
has exactly $N-m$ fixed points,
i.e., $
S_{N,m} := \{ \si \in S_N \ : \ \#\{i : \si (i) \neq i \} = m\}.
$
For integer $m \in \{0\} \cup [2,N]$, we define \emph{the  $m$-size of $Y$} by
\begin{align}
\Size_m & = \Size_m (Y) :=  \max_{\si \in S_{N,m}} \sum_{j=1}^N | y_{j} -  y_{\si (j)}| ;  \notag \\
\Size_1 & = \Size_1 (Y) := \diam Y , \text{ where $\diam (Y) := \max_{1 \le j,j'\le N} |y_j - y_{j'}|$ is the diameter of $Y$.}
\notag
\end{align}
Then $\Size_0 = 0$, $\Size_2 = 2 \diam Y$, 
$\Size_3$ is the maximal perimeter of a triangle with vertices in $Y$;
$1$-size of $Y$ is defined in a special way since $S_{N,1} = \varnothing$.
The logic behind this is that the equality $\Size_1 := (\Size_2 + \Size_0)/2$ simplifies the formulation of Theorem \ref{t:MaxStr}.
The $N$-size $\Size_N$ was called in \cite{LL17} the size of $Y$ and used to define Weyl and non-Weyl types 
of asymptototics for $\N_{H_{a,Y}} (\cdot)$ (see the beginning of Section \ref{s:Geom}).

The connection between the parameters $\mu_n$ and $m$-sizes $\Size_m$ is established 
in Theorems \ref{t:gen}-\ref{t:MaxStr}.
However it is difficult to give an explicit rule of computation of the leading parameters 
$\mu_n$ via $m$-sizes that works for any $Y$. The reasons for this are shown by the examples of Section \ref{ss:ex}.

The structural theorem for $\Si (H_{a,Y})$ in the case of point interactions is quite special and 
cannot be directly brought  over even to the case of quantum graphs,
which is considered in Section \ref{s:G}. However this theorem give a hint how one can 
define the structural parameters even if the decomposition of $\Si (H)$
into asymptotic sequences is not available. With this aim we use counting functions in `shaped strips'
similar to that of \cite{SZ93,SZ95} and define via them corresponding asymptotic 
density functions. The main examples of such functions in this paper 
are the \emph{logarithmic counting functions}
\begin{align} \label{e:Nlog}
\N^{\log} (\mu,R) := \# \{ k \in \Si (H) \ : \ - \mu \ln (|\re k|+1) \le  \im k \text{ and } |k|\le R \} , 
\end{align}
and the associated \emph{logarithmic density function} 
\begin{align} \label{e:AdLog}
\d^{\log} (\mu) := \lim_{R\to \infty} \frac{\N^{\log} (\mu,R)}{R} , \quad \mu \in \RR , \text{ and }
\d^{\log} (+\infty) := \lim_{R\to \infty} {\N_H (R)/R} .
\end{align}
The above definition of $\d^{\log} (+\infty)$ is natural because, for $H_{a,Y}$ and 
for quantum graphs,  this limit exists and has 
a finite value that was studied in \cite{DP11,DEL10,LL17,AK18} 
in connection with Weyl-type asymptotics of $\N_H (\cdot)$.

For $H_{a,Y}$ and for quantum graphs, the function
$\d^{\log} : (-\infty,+\infty] \to [0,+\infty) $ is  bounded, nondecreasing and satisfies $\d^{\log} (\mu) =0$
for $\mu < 0$ and $\d^{\log} (\mu) = \d^{\log} (+\infty)$ for large enough $\mu$
(this follows from the definition, results of \cite{DEL10,DP11,LL17}, and Theorem \ref{t:ra}).
So it defines a bounded measure $d \d^{\log} (\cdot)$ on $\RR$ with a compact support.
The connection of $\d^{\log} (\cdot)$ with the structure of $\Si (H)$ is obvious in the case of
$H_{a,Y}$ from Theorem \ref{t:ra}  and in the case of quantum graphs from Theorem \ref{t:KSH} 
(the latter evolves from \cite[Theorem 3.1]{DEL10}, but describes $\Si (H)$ on a more fine structural level).
Namely, the measure $d \d^{\log} (\cdot)$ consists of a finite number of point masses, 
the minimum $\mu^{\min} $ of its support and the height of the corresponding jump $\d^{\log} (\mu^{\min}+0) - \d^{\log} (\mu^{\min}-0)$
are the parameters describing the high-energy asymptotics of `most narrow' (and so most physically relevant)
resonances (see Section \ref{s:Physics} and the discussion in \cite{E84}).

For $H_{a,Y}$, the parameter $\mu^{\min}$ 
is always equal $1/\diam Y$.
So the main parameter of the formula (\ref{e:anar}) for \emph{the asymptotic density of narrow resonances} 
$\d^{\log} (\mu^{\min}+0) - \d^{\log} (\mu^{\min}-0)$ is the multiplicity $r^\narrow$ of 
$\mu^{\min}$, which is an integer number  between $2$ and $N$ and is studied by Theorems \ref{t:F2} and \ref{t:rM>2}.
These theorems naturally lead to a conjecture about the connection of the value of $r^\narrow$ with the group of symmetries 
of $Y$.

The structural description of $\Si (H)$ in the case of a quantum graph Hamiltonian $H$ is given in Section \ref{ss:QG}. 
It is more complex than that of $\Si (H_{a,Y})$. In the case where $\mu^{\min}=0$, 
a certain neutral strip $\{ z \in \CC : |\im z| \le \wt \ga\}$ contains 
an infinite number of resonances, that are not necessarily  decomposable into a finite number of asymptotic sequences
(at least we do not know a general enough theorem of such type for exponential polynomials).
In the important case of the Kirchhoff coupling, the situation is the worst possible. The measure $d \d^{\log} (\cdot)$ 
consists of one point mass at $\mu^{\min}=0$, i.e., there exists only one 1st order structural parameter, 
which does not describe any structure. The description of the structure of $\Si (H)$
 should be encoded in the measure $d \d^\hor$ associated with another density function  (\ref{e:AdHor})
built with the use of horizontal strips. However, the distribution of resonances in the non-decomposable cases 
is connected with difficult questions arising in the exponential polynomial approach to the study of zeros of 
Riemann zeta function (see the monograph \cite{BG12} and the references therein).
To obtain a complete decomposition of $\Si (H)$ into asymptotic sequences we impose the additional 
commensurability condition $\ell_{m_1}/\ell_{m_2} \in \QQ$ on the lengths $\ell_m$ of edges of the graph $\G$ 
(see Theorem \ref{t:Kir} and the part (iii) of Theorem \ref{t:KSH}, here and below $\QQ$ is the set of 
rational numbers).

In Section \ref{ss:PhC}, we brought the above results over the set of 
resonances of 1-D photonic crystals using the fact that they can be considered as generalized `weighted'
quantum graphs with the Kirchhoff coupling (for another type of coupling for weighted graphs, see \cite{DEL10}).

\textbf{Notation}. 
We use the convention that if $n_2 < n_1$, then $\{ z_n \}_{n=n_1}^{n_2} = \varnothing$.
The following standard sets are used: the sets $\NN$, $\ZZ$, $\QQ$, and $\RR$ of natural, integer, rational, and real 
numbers, resp., the lower 
and upper 
complex half-planes 
$\CC_\pm = \{ z \in \CC : \pm \im z > 0  \}$,
open half-lines $\RR_\pm = \{ x \in \RR: \pm x >0 \}$,
open discs $\DD_\vep (\zeta) := \{z \in \CC : | z - \zeta | < \vep \}$,
compact $\CC$-intervals $[z_1,z_2] := \{ s z_1  + (1-s) z_2 \ : \ s \in [0,1] \}$ with $z_1, z_2 \in \CC$,
$\CC$-intervals $(z_1,z_2) := [z_1,z_2] \setminus \{ z_1, z_2\}$ without endpoints.
The above $\CC$-intervals are called degenerate if $z_1 =z_2$.
In a metric space $U$ with the distance function $\rho_U (\cdot,\cdot)$ (or in a normed space),
we use open balls 
$
\BB_\vep (u_0) := \{u \in U \, : \, \rho_U (u, u_0) < \vep \}
$ always assuming that $\vep>0$.
For  a normed space $U$, $u_0 \in U$, $S \subset U$, and $z \in \CC$, we write 
$
z S  + u_0 := \{ zu + u_0 \, : \, u \in S \}
$. 
For a function $g$ defined on $S$, $g[S]$ is the image of $S$.
The function $\Ln (\cdot)$ ($\Arg_0 (\cdot)$) is the branch of the natural logarithm multi-function $\ln (\cdot)$ 
(resp., the  complex argument $\Arg (\cdot)$) in $\CC \setminus (-\infty,0]$ fixed by  $\Ln 1 = \ii \Arg_0 1 = 0$. 
For $z \in \RR_-$, we put $\Ln z = \Ln |z| + \ii \Arg_0 z = \Ln |z| + \ii \pi $.
By  $\pa_x f $, $\pa_{x_j} f$, etc., we denote (ordinary or partial)
derivatives  with respect to (w.r.t.) $x$, $x_j$, etc.;
$\deg p$ stands for the degree of a polynomial $p$,
$\deg X_j$ for the degree of a vertex $X_j$ of a graph;
$\id = [1] [2] \dots [N]$ for the identity permutation.
Here and below we use the square brackets notation of the textbook \cite{Lang02}
for permutation cycles, omitting sometimes, when it is convenient, 
the degenerate cycles consisting of one element.


\section{Resonances of point interaction Hamiltonians}
\label{s:Def}

Throughout this paper, the set $Y=\{ y_j \}_{j =1}^N$ consist of $N\ge 2$ distinct points 
$y_1$, \dots, $y_N$ in $\RR^3$. Let $a = (a_j)_{j=1}^N \in \CC^N$ be the $N$-tuple  of 
the \emph{`strength' parameters}. 

The operator $H_{a,Y}$ associated with (\ref{e:H}), where $\delta (\cdot - y_j)$ is the Dirac measure 
placed at the \emph{center} $y_j \in \RR^3$ of a \emph{point interaction}, 
is defined in \cite{AGH82,AGHH12} for the case 
of real $a_j$, and in \cite{AGHS83,AK17} for $a_j \in \CC$.
It is a closed operator in the complex Hilbert space $L^2_\CC (\RR^3)$ 
 and it has a nonempty resolvent set.
The spectrum of $H_{a,Y}$ consists of the essential spectrum $[0,+\infty)$ and an 
at most finite set 
of points outside of $[0,+\infty)$ \cite{AGHH12,AK17} (all of those points are eigenvalues).

The resolvent $(H_{a,Y}-z^2)^{-1}$ of $H_{a,Y}$ is defined in the classical sense 
on the set of $ z \in \CC_+ $ such that $z^2$ is not in the spectrum, 
and has the integral kernel 
\begin{gather} \label{e:Res}
(H_{a,Y}-z^2)^{-1} (x,x')  = G_z (x-x') + \sum_{j,j' = 1}^N G_z (x-y_j)
\left[ \Ga_{a,Y} \right]_{j,j'}^{-1} G_z (x' - y_{j'} ) , 
\end{gather}
where $x,x' \in \RR^3 \setminus Y$ and $x \neq x'  $, see e.g. \cite{AGHH12,AK17}.
Here 
$
G_z (x-x') := \frac{e^{\ii z |x-x'|}}{4 \pi |x-x'|}
$ 
is the integral kernel associated with
the resolvent $(-\De - z^2)^{-1}$ of 
the kinetic energy Hamiltonian $-\De$;
$\left[ \Ga_{a,Y} \right]_{j,j'}^{-1}$ denotes the $j,j'$-element of the inverse to 
the matrix 
\begin{gather} \label{e:Ga}
\Ga_{a,Y} (z) = \left[ \left( a_j - \tfrac{\ii z}{4 \pi} \right) \de_{jj'} 
- \wt G_z  (y_j-y_{j'})\right]_{j,j'=1}^{N}, \text{ where }
\wt G_z (x) := \left\{ \begin{array}{rr} G_z (x), & 
x \neq 0 \\
0 , & 
x = 0  \end{array} \right.  .
\end{gather}

The Krein-type  formula (\ref{e:Res}) for 
the difference of the perturbed and unperturbed resolvents of operators $H_{a,Y}$ and $-\De$ can be used as 
a definition of $H_{a,Y}$ (see \cite{AGHH12}).
For other equivalent definitions of $H$ and for the meaning of $\mu (a_j) $ and $a_j$
in (\ref{e:H}), we refer to 
\cite{AGH82,AGHH12,AK00} in the case $a_j \in \RR$, and to 
$\cite{AGHS83,AK17}$ in the case $a_j \not \in \RR$. 
Note that, in the case $a \in  \RR^N$, 
the operator $H_{a,Y} $ is self-adjoint in $L^2_\CC (\RR^3)$; and 
in the case $a \in  (\CC_- \cup \RR)^N $, 
$H_{a,Y}$ is closed and maximal dissipative (in the sense of \cite{E12}, or in the sense that $\ii H_{a,Y}$ is maximal accretive).

The set of (continuation) resonances $\Si (H_{a,Y})$ associated with the operator $H_{a,Y}$ 
(in short, resonances of $H_{a,Y}$) is by definition the set of zeroes of 
the determinant 
$
\det \Ga_{a,Y} (\cdot)  ,
$
which we will call the characteristic determinant.
This is in agreement with the cut-off resolvent pole definition of \cite{DZ17} and slightly differs 
from the one used in \cite{AH84,AGHH12}
because isolated eigenvalues are now 
also included into $\Si (H_{a,Y})$.
For the origin of this and related approaches to the understanding of resonances, we refer
to \cite{AH84,DZ17,RSIV78} and the literature therein. 
The multiplicity of a resonance $k$ will be understood as the multiplicity 
of a corresponding zero of the determinant 
$\det \Ga_{a,Y} $, which is an analytic function in $z$ (see \cite{AGHH12}).
Equipped with the multiplicity, the set $\Si (H_{a,Y})$ becomes a multiset (for the discussions on multiplicities of resonances
see e.g. \cite{CZ95,DZ17,Ka14,Z17}).

The function $\det \Ga_{a,Y} (\cdot) $ is an exponential polynomial, 
which after a simple transformation becomes of a special type considered in \cite{BC63} (for the general theory see \cite{BG12}).
Namely, introducing a new variable $\zeta = - \ii z$ and denoting $A_j := 4 \pi a_j$,
one can see that the modified characteristic determinant
$D (\zeta) := (-4 \pi)^N \det \Ga_{a,Y} (\ii \zeta) $
can be expanded by the Leibniz formula into the sum of terms
\begin{equation} \label{e:Dterms}
e^{\zeta \al (\si)} p^{[\si]} (\zeta)
\end{equation}
taken over all permutations $\si$ in the symmetric group $S_N$,
Here the constants $\al (\si) \le 0$  and the polynomials $p^{[\si]} (\cdot)$ 
have the form 
\begin{align} \label{e:al si}
\al (\si) := - \sum_{j : \si (j) \neq j  } |y_j - y_{\si (j)}| , \quad 
p^{[\si]} (\zeta) := \ep_\si K_1 (\si) \prod_{j : \si (j) = j  } (-\zeta-A_j) , \\
\text{ where $K_1 (\si) := \prod_{j : \si (j) \neq j  } |y_j - y_{\si (j)}|^{-1} >0$
(in the case $\si = \id$, $K_1 (\id) := 1$),}
\end{align}
$\ep_\si$ is the permutation sign (the Levi-Civita symbol), and $\id$ is the identity permutation
(note that $K_1 (\si)$ depends also on $Y$, and $p^{[\si]} (\zeta)$ on $a$ and $Y$).

We will say that an exponential polynomial $g (\cdot)$ is an exp-monomial if it has 
the form $e^{\beta \zeta} p (\zeta)$, where $\beta \in \CC$ and $p$ is a polynomial that is nontrivial in the sense that
$p (\cdot) \not \equiv 0$.

\section{Logarithmic asymptotic chains of resonances}
\label{s:Asy}

\subsection{The distribution diagram and logarithmic strips}
\label{ss:L}

Summing (\ref{e:Dterms}) and writing the exponential polynomial $D$ 
in the canonical form \cite{BG12,BC63}, one obtains 
\begin{equation} \label{e:CanForm}
D(\zeta) = \sum_{j=0}^\nu P_{\beta_j} (\zeta) e^{\beta_j \zeta } ,
\end{equation}
where $\nu \in \NN \cup \{0\}$, 
$\beta_j \le 0$, and the nontrivial polynomials $P_{\beta_j} $  
(with coefficient depending on $a$ and $Y$) are such that 
$ \beta_0 < \beta_1 < \dots < \beta_\nu = 0$.
Clearly, 
$
P_0 (z) = P_{\beta_\nu} = \prod_{j=1}^N  (-\zeta - A_j ) .
$ 
The coefficients $\beta_j$ in (\ref{e:CanForm}) (and $\al (\si)$ in (\ref{e:Dterms}))
are called frequencies of the corresponding exponential polynomials (resp., exp-monomials).
We will use the convention that 
\begin{align} \label{e:Pb}
\text{if $b \not \in \{ \beta_j \}_{j=0}^\nu$, then $P_b (\cdot)$ is trivial in the sense that $P_b (\cdot) \equiv 0$.}
\end{align}

Since we rely on the terminology and the results of the theory of zeros of exponential polynomials 
given through \cite[Sections 12.4-8]{BC63}, we try to keep our notation as close as possible to that of \cite{BC63}.
It is difficult to achieve this aim completely, in particular, our frequencies $\beta_j$ are nonpositive, 
while those of \cite{BC63} are nonnegative.

It is obvious from the definition of $\Size_m$ (see Section \ref{s:intro}) that
\begin{multline} \label{e:al=-Pi}
\text{for each $m \in (\ZZ \cap [0,N])\setminus \{1\} $,} \\
\text{there exists $\si \in S_N$ such that $\al(\si) = (-\Size_m)$ and $\deg p^{[\si]} = N-m$}.
\end{multline}

In the process of summation of exp-monomials of (\ref{e:Dterms}) some of the 
terms  may cancel so that, 
for a certain permutation $\si \in S_N$, $\al (\si)$ is not a frequency of $D $.
If this is the case, we say that there is \emph{frequency cancellation} for the pair 
$\{a,Y\}$
(for two examples of frequency cancellation see \cite{LL17} and Example \ref{ex:Y4}).

Since $N \ge 2$, there exists a nonzero frequency 
$\al (\si)$ that does not cancel for $D$, i.e., $\nu \ge 1 $ and $\al (\si)$ 
is a frequency of $D$ for certain  $\si \neq \id $.
This fact was observed in \cite[the proof of Lemma 2.1]{AK17} and, by a different argument,
can also  be seen from the next lemma.

\begin{lem} \label{l:Pi2Pi3}
(i) The number $(-\Size_2)$  is a frequency of $D (\cdot)$. Besides,  
$\deg P_{-\Size_2} = N-2$.

\noindent (ii) Let $N \ge 3$. Then $(-\Size_3)$ is a frequency of $D (\cdot)$. Moreover, 
\begin{align} \label{e:S3oneL}
\text{$\Size_3 = \Size_2$ if all the points of the set $Y$ lie on one line; otherwise, $\Size_3 > \Size_2$.}
\end{align}
\end{lem}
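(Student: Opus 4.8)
The plan is to work directly with the Leibniz expansion $D(\zeta) = \sum_{\si \in S_N} e^{\zeta\al(\si)} p^{[\si]}(\zeta)$ from (\ref{e:Dterms})--(\ref{e:al si}) and to examine the coefficient of the extreme exponential. First I would prove (i). By (\ref{e:al=-Pi}) with $m=2$, there is at least one permutation $\si$ with $\al(\si) = -\Size_2$, and such a $\si$ is necessarily a transposition $[j_0][j_1]\dots$ swapping the two points $y_{j_0}, y_{j_1}$ realizing $\diam Y$ (recall $\Size_2 = 2\diam Y$); moreover $\al(\si) = -\Size_2$ is the most negative of all frequencies $\al(\si')$, $\si' \in S_N$, since any permutation moving $\ge 2$ points has $\sum_j |y_j - y_{\si'(j)}| \le \Size_m$ for the appropriate $m$ and $\Size_m \le \Size_2$ is false in general — so I must be slightly careful: the correct statement is that $-\Size_2$ is the minimum over \emph{transpositions}, but there can be permutations with three or more moved points whose total displacement exceeds $2\diam Y$. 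Hence $-\Size_2$ need not be the global minimum frequency $\be_0$. The right approach is therefore to isolate, among all $\si$ with $\al(\si) = -\Size_2$, the contributions to $P_{-\Size_2}$ and show they do not cancel and give degree exactly $N-2$.

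The key observation is that the permutations $\si$ achieving $\al(\si) = -\Size_2$ are exactly those that act as a single transposition $(j_0\ j_1)$ for some pair $\{j_0,j_1\}$ with $|y_{j_0}-y_{j_1}| = \diam Y$, and fix all other indices (any $\si$ with $\al(\si) = -\Size_2$ cannot move three or more points, since moving $m \ge 3$ points gives $\al(\si) = -\sum |y_j - y_{\si(j)}|$, and along any cycle the sum of consecutive distances is at least twice the diameter of that cycle's support but the combinatorics only forces this to be $\ge 2\cdot(\text{something})$; the clean statement is that a transposition realizing the diameter gives the \emph{least} value of $\al$ attainable and equal displacement forces each moved pair to be a diameter pair in a 2-cycle — I would verify this by the triangle-inequality argument that any cycle of length $\ge 3$ through distinct points has total edge length strictly exceeding any single chord, hence cannot also equal $2\diam Y$ unless it degenerates). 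Granting this, for each diameter pair $\{j_0,j_1\}$ the transposition $\si_{j_0 j_1}$ contributes $\ep_{\si} K_1(\si) \prod_{j \ne j_0,j_1}(-\zeta - A_j)$ with $\ep_\si = -1$ (a transposition) and $K_1(\si) = |y_{j_0}-y_{j_1}|^{-2} = (\diam Y)^{-2} > 0$. All these contributions share the same sign $-1$ and the same positive constant, so there is no cancellation: their sum is $-(\diam Y)^{-2} \sum_{\{j_0,j_1\}\text{ diam pair}} \prod_{j\ne j_0,j_1}(-\zeta-A_j)$, a nontrivial polynomial of degree exactly $N-2$ (a sum of monic-up-to-sign polynomials of degree $N-2$ with all leading coefficients equal to $(-1)^{N-2}$, so no cancellation of the top term). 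Hence $P_{-\Size_2} \not\equiv 0$ and $\deg P_{-\Size_2} = N-2$, proving (i).

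For (ii), the geometric dichotomy (\ref{e:S3oneL}) is the routine part: if all points of $Y$ are collinear, the maximal-perimeter "triangle" degenerates and its perimeter equals twice the distance between the two extreme points, i.e. $\Size_3 = 2\diam Y = \Size_2$; if not, pick three non-collinear points and the strict triangle inequality gives perimeter $> 2\diam Y$, so $\Size_3 > \Size_2$. The substantive part is showing $-\Size_3$ is genuinely a frequency of $D$, i.e. the coefficient $P_{-\Size_3}$ does not vanish through cancellation. By (\ref{e:al=-Pi}) with $m=3$, the permutations with $\al(\si) = -\Size_3$ include 3-cycles realizing the maximal perimeter, which contribute terms $\ep_\si K_1(\si)\prod_{j\notin\{\text{the }3\}}(-\zeta-A_j)$ of degree $N-3$. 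The difficulty — and I expect this to be the main obstacle — is that when $Y$ is collinear we have $-\Size_3 = -\Size_2$, so the degree-$(N-3)$ contributions from 3-cycles sit at the \emph{same} frequency as the degree-$(N-2)$ contributions from diameter transpositions analyzed in (i); one must check these do not conspire to cancel. But here the degree argument saves us: the transposition contributions already produce a nonzero term of degree $N-2 > N-3$, so $P_{-\Size_2} = P_{-\Size_3}$ has degree $N-2$ and is in particular nontrivial, so $-\Size_3 = -\Size_2$ is a frequency regardless of what the 3-cycle terms do. In the non-collinear case $-\Size_3 < -\Size_2$ is strictly more negative than any transposition frequency, and I would argue non-cancellation among the maximal-perimeter 3-cycles directly: each such cycle $(j_0\,j_1\,j_2)$ has $\ep_\si = +1$ and $K_1(\si) = (|y_{j_0}-y_{j_1}||y_{j_1}-y_{j_2}||y_{j_2}-y_{j_0}|)^{-1} > 0$, and the inverse cycle $(j_0\,j_2\,j_1)$ gives an identical term, so again all contributing terms at frequency $-\Size_3$ have the same sign and positive coefficients and their leading terms (all equal to $(-1)^{N-3}K_1$) add rather than cancel. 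This yields $P_{-\Size_3}\not\equiv 0$, completing (ii). Throughout, the only real care needed is the combinatorial-geometric claim that the permutations realizing $\Size_2$ (resp. $\Size_3$) are precisely transpositions on diameter pairs (resp. 3-cycles on maximal-perimeter triples), together with products of such on disjoint supports — and that such products either do not arise at the relevant frequency or carry a sign pattern that still precludes cancellation; I would handle this by the triangle-inequality bookkeeping sketched above.
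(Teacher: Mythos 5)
The overall strategy is the paper's: isolate the diameter-transposition terms in the Leibniz expansion (\ref{e:Dterms}) and show the degree-$(N-2)$ coefficient survives. But your proof of (i) rests on a classification claim that is false, and which you yourself contradict in part (ii).

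You assert as the ``key observation'' that the permutations with $\al(\si)=-\Size_2$ are \emph{exactly} the diameter transpositions, and you try to argue this by a triangle-inequality statement to the effect that a cycle of length $\ge 3$ has total edge length exceeding $2\diam Y$ ``unless it degenerates.'' That is not true. Take $N\ge 5$ with $y_1,y_2$ realizing the diameter $d=\diam Y$ and three other points $y_3,y_4,y_5$ forming an equilateral triangle of side $2d/3$ (plenty of room for this inside a set of diameter $d$). The $3$-cycle $[3\,4\,5]$ has $\al = -2d = -\Size_2$, is nondegenerate, and its support does not contain the diameter pair. More generally, any $\si$ moving $m\ge 3$ points can land on the frequency $-\Size_2$. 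The auxiliary assertion that the diameter transposition gives ``the least value of $\al$ attainable'' is also wrong: $\be_0 \le -\Size_N \le -\Size_2$, usually strictly. Since your part (i) concludes $P_{-\Size_2}\ne 0$ and $\deg P_{-\Size_2}=N-2$ from the transposition sum alone, ``granting'' that no other $\si$ contributes, the conclusion is left unjustified. (Indeed in part (ii) you explicitly allow $3$-cycles to contribute at the frequency $-\Size_2$ when $Y$ is collinear, contradicting your own part (i).)

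The repair is exactly the observation the paper makes, and which you already invoke in (ii): any $\si\ne\id$ with $\al(\si)=-\Size_2$ that is not a diameter transposition moves $m\ge 3$ points, so $\deg p^{[\si]}=N-m\le N-3<N-2$, hence such terms cannot touch the degree-$(N-2)$ coefficient of $P_{-\Size_2}$; the transposition contributions to that coefficient all carry the same sign $(-1)\cdot(\diam Y)^{-2}\cdot(-1)^{N-2}$ and add. The same degree comparison should also be stated for the non-collinear case in (ii): permutations moving $\ge 4$ points can in principle share the frequency $-\Size_3$, but they have $\deg p^{[\si]}\le N-4$, so only the maximal-perimeter $3$-cycles (all with $\ep_\si=+1$ and $K_1>0$) determine the degree-$(N-3)$ coefficient. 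With that degree bookkeeping inserted in both parts, the argument matches the paper's.
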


\begin{proof}
\emph{(i)} 
Consider the class of all transpositions $\si = [j_1 j_2]$, $j_1 \neq j_2$, such that 
\[
 |y_{j_1} - y_{j_2}| = \diam Y = \Size_2/2 .
 \]
 The corresponding terms of (\ref{e:Dterms})
are  
$(-1) e^{- \Size_2 \zeta} (\diam Y )^{-2} \prod_{j \neq j_1,j_2} (-\zeta - A_j)$. 
The highest order coefficients of their polynomial factors $p^{[\si]} $
coincide and cannot cancel each other. The exp-monomial 
produced by summation of these terms has the polynomial part of degree $N-2$ and 
cannot be canceled by other terms of 
(\ref{e:Dterms}) because the other terms either have a lower $\deg p^{[\si]}$, or a different frequency $\al (\si)$. Thus, $(-\Size_2)$ is a frequency of $D$. 

\emph{(ii)} The statement (\ref{e:S3oneL}) is obvious from the triangle inequality. In the case $\Size_3 = \Size_2$, $(-\Size_2)$ is a frequency of $D$
due to (i). Assume $\Size_3 > \Size_2$. Then it is easy to modify the proof of (i) to show 
that the frequency $(-\Size_3)$ does not cancel.
\end{proof}


Consider the points 
$
T_{\beta_j} = \beta_j + \ii \deg P_{\beta_j} \in \CC , \quad j=0,\dots,\nu, 
$
associated 
with the canonical form (\ref{e:CanForm}) of $D$.
Note that $T_{\beta_\nu} = T_0 = \ii N$. 

\emph{The distribution diagram of $D$} (see \cite{BC63}) is the polygonal line $\L$ in $\CC$ 
determined by the following properties:
\begin{itemize}
\item[(L1)] $\L$ joins $T_{\beta_0}$ with $T_{\beta_\nu}$.
\item[(L2)] $\L$ has vertices only at the points of the set $\{T_{\beta_j}\}_{j=1}^{\nu}$.
\item[(L3)] $\L$ is convex upward. (In particular, it is allowed to be of the form of one $\CC$-interval 
$[z_0,z_1]$. This is the case for $N=2$, but not only, see Section \ref{ss:ex}.)
\item[(L4)] There are no points of the set $\{T_{\beta_j}\}_{j=1}^{\nu}$ above $\L$ in the sense  that 
the $\CC$-intervals $(T_{\beta_j}, \beta_j - \ii )$ (with excluded endpoints)  do not intersect $\L$.
\end{itemize}

Let $\L_1$, $\L_2$, \dots, $\L_M$ be the successive segments of $\L$  numbered from
left to right and such that $\L = \cup_{n=1}^M \L_n$ 
(it is assumed that each segment $\L_n$ is a closed nondegenerate $\CC$-interval and 
the segments are maximal in the sense 
that two consecutive segments do not belong to the same line).
Note that $\L$ has no vertical segments (lying on the lines $\re z = c $) due to (L4) 
(see also \cite[Section 12.8]{BC63}).

Let 
$\{T_{n,h}\}_{h=1}^{\nu_n} := \L_n \cap \{ T_{\beta_j} \}_{j=0}^\nu$,
where $\L_n = [T_{n,1},T_{n,\nu_n}]$ and the points $T_{n,h}$ on the segment $\L_n$ 
are numbered from left 
to right, i.e., $\re T_{n,1} < \dots < \re T_{n,\nu_n} $,
\begin{gather} \label{e:T11MnuM}
\text{$T_{1,1} = T_{\beta_0}$, \quad $T_{M,\nu_{_M}} = T_0 = \ii N$, \quad 
and, if $2 \le n \le M$, 
$T_{n-1,\nu_{n-1}} = T_{n,1}$.}
\end{gather}

Put  
$\beta_{n,h} := \re T_{n,h}$ and $m_{n,h} := \im T_{n,h} = \deg P_{\beta_{n,h}}$.
Then the slopes $\mu_n$ of the segments $\L_n$ are defined by 
\begin{align} \label{e:mun}
\mu_n := \tan ( \arg (T_{n,\nu_n} -  T_{n,1})) = 
\frac{m_{n,\nu_n} - m_{n,1} }{\beta_{n,\nu_n} - \beta_{n,1}} ,
\qquad n= 1, \dots, M.
\end{align}
Note that 
\begin{gather} \label{e:mu decreasing}
\text{the sequence $\{\mu_j\}_{j=1}^M$ is strictly decreasing.}
\end{gather}
This follows from the convexity (L3) and the definition of 
the segments $\L_n$.

The P\'{o}lya-Dickson theorem on the zeroes of exponential polynomials 
\cite{BC63,BG12} states that there exists constants $w_n \ge 0$, $n=1,\dots,M,$ and 
the logarithmic strips 
\begin{equation} \label{Vmuw}
\V (\mu_n, w_n) := \{ \zeta \in \CC \ : \ |  \re (\zeta  + \mu_n \ln \zeta)  | \ \le \ w_n \}  , \quad 
n=1,\dots,M,
\end{equation}
such that $\CC \setminus \bigcup_{j=1}^{M} \V (\mu_n,w_n) $ can be decomposed 
into a finite number of subsets with the property that 
in each of them one of the sums $\sum_{T_j \in \L_n} P_{\beta_j} (\zeta) e^{\beta_j \zeta}$ 
is of predominant order of magnitude 
over the other terms in (\ref{e:CanForm}) as $\zeta \to \infty$ \cite{BC63}.
In particular, this implies the following statement.

\begin{lem}[\cite{BC63}] \label{l:C-V}
There is only a finite number of zeroes of $D(\cdot)$ in $\CC \setminus \bigcup_{j=1}^{n} \V (\mu_n,w_n) $.
\end{lem}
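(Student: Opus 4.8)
The plan is to derive Lemma~\ref{l:C-V} directly from the P\'olya--Dickson structure recalled just above its statement, which already supplies the logarithmic strips $\V(\mu_n,w_n)$ and the decomposition of $\CC \setminus \bigcup_{n=1}^M \V(\mu_n,w_n)$ into finitely many regions $\Omega_1,\dots,\Omega_p$ on each of which one of the partial sums $S_n(\zeta) := \sum_{T_{\beta_j}\in\L_n} P_{\beta_j}(\zeta) e^{\beta_j\zeta}$ dominates all the remaining terms of the canonical form (\ref{e:CanForm}) as $\zeta\to\infty$. So the first step is just to fix such a region $\Omega_\ell$ and make the domination quantitative: there is $R_\ell>0$ and a constant $c_\ell>0$ such that for $\zeta\in\Omega_\ell$ with $|\zeta|\ge R_\ell$ one has $|D(\zeta) - S_{n(\ell)}(\zeta)| \le \tfrac12 |S_{n(\ell)}(\zeta)|$, hence $|D(\zeta)| \ge \tfrac12 |S_{n(\ell)}(\zeta)|$. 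This is the content of the P\'olya--Dickson statement as phrased in the excerpt ("predominant order of magnitude over the other terms"), so I would simply invoke it.

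The second step is to show that the dominant partial sum $S_{n(\ell)}$ itself has no large zeros on $\Omega_\ell$. Write $S_{n(\ell)}(\zeta) = e^{\beta_{n,1}\zeta}\big(P_{\beta_{n,1}}(\zeta) + \sum_{h\ge 2} P_{\beta_{n,h}}(\zeta) e^{(\beta_{n,h}-\beta_{n,1})\zeta}\big)$; inside the region $\Omega_\ell$, by the same P\'olya--Dickson construction, the order relation among the exp-monomials of $S_{n(\ell)}$ is itself fixed, so a single exp-monomial $P_{\beta_{n,h_0}}(\zeta) e^{\beta_{n,h_0}\zeta}$ dominates all the others there. An exp-monomial $e^{\beta\zeta}p(\zeta)$ with $p\not\equiv 0$ has only finitely many zeros (those of $p$), so again $|S_{n(\ell)}(\zeta)|$ is bounded below by $\tfrac12$ times the modulus of this single exp-monomial for $|\zeta|$ large in $\Omega_\ell$, and in particular $S_{n(\ell)}(\zeta)\neq 0$ there; combining with Step~1, $D(\zeta)\neq 0$ for $\zeta\in\Omega_\ell$, $|\zeta|\ge R_\ell'$ for a suitable $R_\ell'$.

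The third step is the bookkeeping: $\CC\setminus\bigcup_{n=1}^M\V(\mu_n,w_n) = \bigcup_{\ell=1}^p \Omega_\ell$ with $p$ finite, so $D$ has no zeros in that set with $|\zeta|\ge R^* := \max_\ell R_\ell'$; the disc $|\zeta|\le R^*$ is compact and $D$ is a non-identically-zero entire function (it has the nontrivial top term $P_0(\zeta)=\prod_j(-\zeta-A_j)$), hence $D$ has only finitely many zeros there. Therefore $D$ has only finitely many zeros in $\CC\setminus\bigcup_{n=1}^M\V(\mu_n,w_n)$, which is the claim (modulo the harmless typo $\bigcup_{j=1}^n$ vs.\ $\bigcup_{n=1}^M$ in the displayed statement). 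I expect the only real subtlety to be making sure the "predominant order of magnitude" clause from \cite{BC63} is applied on the nose — i.e.\ that it genuinely yields a two-sided comparison valid uniformly on each $\Omega_\ell$ rather than merely along rays — but since Lemma~\ref{l:C-V} is explicitly attributed to \cite{BC63}, one is entitled to quote exactly that consequence of the P\'olya--Dickson analysis, and the remaining steps are routine.
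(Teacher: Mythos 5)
Your argument is correct and is essentially the argument that the paper delegates to \cite{BC63}: the paper gives no proof at all, only the citation, and your three-step expansion (dominance of the partial sum $S_{n(\ell)}$ on $\Omega_\ell$, non-vanishing of $S_{n(\ell)}$ for large $\zeta$ in $\Omega_\ell$ because $\Omega_\ell$ avoids $\V(\mu_{n(\ell)},w_{n(\ell)})$ so a single exp-monomial of $S_{n(\ell)}$ dominates there, then finiteness by compactness) is precisely the Bellman--Cooke argument being invoked. The only thing I would make explicit is that Step~2 rests on $\Omega_\ell \subset \CC\setminus\V(\mu_{n(\ell)},w_{n(\ell)})$ and the fact that the points $T_{\beta_{n,h}}$ lie on a line of slope $\mu_{n(\ell)}$, so that the ratio of two exp-monomials of $S_{n(\ell)}$ has modulus $\exp\!\bigl((\beta_{n,h}-\beta_{n,h'})\re(\zeta+\mu_{n(\ell)}\ln\zeta)+O(1)\bigr)$ and is forced far from $1$ outside that strip — this is what ``the order relation is fixed'' means concretely, rather than being a separate clause of the quoted P\'olya--Dickson statement.
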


\subsection{Bunches of asymptotics chains of resonances}
\label{ss:ra}

The main point of this subsection is that the modified characteristic determinant $D$ belongs to 
the more special subclass of exponential polynomials that generate only positive parameters $\mu_n $,
i.e., in the terminology of \cite{BC63}, all the logarithmic strips $ \V (\mu_n, w_n) $ for $D$ are \emph{retarded}.
This allows us to apply the results of \cite{BC63} about this subclass and in this way to 
split $\Si (H_{a,Y})$ into a finite number of sequences with a prescribed form of 
asymptotics at $\infty$.

The fact that $\{ \mu_n \}_{n=1}^M \subset \RR_+$ follows from (\ref{e:mu decreasing}) and 
the following statement.

\begin{prop} \label{p:muM} 
$\mu_M =  1/\diam Y = 2/\Size_2$ and  $1 \le M \le N-1$.
\end{prop}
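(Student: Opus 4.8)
The plan is to pin down the line on which the rightmost segment $\L_M$ of the distribution diagram sits, and then read off both $\mu_M$ and the estimate on $M$ from the geometry of $\L$. Let $\ell$ be the line through $T_0 = \ii N$ with slope $2/\Size_2$, i.e. $\ell = \{ \zeta \in \CC : \im \zeta = N + \tfrac{2}{\Size_2}\re \zeta \}$. The first step is to check that every vertex $T_{\beta_j}$ of the diagram lies on or below $\ell$. Since $\deg P_{\beta_j} \le \max \{ \deg p^{[\si]} : \al(\si) = \beta_j \}$ (the degree of a sum of exp-monomials with a common frequency is at most the maximal degree of a summand, and the $\beta_j$ are exactly the non-cancelled frequencies), it suffices to show that each point $(\al(\si), \deg p^{[\si]})$ lies on or below $\ell$. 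If $\si$ moves exactly $m$ indices, then $\deg p^{[\si]} = N-m$ by (\ref{e:al si}), and by the definition of $\diam Y$
\[
-\al(\si) \;=\; \sum_{j : \si(j) \neq j} |y_j - y_{\si(j)}| \;\le\; m \, \diam Y \;=\; m \, \Size_2 / 2 ,
\]
so that $\deg p^{[\si]} = N-m \le N + \tfrac{2}{\Size_2}\al(\si)$, which is precisely the required inequality.

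Next I would exhibit two vertices of the diagram lying on $\ell$ itself: $T_0 = (0,N)$ trivially, and, by Lemma \ref{l:Pi2Pi3}(i), the point $T_{-\Size_2} = (-\Size_2, N-2)$, which belongs to $\{ T_{\beta_j} \}_{j=0}^{\nu}$ and lies on $\ell$ since $N-2 = N + \tfrac{2}{\Size_2}(-\Size_2)$. Because $T_0$ is the rightmost vertex of the diagram ($\beta_\nu = 0$ being the largest frequency, attained only once), the polygonal line $\L$, restricted to the strip $-\Size_2 \le \re \zeta \le 0$, is a concave graph lying on or below $\ell$ and meeting $\ell$ at both ends of the strip; hence it coincides with $\ell$ throughout that strip. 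Since a maximal segment of $\L$ lies on a single line, the last segment $\L_M$ lies on $\ell$, and therefore $\mu_M$ equals the slope of $\ell$, namely $2/\Size_2 = 1/\diam Y$ (using $\Size_2 = 2 \diam Y$).

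For the bounds on $M$: the lower bound $M \ge 1$ is immediate, since $N \ge 2$ forces $\beta_0 \le -\Size_2 < 0 = \beta_\nu$, so $\L$ is a nondegenerate polygonal line. For $M \le N-1$ I would use that the vertices $V_0 = T_{\beta_0}, V_1, \dots, V_M = T_0$ of $\L$ have strictly increasing nonnegative integer imaginary parts (positive slopes, no vertical segments, integer degrees), hence there are exactly $M+1$ of them and $\im V_0 \ge 0$, $\im V_M = N$. The previous step shows the slope of $\L$ is already constant (equal to $\mu_M$) on $-\Size_2 \le \re \zeta \le 0$, so the penultimate vertex $V_{M-1}$ — where the slope last changes — satisfies $\re V_{M-1} \le -\Size_2$; being on $\ell$ (as $\L_M \subseteq \ell$), it then satisfies $\im V_{M-1} = N + \tfrac{2}{\Size_2}\re V_{M-1} \le N-2$. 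Thus $\im V_0 < \dots < \im V_{M-1}$ are $M$ distinct integers in $[0, N-2]$, which contains only $N-1$ integers, whence $M \le N-1$.

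The combinatorial inequality $\sum_{j : \si(j) \neq j} |y_j - y_{\si(j)}| \le m \, \diam Y$ and the concavity argument for $\L$ are routine; the only real obstacle is the sharpening from $M \le N$ (which vertex-counting gives for free) to $M \le N-1$, which is exactly what forces the observation that the penultimate vertex already has imaginary part at most $N-2$.
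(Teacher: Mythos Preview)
Your proof is correct and uses the same raw ingredients as the paper (Lemma~\ref{l:Pi2Pi3}(i) and the inequality $-\al(\si) \le m\,\diam Y$ for a permutation moving $m$ points), but the packaging differs in an instructive way. The paper proves $\mu_M = 2/\Size_2$ as two separate inequalities: $\mu_M \le 2/\Size_2$ follows from convexity of $\L$ together with the location of $T_{-\Size_2}$, and $\mu_M \ge 1/\diam Y$ is argued by contradiction directly from the bound on $\al(\si)$. You instead set up the supporting line $\ell$ once, show every $T_{\beta_j}$ lies on or below it, and then read off both the slope of $\L_M$ and the position of the penultimate vertex from the fact that $\L$ coincides with $\ell$ on $[-\Size_2,0]$. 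For the bound $M \le N-1$, the paper invokes the combinatorial fact that no permutation moves exactly one point, so no $P_{\beta_j}$ has degree $N-1$ and the $M+1$ vertex heights lie in $(\ZZ \cap [0,N]) \setminus \{N-1\}$; your argument sidesteps that observation entirely, getting $\im V_{M-1} \le N-2$ geometrically from $V_{M-1} \in \ell$ and $\re V_{M-1} \le -\Size_2$. Your route is a touch more unified (one line $\ell$ does all the work), while the paper's route makes explicit the pleasant structural fact that degree $N-1$ never occurs.
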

\begin{proof}
Let us show that $\mu_M =   2/\Size_2$.
It follows from Lemma \ref{l:Pi2Pi3} that $(-\Size_2)$ is a frequency of $D$. Further,
the arguments of the proof of Lemma \ref{l:Pi2Pi3} show that 
\begin{gather} \label{e:PJPi2}
\deg P_{-\Size_2} = N-2, \quad T_{-\Size_2} = -\Size_2 + \ii (N-2) .
\end{gather}
The convexity (L3), (\ref{e:T11MnuM}), and the definition of $\mu_n$ imply that 
$
\mu_M \le \frac{\im ( T_{M,\nu_{_M}} - T_{-\Size_2})}{\re (T_{M,\nu_{_M}} - T_{-\Size_2})} = 
\frac{2}{\Size_2} .
$

Let us prove that $\mu_M \ge 1/\diam Y = 2/\Size_2$ by \emph{reductio ad absurdum}.
Assume $\mu_M < 1/\diam Y$. Then there exist $j < \nu$ such that 
$
\frac{N - \deg P_{\beta_j}}{-\beta_j} < \frac{1}{\diam Y} .
$
So there exists $\si \in S_N$ such that $\al (\si) = \beta_j$ and 
$(N- \deg p^{[\si]}) \diam Y< -\al (\si)$. However,
it follows from (\ref{e:al si}) that 
$(-1) \al (\si) 
\le \sum_{j : \si (j) \neq j  } \diam Y =
(N- \deg p^{[\si]}) \diam Y $,
a contradiction.

By Lemma \ref{l:Pi2Pi3}, $M \ge 1$. Let us show that $M\le N-1$. We see that 
\begin{gather} \label{e:m_nh inZ}
m_{n,h} = \im T_{n,h} = \deg P_{\beta_{n,h}} \in (\ZZ \cap [0,N])\setminus \{N-1\}
\end{gather}
and, from the definitions of $T_{n,h}$, (\ref{e:mu decreasing}), and $\mu_n>0$, that the sequence $\im T_{1,1}$, $\im T_{2,1}$, \dots, $\im T_{M,1}$, $\im T_{M,\nu_M}$ is strictly increasing.
This implies that $\L$ has at most $N-1$ segments. 
\end{proof}

The following theorem is the main result of the section.

\begin{thm} \label{t:ra}
There exist numbers $M \in \NN $, $M_0 \in \NN \cup \{0\}$, finite sequences 
$\{ \mu_n \}_{n=1}^M \subset \RR_+$, $\{r_n\}_{n=1}^M \subset \NN$, 
$\{ \om_{n,j} \}_{j=1}^{r_n} \subset \CC \setminus \{0\}$ and $\{ t_{n,j} \}_{j=1}^{r_n} \subset \NN$ for $n=1,\dots, M$,
$\K_0 = \{k^{[0]}_t \}_{t=1}^{M_0} \subset \CC$, and infinite sequences 
$\K_{n,j}^\pm = \{k_{n,j,t}^\pm\}_{t=t_{n,j}}^{+\infty} \subset \CC$ such that:
\item[(i)]  $\Si (H_{a,Y}) = \bigcup_{n=0}^M \K_n $ (taking into account multiplicities), 
where 
\[ 
\textstyle \text{$\K_n := \left(\bigcup_{j=1}^{r_n} \K_{n,j}^- \right) \ \cup \ \left(
\bigcup_{j=1}^{r_n} \K_{n,j}^+ \right) $, $n=1$, \dots, $M$.}
\]
\item[(ii)] Each of $\K_{n,j}^\pm $ has the following asymptotics as $t \in \NN$ goes to $+\infty$:
\begin{gather} \label{e:k njt}
\frac{k_{n,j,t}^\pm}{\mu_n} = 
\pm 2 \pi t  - \ii \Ln (t)  \mp \pi/2  - \ii \Ln (2 \pi \mu_n) + \ii \Ln (\om_{n,j}) + 
  o (1)  .
\end{gather}

\item[(iii)] The sequence $\{ \mu_n \}_{n=1}^M $ is strictly decreasing
and $\mu_M = \tfrac1{\diam Y}$.
\item[(iv)] $1 \le M \le N-1$ and $2 \le r_M \le \sum_{n=1}^M r_n \le N $.
\end{thm}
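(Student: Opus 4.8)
The plan is to deduce Theorem \ref{t:ra} from the fine form of the P\'olya--Dickson theorem for exponential polynomials all of whose logarithmic strips are \emph{retarded}, as developed in \cite[Sections 12.4--8]{BC63}, combined with the structural facts about the distribution diagram $\L$ of $D$ already assembled in Section \ref{ss:L}. First I would record that Proposition \ref{p:muM} (together with \eqref{e:mu decreasing}) gives $\{\mu_n\}_{n=1}^M \subset \RR_+$, that $\{\mu_n\}$ is strictly decreasing, that $\mu_M = 1/\diam Y = 2/\Size_2$, and that $1 \le M \le N-1$ --- this already yields parts (iii) and the first half of (iv). Since every slope $\mu_n$ is strictly positive, every logarithmic strip $\V(\mu_n,w_n)$ in \eqref{Vmuw} is retarded in the sense of \cite{BC63}, so the stronger \cite{BC63}-results for the retarded case apply: the zeros of $D$ inside each strip $\V(\mu_n,w_n)$ (after removing the finitely many zeros lying outside all strips, cf.\ Lemma \ref{l:C-V}) split into finitely many asymptotic chains, one chain going to $\infty$ in the upper half of the strip and one in the lower half for each of the $r_n$ ``asymptotic directions'' attached to the segment $\L_n$, with the number $r_n$ determined by the leading behaviour of the polynomial factor $\sum_{T_{\beta_j}\in \L_n} P_{\beta_j}(\zeta)e^{\beta_j\zeta}$ along $\L_n$. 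I would take $r_n$ to be this count and $\om_{n,j}$ (and the starting indices $t_{n,j}$) to be the corresponding leading constants supplied by \cite{BC63}, thereby getting the decomposition $\Si(H_{a,Y}) = \bigcup_{n=0}^M \K_n$ of part (i), with $\K_0$ the finite exceptional multiset.

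The next step is to convert the asymptotics of \cite{BC63}, which are naturally stated in the variable $\zeta$ for the zeros of $D$, into the statement \eqref{e:k njt} for resonances $k$. Recall $D(\zeta) = (-4\pi)^N \det\Ga_{a,Y}(\ii\zeta)$ and $\zeta = -\ii z$, so $z = \ii\zeta$; a zero $\zeta$ of $D$ with $|\re(\zeta + \mu_n\ln\zeta)| \le w_n$ satisfies, along a chain, an asymptotic relation of the form $\zeta + \mu_n \Ln\zeta = \Ln\om_{n,j} + 2\pi\ii\, \mu_n\, t + o(1)$ for integer $t$ (the $2\pi\ii\mu_n t$ term coming from the ambiguity of $\ln$ on the two sides of the retarded strip, and the $\pm$ distinguishing the two ends). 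Solving this asymptotically --- writing $\zeta = \mp 2\pi\mu_n t\, \ii + (\text{lower order})$, so that $\Ln\zeta = \Ln(2\pi\mu_n t) \mp \ii\pi/2 + o(1)$ --- and then multiplying by $\ii/\mu_n$ to pass to $k = \ii\zeta$ yields precisely \eqref{e:k njt}; this is the routine bootstrap step that I would carry out but not belabor. It also shows $\mu_n^{-1}\re k_{n,j,t}^\pm \sim \pm 2\pi t \to \pm\infty$ and $\im k_{n,j,t}^\pm \sim -\mu_n \Ln t \to -\infty$, consistent with ``resonances in the lower half-plane'' and with the logarithmic strips of the introduction.

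It remains to prove the lower bound $r_M \ge 2$ and the upper bound $\sum_{n=1}^M r_n \le N$ in part (iv). For the upper bound I would argue that the total number of asymptotic chains going to $+\infty$ in the ``$+$'' direction equals, up to the P\'olya--Dickson bookkeeping, the degree of the top polynomial $P_0(\zeta) = \prod_{j=1}^N(-\zeta - A_j)$, which is $N$: indeed $\sum_{n=1}^M r_n$ counts the zeros of the successive ``dominant'' factors and telescopes against the imaginary parts $\im T_{n,h} = \deg P_{\beta_{n,h}}$, which increase from $0$ (at $T_{\beta_0}$) to $N$ (at $T_0 = \ii N$), each unit of increase corresponding to one chain; so $\sum r_n \le N$, with the deficiency accounted for by multiplicities of the $A_j$ and by \eqref{e:m_nh inZ}. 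For $r_M \ge 2$: the last segment $\L_M$ ends at $T_0 = \ii N$ with $\deg P_0 = N$ and, by \eqref{e:PJPi2}, passes through or lies below $T_{-\Size_2} = -\Size_2 + \ii(N-2)$ with $\deg P_{-\Size_2} = N-2$; hence along $\L_M$ the imaginary part (i.e.\ the polynomial degree of the dominant factor) climbs by at least $2$, forcing at least two asymptotic chains in each direction, so $r_M \ge 2$.

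\textbf{Main obstacle.} The delicate part is not the diagram combinatorics but the precise extraction of the constants $\om_{n,j}$ and the verification that \cite{BC63}'s asymptotic expansions, stated there for exponential polynomials with a sign convention opposite to ours (our frequencies $\beta_j \le 0$), transcribe faithfully after the substitution $\zeta = -\ii z$ into exactly the normalized shape \eqref{e:k njt} --- in particular that the $\mp\pi/2$ and $-\ii\Ln(2\pi\mu_n)$ terms come out with the stated signs and that no chains are double-counted or missed at the shared endpoints $T_{n-1,\nu_{n-1}} = T_{n,1}$ of consecutive segments. I expect to handle this by treating each retarded strip $\V(\mu_n,w_n)$ separately, applying the single-strip asymptotic lemma of \cite{BC63} inside it, and only afterwards reassembling; the endpoint-sharing issue is resolved by noting that an endpoint $T_{n,1}$ contributes its degree increment to exactly one of the two adjacent segments in the counting, consistent with \eqref{e:T11MnuM}.
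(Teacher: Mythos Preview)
Your proposal is correct and follows essentially the same route as the paper: reduce to the retarded-strip case via Proposition~\ref{p:muM} and \eqref{e:mu decreasing}, invoke \cite[Theorems 12.6--12.10]{BC63} strip by strip, collect the finite leftover via Lemma~\ref{l:C-V}, convert $\zeta \mapsto k=\ii\zeta$, and read off the bounds in (iv) from the vertical increments of $\L$ together with Lemma~\ref{l:Pi2Pi3}(i). The only point the paper makes more explicit than you do is the definition of $r_n$ and $\om_{n,j}$: it sets $r_n := m_{n,\nu_n}-m_{n,1}$ and takes $\{\om_{n,j}\}_{j=1}^{r_n}$ to be the zeros of $q_n(\om):=\sum_{h=1}^{\nu_n} c_{n,h}\,\om^{m_{n,h}-m_{n,1}}$ (with $c_{n,h}$ the leading coefficient of $P_{\beta_{n,h}}$), which immediately gives the telescoping identity $\sum_n r_n = N - m_{1,1} \le N$ and makes the $\om_{n,j}\neq 0$ claim transparent.
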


The proof given below also explains how 
the parameters of the asymptotics (\ref{e:k njt}) can be found from the 
distribution diagram $\L$. 

\begin{proof} 
We use the notation of Section \ref{ss:L} and the parameters defined therein.
In particular, $M \ge 1$ is the number of segments of $\L$, 
$\{ \mu_n \}_{n=1}^M $ is the sequence defined by (\ref{e:mun}).
Let $c_{n,h}$ be the coefficient corresponding to the monomial of the highest degree for the polynomial 
$P_{\beta_{n,h}}$, i.e., 
$c_{n,h} = \lim_{\zeta \to \infty} P_{\beta_{n,h}} (\zeta)/\zeta^{m_{n,h}} \neq 0$,
$h=1, \dots, \nu_n$.
Consider the polynomial 
$q_n (\om) := \sum_{h=1}^{\nu_n} c_{n,h} \om^{m_{n,h}-m_{n,1}}$
of the degree
\[
r_n:=m_{n,\nu_n}-m_{n,1}
\]
 and the finite sequence of its zeroes $\{ \om_{n,j} \}_{j=1}^{r_n}$ 
(repeated according to their multiplicities). Note that $c_{n,1} \neq 0$  imply 
$\om_{n,j} \neq 0 $.

The statement (iii) of the theorem is already proved. Combining  (iii) with 
 \cite[Theorem 12.8 and 12.10 (d)]{BC63} (see also \cite[(12.8.11-12)]{BC63}, \cite[Theorems 12.6-8]{BC63},
and \cite[Lemma 12.4]{BC63} for details), we see that for each $n=1$, \dots, $M$, there exists 
$R_n \ge 0$ and sequences $\{ \zeta_{n,j,t}^\pm \}_{t=t_{n,j}^\pm}^{+\infty} \subset \CC_\pm$,
$j=1, \dots,  r_n$ with the starting numbers 
$t_{n,j}^\pm \subset \NN$ satisfying the following properties:
\begin{itemize}
\item[(a)] The sequence $\Z_n^\pm$ of all the zeroes of $D$ in 
$(\V (\mu_n,w_n) \cap \CC_\pm) \setminus \DD_{R_n} (0)$ 
 is the union of $r_n$ sequences 
$\{ \zeta_{n,j,t}^\pm \}_{t=t_{n,j}^\pm}^{+\infty}$ (taking into account multiplicities).
\item[(b)] Each of the sequences $\{ \zeta_{n,j,t}^\pm \}_{t=t_{n,j}^\pm}^{+\infty}$ has the asymptotics
(cf. \cite[Theorem 12.8]{BC63})
\begin{gather*} \label{e:Zeta n}
\frac{\zeta_{n,j,t}^\pm}{\mu_n} = \Ln \om_{n,j} - 
\Ln | \pm 2 \pi \mu_n t + \mu_n \Arg_0 (\om_{n,j})  \mp \mu_n \pi/2 | + 
\ii (\pm 2 \pi t \mp \pi/2) + o (1) .
\end{gather*}
\end{itemize}
 
Due to Lemma \ref{l:C-V},
one sees that only a finite sequence $\Z_0$ of zeroes of $D$ does not get into the multiset
$\bigcup_{j=1}^{n} \Z_n^\pm$. Passing from $\zeta$ to $z= \ii \zeta$ and from zeroes of $D(\zeta)$ to 
that of $\det \Ga_{a,Y} (z)$, we obtain statements (i)-(ii) of Theorem \ref{t:ra}.
Note that $\ii \zeta_{n,j,t}^\pm = k_{n,j,t}^\mp$.

The part $1 \le M \le N-1$ of statement (iv) is proved by Proposition \ref{p:muM}. To prove 
$\sum_{n=1}^M r_n \le N$, note that 
(\ref{e:m_nh inZ}) and  (\ref{e:T11MnuM}) imply
$\sum_{n=1}^M r_n = N - m_{1,1} \le N $. The fact that $r_M \ge 2$ follows from Lemma \ref{l:Pi2Pi3} (i)
and Proposition \ref{p:muM}.
This completes the proof of Theorem \ref{t:ra}.
\end{proof}

\section{Geometry of $Y$ and parameters of asymptotics}

\label{s:Geom}

The parameters $\mu_n$ play the role of leading parameters  of the asymptotic sequences 
$\{k_{n,j,t}^\pm\}_{t=t_{n,j}}^{+\infty}$ of (\ref{e:k njt}). Their role in the distribution of resonances 
is emphasized by the next statement, which is immediate corollary of Theorem \ref{t:ra} 
(see also \cite[formula (12.18.16)]{BC63} and the correcting remark to it in the Russian edition of \cite{BC63}).

\begin{cor} \label{c:dHaY}
 For the operator $H_{a,Y}$, the logarithmic asymptotic density function $\d^{\log} (\cdot)$ defined by (\ref{e:AdLog})
is a nondecreasing piecewise constant function with a finite number of jumps. These jumps are exactly  at the points of the set 
$\{ \mu_n \}_{n=1}^M \subset \RR_+$,
and the height of each jump is equal to 
\begin{align} \label{e:jumps}
\d^{\log} (\mu_n+0) - \d^{\log} (\mu_n-0) = \frac{\beta_{n,\nu_n} - \beta_{n,1}}{\pi} = 
\frac{r_n}{\pi \mu_n }.
\end{align}
\end{cor}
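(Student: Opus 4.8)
The plan is to derive Corollary \ref{c:dHaY} directly from the decomposition of $\Si(H_{a,Y})$ provided by Theorem \ref{t:ra}, by computing the asymptotic density contributed by each asymptotic chain $\K_{n,j}^\pm$ and summing. First I would fix $n$ and $j$ and estimate $|k_{n,j,t}^\pm|$ from the asymptotics (\ref{e:k njt}): since $k_{n,j,t}^\pm/\mu_n = \pm 2\pi t - \ii \Ln t + O(1)$, we get $|k_{n,j,t}^\pm| = 2\pi \mu_n t + O(\ln t)$ as $t \to +\infty$. Consequently, for large $R$, the number of indices $t$ with $|k_{n,j,t}^\pm| \le R$ is $\frac{R}{2\pi\mu_n} + O(\ln R)$. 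I must also check that, for $\mu > \mu_n$ (resp. $\mu < \mu_n$), all but finitely many (resp. none but finitely many) of the points $k_{n,j,t}^\pm$ satisfy $-\mu \ln(|\re k|+1) \le \im k$; this is where the precise shape of the logarithmic strip matters. From (\ref{e:k njt}), $\re k_{n,j,t}^\pm = \pm 2\pi\mu_n t + O(1)$ and $\im k_{n,j,t}^\pm = -\mu_n \Ln t + O(1) = -\mu_n \ln\bigl(\tfrac{|\re k_{n,j,t}^\pm|}{2\pi\mu_n}\bigr) + O(1)$, so $\im k_{n,j,t}^\pm + \mu \ln(|\re k_{n,j,t}^\pm|+1) = (\mu - \mu_n)\ln|\re k_{n,j,t}^\pm| + O(1)$, which tends to $+\infty$ if $\mu > \mu_n$ and to $-\infty$ if $\mu < \mu_n$. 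Hence each chain $\K_{n,j}^\pm$ is eventually inside the region $\{-\mu\ln(|\re k|+1)\le \im k\}$ precisely when $\mu > \mu_n$, and eventually outside it when $\mu < \mu_n$.

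Next I would assemble these contributions. The finite set $\K_0$ and the finitely many initial terms of each chain contribute $O(1)$ to every counting function and hence nothing to the limit defining $\d^{\log}$. For $\mu$ not equal to any $\mu_n$, write $\N^{\log}(\mu,R) = \sum_{n : \mu_n < \mu} \sum_{j=1}^{r_n} \bigl( \#\{t : |k_{n,j,t}^+|\le R\} + \#\{t : |k_{n,j,t}^-|\le R\}\bigr) + O(\ln R)$, where I use the above to discard the chains with $\mu_n > \mu$ up to a bounded error and include those with $\mu_n < \mu$ up to a bounded error. Dividing by $R$ and letting $R \to \infty$, each of the $2 r_n$ chains with $\mu_n < \mu$ contributes $\frac{1}{2\pi\mu_n}$, so $\d^{\log}(\mu) = \sum_{n : \mu_n < \mu} \frac{2 r_n}{2\pi\mu_n} = \sum_{n : \mu_n < \mu} \frac{r_n}{\pi\mu_n}$. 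This exhibits $\d^{\log}$ as piecewise constant with jumps exactly at the points $\mu_n$ and jump height $\frac{r_n}{\pi\mu_n}$ at $\mu_n$; using $r_n = m_{n,\nu_n} - m_{n,1}$ and $\mu_n = \frac{m_{n,\nu_n}-m_{n,1}}{\beta_{n,\nu_n}-\beta_{n,1}}$ from (\ref{e:mun}) rewrites this as $\frac{\beta_{n,\nu_n}-\beta_{n,1}}{\pi}$, giving (\ref{e:jumps}). The claim $\d^{\log}(+\infty) = \N_H(R)/R$ being the total over all $n$ is then just the $\mu \to +\infty$ case, consistent with the Weyl-type results cited.

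The main obstacle I expect is the bookkeeping near the jump points: verifying carefully that the chains $\K_{n,j}^\pm$ with $\mu_n = \mu$ do not contribute at the precise value $\mu = \mu_n$ in a way that would blur the jump, and more importantly that the one-sided limits $\d^{\log}(\mu_n \pm 0)$ come out as stated. For the latter it suffices that, for $\mu$ slightly above $\mu_n$, the chains with leading parameter exactly $\mu_n$ \emph{are} counted (each contributing $\frac1{2\pi\mu_n}$), while for $\mu$ slightly below they are \emph{not} — which follows from the sign computation $(\mu-\mu_n)\ln|\re k|+O(1) \to \pm\infty$ above, uniformly once $\mu$ is bounded away from $\mu_n$ but can be taken arbitrarily close. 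A minor additional point is that the error terms $o(1)$ in (\ref{e:k njt}) must be shown not to spoil the leading asymptotics $|k_{n,j,t}^\pm| \sim 2\pi\mu_n t$; this is immediate since $o(1)$ is absorbed into the $O(1)$ already present. Everything else is routine summation of finitely many arithmetic-progression-type counts, and the strict decrease of $\{\mu_n\}$ from Theorem \ref{t:ra}(iii) guarantees the jumps are at distinct points, so the jump at $\mu_n$ is unambiguously $r_n/(\pi\mu_n)$.
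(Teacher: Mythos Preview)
Your argument is correct and is precisely the detailed unpacking of what the paper leaves implicit: the paper states only that the corollary is an immediate consequence of Theorem \ref{t:ra} (with a pointer to \cite[formula (12.18.16)]{BC63}), and your computation of the contribution $\frac{1}{2\pi\mu_n}$ per chain $\K_{n,j}^\pm$, together with the sign analysis $(\mu-\mu_n)\ln|\re k|+O(1)\to\pm\infty$ to decide which chains are eventually inside the semi-logarithmic region, is exactly how one makes that immediacy explicit. There is no substantive difference in approach.
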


For the following considerations let us recall that the $m$-sizes $\Size_m$ of $Y$ were defined in Section \ref{s:intro} and that the value $\Size_N$ was introduced 
in \cite{LL17} and called the \emph{size of $Y$}. Let $\N_{H_{a,Y}} (\cdot)$ be the resonance counting function defined 
by (\ref{e:N(R)}) with  $H=H_{a,Y}$.

It was shown in \cite[Theorem 4.1]{LL17} (see also comments in \cite{AK18} about the choice of the `strength' tuple $a$) that 
$\N_{H_{a,Y}} (R) = \frac{W(a,Y)}{\pi} R + O(1)$, where $0 \le W(a,Y) \le \Size_N$, and it was said 
that $\N_{H_{a,Y}} (\cdot) $ has the \emph{Weyl-type asymptotics} if $W(a,Y) = \Size_N$ (i.e., roughly speaking, when
$W(a,Y)$ attains the maximal possible value). 
Slightly rephrasing \cite{LL17}, it is natural to say that $W(a,Y)$ is the \emph{effective size of $H_{a,Y}$}.

Taking into account Theorem \ref{t:ra}, we see that
\[
\lim_{\mu \to +\infty} \d^{\log} (\mu) = \d^{\log} (+\infty) = W(a,Y)/\pi \le \Size_N/\pi.
\]
With the use of Theorem \ref{t:ra} (iii)-(iv) one can strengthen \cite[Theorem 4.1]{LL17} by the following inequality
\[
W(a,Y)/\pi = \d^{\log} (+\infty)  \ge \d^{\log} (\mu_M+0) =  
\frac{r_{\scriptscriptstyle M}}{\pi \mu_{\scriptscriptstyle M} } \ge \frac{2 \diam Y}{\pi } .
\]
Thus, $W(a,Y) \ge 2 \diam Y = \Size_2$.  The following statement concerns
configurations of $Y$ whose  
effective sizes $W(a,Y)$ take the minimal possible value $\Size_2$.

\begin{cor} \label{c:w=2d}
If $W(a,Y) = 2 \diam Y $, then all the points of $Y$ lie on one line.
\end{cor}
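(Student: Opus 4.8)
The plan is to argue by contraposition: assuming that the points of $Y$ do \emph{not} all lie on one line, I will show that $W(a,Y) > 2 \diam Y = \Size_2$. By Lemma \ref{l:Pi2Pi3}(ii), the non-collinearity hypothesis gives $\Size_3 > \Size_2$, and that lemma also guarantees that $(-\Size_3)$ is a genuine frequency of $D(\cdot)$, with $\deg P_{-\Size_3} = N - 3$ (this degree claim is exactly what the proof of Lemma \ref{l:Pi2Pi3}(ii), modelled on part (i), delivers). So among the vertices $T_{\beta_j}$ of the distribution diagram we have two relevant points: $T_{-\Size_2} = -\Size_2 + \ii(N-2)$ from (\ref{e:PJPi2}) and $T_{-\Size_3} = -\Size_3 + \ii(N-3)$, with $\Size_3 > \Size_2$.

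The key step is to locate these two points relative to the diagram $\L$ and to read off the effective size from them. Recall from the discussion after \cite[Theorem 4.1]{LL17} and from Theorem \ref{t:ra} that $W(a,Y)/\pi = \d^{\log}(+\infty) = \sum_{n=1}^M r_n / (\pi \mu_n)$, and that $\sum_{n=1}^M r_n = N - m_{1,1}$, where $m_{1,1} = \im T_{1,1} = \deg P_{\beta_0}$ is the degree of the polynomial attached to the leftmost (most negative) frequency $\beta_0$ of $D$. Geometrically, $W(a,Y)$ is just $(-\beta_0)$ when $\L$ is a single segment through $T_0 = \ii N$, and more generally $W(a,Y)$ equals the total horizontal extent of $\L$ only when the leftmost vertex lies on the $\re = \beta_0$ line at height $m_{1,1}$; in all cases $W(a,Y) = -\beta_0 \cdot$(something)$\ge$ the horizontal projection of any sub-segment scaled by its slope. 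Concretely I would use: since $T_{-\Size_2}$ lies on or below $\L$ (property (L4) applies to all vertices of $D$, and $T_{-\Size_2}$ is one), the slope $\mu_M$ of the last segment satisfies $\mu_M = 2/\Size_2$ exactly because $T_{-\Size_2}$ is forced to be the left endpoint of $\L_M$ — this is already contained in the proof of Proposition \ref{p:muM}. Now $T_{-\Size_3}$ is a vertex of $D$ lying to the left of $T_{-\Size_2}$ and strictly below the horizontal line through it (height $N-3 < N-2$); hence it lies strictly below the segment $\L_M$ extended, which forces $\L$ to bend: there must be at least one more segment $\L_{M-1}$ to the left of $\L_M$, i.e. $M \ge 2$, with strictly larger slope $\mu_{M-1} > \mu_M = 2/\Size_2$.

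From $M \ge 2$ and the jump formula (\ref{e:jumps}) of Corollary \ref{c:dHaY}, together with $r_M \ge 2$ from Theorem \ref{t:ra}(iv), I get
\[
\frac{W(a,Y)}{\pi} = \d^{\log}(+\infty) \ge \d^{\log}(\mu_M + 0) + \big(\d^{\log}(\mu_{M-1}+0) - \d^{\log}(\mu_{M-1}-0)\big) = \frac{r_M}{\pi \mu_M} + \frac{r_{M-1}}{\pi \mu_{M-1}}
\]
with $r_M/\mu_M = r_M \cdot \Size_2/2 \ge \Size_2$ and $r_{M-1}/(\pi\mu_{M-1}) > 0$ strictly, so $W(a,Y) > \Size_2 = 2\diam Y$, contradicting the hypothesis. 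Hence $Y$ is collinear. \emph{The main obstacle} I anticipate is the bookkeeping at the vertex $T_{-\Size_3}$: I must be sure it is an actual vertex of the distribution diagram $\L$ and not merely a point strictly below $\L$ that gets "absorbed" — but this is exactly why Lemma \ref{l:Pi2Pi3}(ii) was stated with the non-cancellation of the frequency $(-\Size_3)$, so the point $T_{-\Size_3}$ is genuinely present among $\{T_{\beta_j}\}$; combined with the upward convexity (L3) and the fact that it sits below the line through $T_{-\Size_2}$ and $T_0$, it cannot lie on $\L_M$, forcing the extra segment. A secondary point to handle carefully is the degenerate possibility $N = 3$ (then $\deg P_{-\Size_3} = 0$ and $T_{-\Size_3} = -\Size_3$), but the argument is unchanged since all that is used is $\im T_{-\Size_3} < \im T_{-\Size_2}$ and $\re T_{-\Size_3} < \re T_{-\Size_2}$.
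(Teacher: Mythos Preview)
Your overall strategy of arguing by contraposition and invoking Lemma~\ref{l:Pi2Pi3}(ii) is exactly right, but the execution via the distribution diagram has a genuine gap. The claim that $T_{-\Size_3}$ ``sits below the line through $T_{-\Size_2}$ and $T_0$'' is \emph{not} a consequence of $\im T_{-\Size_3} = N-3 < N-2$; the line through $T_{-\Size_2}$ and $T_0$ has positive slope $2/\Size_2$, so going left the line goes \emph{down}. Concretely, $T_{-\Size_3}$ lies strictly below that line if and only if $\Size_3 < 3\diam Y$, which fails precisely when $Y$ contains an equilateral triangle of side $\diam Y$. In the paper's own Case~1 of Section~\ref{ss:ex} ($N=3$, equilateral triangle) one has $\Size_3 = 3\diam Y$, the point $T_{-\Size_3}$ lies \emph{on} the line through $T_0$ and $T_{-\Size_2}$, the diagram is a single segment, and $M=1$. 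Your conclusion $M\ge 2$ therefore fails, and your final displayed inequality, which needs a term $r_{M-1}/(\pi\mu_{M-1})$, is unavailable. (Relatedly, $T_{-\Size_2}$ need not be the \emph{left} endpoint of $\L_M$; in the same example it is an interior point of $\L_1$.)

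The paper's argument bypasses all of this diagram bookkeeping. Once Lemma~\ref{l:Pi2Pi3}(ii) gives that $(-\Size_3)$ is a genuine frequency of $D$, one has immediately $\beta_0 \le -\Size_3 < -\Size_2$. But $W(a,Y)$ equals the horizontal extent $-\beta_0$ of the distribution diagram: this is exactly \cite[Theorem 2.1]{LL17}, or equivalently follows from Corollary~\ref{c:dHaY} by telescoping $\sum_n (\beta_{n,\nu_n}-\beta_{n,1}) = \beta_{M,\nu_M}-\beta_{1,1} = -\beta_0$. Hence $W(a,Y)=-\beta_0 \ge \Size_3 > \Size_2$, and you are done in one line. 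No information about $\deg P_{-\Size_3}$, the position of $T_{-\Size_3}$ relative to $\L$, or the value of $M$ is needed.
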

\begin{proof}
If $Y$ does not belong to one line, then Lemma 3.1 (ii) implies that $\Size_3>\Size_2$ and that 
$\beta_0 $ in (\ref{e:CanForm}) is less or equal than $(-\Size_3)$.
Thus, \cite[Theorem 2.1]{LL17} (or, alternatively, Corollary \ref{c:dHaY} and (\ref{e:mun})) implies $W(a,Y) > \Size_2 $.
\end{proof}

The implication converse to Corollary \ref{c:w=2d} does not hold true, as one can see from Section \ref{ss:ex},
where an example of $H_{a,Y}$ with 
$W(a,Y) = \Size_4 > 2 \diam Y $ can be found for the case $N=4$.

In this section we show that a stronger connection exists between the 
family of the parameters $\mu_n$  and the metric geometry of the family $Y$ of the interaction centers.
This connection is given below by Theorems \ref{t:gen}-\ref{t:MaxStr} 
in terms of the $m$-sizes $\Size_m $.

\subsection{Generic and maximally structured cases}
\label{ss:t-s}

Let $N \ge 2$ be fixed.
To parametrize rigorously the family of Hamiltonians $H_{a,Y}$,
let us consider in this section $Y$ as a vector in the space $(\RR^3)^N$ of ordered $N$-tuples $y=(y_j)_{j=1}^{N}$
with the entries $y_j \in \RR^3$. We consider $(\RR^3)^N$ as a linear normed space 
with the $\ell^2$-norm $| y |_2 = (\sum |y_j|^2)^{1/2}$.
Then the ordered collection $Y$ of  centers is identified with 
an element of the family $\AAA \subset (\RR^3)^N$ defined by 
$
\AAA := \{ y \in (\RR^3)^N \ : \ y_j \neq y_j' \ \text{ for } \ j \neq j' \} .
$
We consider $\AAA$ as a metric space with the distance function induced by the norm $| \cdot |_2$.

We denote by $M (a,Y)$ and $\mu_n (a,Y)$, $n=1,\dots,M (a,Y)$, the parameter 
$M$ and $\mu_n$, resp., that are associated with the operator $H_{a,Y}$ 
in the way described in Section \ref{s:Asy} (recall that $M$ is the number of distinct 
parameters $\mu_n$).

Let us consider the set $\AAA_\gen$ that consists of $Y \subset \AAA$ that satisfy the following two properties:
\begin{itemize}
\item[(A1)] $M (a,Y)$ and $\{\mu_n (a,Y) \}_{n=1}^{M(a,Y)}$ do not depend on the `strength' tuple $a \in \CC^N$, 
\item[(A2)] $\{\mu_n (a,Y) \}_{n=1}^{M(a,Y)}$ is a subset of  
\[
\left\{ \frac{m-j}{\Size_m (Y) - \Size_j (Y)} \ : \ 2\le m \le N, \ 0 \le j \le N -1, \ j \neq 1, 
\text{ and } \Size_j < \Size_m  \right\} .
\]
\end{itemize}

The following theorem states that, generically, all $\mu_n $ have the form  
$\frac{m-j}{\Size_m - \Size_j }$.

\begin{thm} \label{t:gen}
There exists a subset of $\AAA_\gen$ that is open and dense in $\AAA$.
\end{thm}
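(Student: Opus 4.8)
\textbf{Proof plan for Theorem \ref{t:gen}.}
The plan is to exhibit an open dense subset $\OO \subset \AAA$ on which the frequencies and polynomial degrees appearing in the canonical form (\ref{e:CanForm}) of $D$ are as ``generic'' as possible, and then check that every $Y \in \OO$ lies in $\AAA_\gen$. The key observation is that the frequencies $\al(\si) = -\sum_{j:\si(j)\neq j}|y_j - y_{\si(j)}|$ are fixed real-analytic (indeed piecewise-linear in the pairwise distances) functions of $Y$, while the polynomials $p^{[\si]}(\zeta)$ of (\ref{e:al si}) depend on $a$ only through their lower-order coefficients: the top coefficient of $p^{[\si]}$ is $\ep_\si K_1(\si)(-1)^{N-\#\{j:\si(j)\neq j\}}$, which is nonzero and \emph{independent of} $a$. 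So if no frequency cancellation occurs in the top-degree coefficient of $P_b$ for each value $b$, then the distribution diagram $\L$ — hence $M(a,Y)$ and $\{\mu_n(a,Y)\}$ — is determined by $Y$ alone, giving (A1).

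First I would define, for each permutation $\si \in S_N$, the number $d(\si) := \#\{j:\si(j)\neq j\}$ and the ``length'' $\ell(\si) := -\al(\si) = \sum_{j:\si(j)\neq j}|y_j-y_{\si(j)}|$. For a fixed target degree-drop $m = d(\si)$, the top coefficient of the exp-monomial $e^{\zeta\al(\si)}p^{[\si]}(\zeta)$ has degree $N-m$. Two permutations $\si,\si'$ contribute to the same frequency iff $\ell(\si)=\ell(\si')$, and their top coefficients (if $d(\si)=d(\si')=m$) are $\ep_\si K_1(\si)(-1)^{N-m}$ and $\ep_{\si'}K_1(\si')(-1)^{N-m}$; these can cancel in a sum only through an ``accidental'' linear relation among the $a$-independent constants $K_1(\si)$ combined with an equality of lengths $\ell(\si)=\ell(\si')$. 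I would therefore let $\OO$ consist of those $Y\in\AAA$ such that: (i) whenever $\ell(\si)=\ell(\si')$ with $d(\si)\le d(\si')$ and $(\si,\si')$ not both of the same top-degree type forced to coincide, the corresponding top coefficients do not sum to zero; and (ii) the finitely many values $\Size_m(Y)$, $2\le m\le N$, and $0$ are pairwise distinct except for forced coincidences (e.g.\ $Y$ not collinear, so $\Size_3>\Size_2$ by Lemma \ref{l:Pi2Pi3}, etc.). Each defining condition is the complement of the zero set of a nontrivial real-analytic function of $Y$ on $\AAA$ (nontriviality must be checked by producing, for each condition, at least one configuration where it holds — e.g.\ a generic simplex), hence $\OO$ is open and dense as a finite intersection of open dense sets.

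Next I would verify (A2) for $Y\in\OO$. By construction, every frequency $\beta_j$ of $D$ equals $-\ell(\si)$ for some $\si$ with $P_{\beta_j}$ of degree $\deg p^{[\si]} = N - d(\si)$ realizing the top, and by the convexity property (L3)--(L4) of the distribution diagram the vertices $T_{n,h}$ of $\L$ occur at points $-\Size_{m}(Y) + \ii(N-m)$ for suitable $m$, since by (\ref{e:al=-Pi}) the extreme-length permutation with $d(\si)=m$ has $\al(\si)=-\Size_m(Y)$ and this is exactly the candidate that can sit on the upper convex hull (any $\si'$ with $d(\si')=m$ but $\ell(\si')<\Size_m$ lies strictly below). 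Thus each segment $\L_n$ joins $-\Size_{m}(Y)+\ii(N-m)$ to $-\Size_{j}(Y)+\ii(N-j)$ for some $0\le j<m\le N$ with $j\neq 1$ (recall $N-1$ is excluded, cf.\ (\ref{e:m_nh inZ})), so its slope $\mu_n$ equals $\frac{(N-j)-(N-m)}{-\Size_j+\Size_m} = \frac{m-j}{\Size_m(Y)-\Size_j(Y)}$, which is precisely the form in (A2); the strict inequality $\Size_j<\Size_m$ is guaranteed by condition (ii) defining $\OO$ together with $\mu_n>0$ from Proposition \ref{p:muM}.

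\textbf{Main obstacle.} The delicate point is \emph{nontriviality} of the real-analytic functions cutting out $\OO$: I must rule out that some cancellation among the $K_1(\si)$ is forced for \emph{all} $Y$ by a hidden algebraic identity (a ``determinantal'' coincidence of the Leibniz expansion), which would make the putative open dense set empty. The way around this is to analyze the structure of permutations contributing to a common frequency $\beta_j$ — they all have the same multiset of ``moved-pair distances'' — and to show by an explicit perturbation (moving one center a generic small amount) that the lengths $\ell(\si)$ can be made distinct for non-conjugate $\si$ while keeping the forced coincidences; since the $K_1(\si)>0$ are then attached to \emph{distinct} exponentials, no cancellation occurs, exhibiting a point of $\OO$ and proving nontriviality. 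Likewise one checks $\Size_2<\Size_3<\dots$ can be arranged by a generic configuration. Once nontriviality is established for each condition, density and openness of $\OO$, and hence the theorem, follow from the standard fact that the complement of the zero locus of a nontrivial real-analytic function on a connected real-analytic manifold is open and dense.
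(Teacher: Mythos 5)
Your overall approach matches the paper's: you pass to an auxiliary subset of $\AAA$ on which the leading coefficients of the polynomial factors $p^{[\si]}$ do not cancel at the extremal frequencies $-\Size_m$, observe that these leading coefficients are independent of $a$ (which gives (A1)), and use the convexity of the distribution diagram to place all its vertices among the points $\wt T_{-\Size_m}=-\Size_m+\ii(N-m)$ (which gives (A2)). This is precisely the role played in the paper by the set $\Aa_\gen$ defined through condition (A5), and your verification of (A2) is essentially Lemma~\ref{l:AinF}.

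The genuine gap is at the decisive step. The paper establishes that $\Aa_\gen$ contains an open dense subset of $\AAA$ by embedding the family $\Aa_1$ of \cite[Section 3.2]{AK18} into $\Aa_\gen$ and invoking \cite[Lemma 3.4]{AK18}, where the combinatorial/analytic work is actually carried out. Your proposal tries to redo this from scratch through ``nontrivial real-analytic conditions have nowhere-dense zero sets,'' but the required nontriviality --- that no hidden identity forces the relevant sums $\sum\ep_\si K_1(\si)$ to vanish identically in $Y$ --- is left as a sketch under ``Main obstacle'' and is never verified. That verification is the substance of the theorem, and a perturbation argument needs to be exhibited, not merely announced; your sketch also glosses over the enumeration of forced coincidences (e.g.\ $\ell(\si)=\ell(\si^{-1})$ and permutations sharing the same multiset of moved pairs), which is exactly where the difficulty lies. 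Two secondary issues: condition~(i) defining $\OO$ is not stated as a well-posed mathematical condition (``$d(\si)\le d(\si')$'' should be an equality, and ``not both of the same top-degree type forced to coincide'' is undefined); and condition~(ii) (pairwise distinctness of the $\Size_m$) is not required for membership in $\AAA_\gen$ --- the paper's $\Aa_\gen$ does not impose it --- so you restrict further than necessary and make the nontriviality check correspondingly harder.
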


The proof is given in Section \ref{ss:proofs}.

The next theorem describe one of the cases where the connection between the sequences of $\mu_n$ and $\Size_m$ 
is especially simple and, simultaneously,  the asymptotics of resonances is maximally structured in the sense 
that the number $M$ of parameters $\mu_n$  takes its maximal possible value $N-1$.

\begin{thm} \label{t:MaxStr}
Let $a \in \CC^N$ and $Y \in \AAA$ be such that the numbers $\Size_n = \Size_n (Y)$, $n=1,\dots,N$, and 
the polynomials $P_b (\cdot)$, $b \in \RR$, which are 
associated with the representation (\ref{e:CanForm}) for 
$D (\zeta) = (-4 \pi)^N \det \Ga_{a,Y} (\ii \zeta) $ and the convention (\ref{e:Pb}),
have the following properties:
\item[(A3)] For $3 \le n \le N$, $P_{-\Size_n } (\cdot) \not \equiv 0$ and $\deg P_{-\Size_n } = N-n$.
\item[(A4)] The sequence $\{ \Size_m \}_{m=1}^N$ is increasing and strictly convex upward in the sense that
 $\Size_m-\Size_{m-1} > \Size_{m+1} - \Size_m >0 $ for $2 \le m \le N-1$ 
 (if $N=2$, this condition is assumed to be fulfilled automatically; recall also that $\Size_1 := (\Size_2 + \Size_0)/2 = \diam Y$).
 
Then:
\item[(i)] $M (a,Y)=N-1$ and 
$\mu_{N-m} (a,Y) = \dfrac1{\Size_{m+1}  - \Size_{m} } $ for $m =1$, \dots, $N-1$;
\item[(ii)] $r_{N-1} = 2$, $r_{n} =1$ for $1 \le n \le N-2$, and, after possible exclusion of a finite number of resonances,  
the multiset $\Si (H_{a,Y})$ can be decomposed into the $2N$ asymptotic sequences (\ref{e:k njt});
\item[(iii)] there exists $R \ge 0$ such that every resonance $k$ with $|k| \ge R$ is simple (i.e., of multiplicity 1).
\end{thm}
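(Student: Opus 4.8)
The plan is to read off everything from the distribution diagram $\L$ of $D$, using the machinery of Section \ref{s:Asy} together with the hypotheses (A3)--(A4). First I would record the vertices of $\L$. By (A3), for each $n$ with $3 \le n \le N$ the frequency $(-\Size_n)$ is a frequency of $D$ with $\deg P_{-\Size_n} = N-n$, so the point $T_{-\Size_n} = -\Size_n + \ii(N-n)$ is one of the $T_{\beta_j}$; by Lemma \ref{l:Pi2Pi3}(i) the same holds for $n=2$, giving $T_{-\Size_2} = -\Size_2 + \ii(N-2)$; and at the top $T_0 = \ii N$. So the $N$ points $\{T_{-\Size_m}\}_{m=2}^{N} \cup \{T_0\}$ all lie in the support of $D$'s frequencies, with imaginary parts $N-2, N-3, \dots, 0$ and real parts $-\Size_2 > -\Size_3 > \dots > -\Size_N$ (strict by (A4), since $\{\Size_m\}$ is increasing). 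The key computation is that the slopes of the successive chords joining consecutive such points are
\[
\frac{(N-m)-(N-m-1)}{-\Size_m - (-\Size_{m+1})} = \frac{1}{\Size_{m+1}-\Size_m},
\]
and by the strict upward convexity of $\{\Size_m\}$ in (A4) these slopes are strictly decreasing in $m$. I would then argue that $\L$ has exactly these $N$ points as vertices: no frequency of $D$ can give a point strictly above the polygonal line through them. Indeed, any term $e^{\zeta\al(\si)}p^{[\si]}(\zeta)$ in \eqref{e:Dterms} has $\al(\si) = -\sum_{j:\si(j)\neq j}|y_j - y_{\si(j)}|$ and $\deg p^{[\si]} = N - m$ where $m = \#\{j : \si(j)\neq j\}$, so its point $T$ has $\im T = N-m$ and $\re T = \al(\si) \ge -\Size_m$ (by the very definition of $\Size_m$ as the max of such sums, and $\Size_1 := (\Size_2+\Size_0)/2$ handling the $m=1$ level which sits strictly below the chord by (A4)); hence $T$ lies on or below the line through $T_{-\Size_m}$ of the corresponding slope, i.e. on or below $\L$. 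Together with (L1)--(L4) this forces $\L$ to have exactly $N-1$ segments $\L_1,\dots,\L_{N-1}$, one between each pair of consecutive points; so $M(a,Y) = N-1$. Counting from the right (smallest $|\beta|$) and matching with \eqref{e:mun}, the segment $\L_{N-m}$ runs from $T_{-\Size_{m+1}}$ to $T_{-\Size_m}$, so $\mu_{N-m}(a,Y) = 1/(\Size_{m+1}-\Size_m)$; this proves (i).

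For (ii), each segment $\L_n$ contains exactly two of the $T_{\beta_j}$ as endpoints and no interior lattice points of the relevant kind, so in the notation of the proof of Theorem \ref{t:ra}, $\nu_n = 2$ and $r_n = m_{n,\nu_n} - m_{n,1} = 1$ for $n = 1,\dots, N-2$, while for $n = N-1$ (the rightmost segment, ending at $T_0 = \ii N$ and starting at $T_{-\Size_2} = -\Size_2 + \ii(N-2)$) we get $r_{N-1} = N - (N-2) = 2$, consistent with Proposition \ref{p:muM} and Theorem \ref{t:ra}(iv). Summing, $\sum_{n=1}^{N-1} r_n = (N-2)\cdot 1 + 2 = N$, so by Theorem \ref{t:ra}(i)--(ii) the multiset $\Si(H_{a,Y})$ splits, after removing finitely many resonances, into exactly $2\sum r_n = 2N$ asymptotic sequences of the form \eqref{e:k njt}. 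This is statement (ii).

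For (iii): with $\nu_n = 2$ the auxiliary polynomial $q_n(\om) = c_{n,1} + c_{n,2}\,\om^{r_n}$ from the proof of Theorem \ref{t:ra} has degree $r_n \le 2$; for $n \le N-2$ it is linear with a single simple root $\om_{n,1}$, and for $n = N-1$ it is $c_{N-1,1} + c_{N-1,2}\om^2$, whose two roots $\pm\sqrt{-c_{N-1,1}/c_{N-1,2}}$ are distinct (both nonzero since $c_{N-1,1}\neq 0$). Hence all the $\om_{n,j}$ appearing in \eqref{e:k njt} are simple zeros of the corresponding $q_n$, and the refined asymptotics of \cite[Theorems 12.6--12.8]{BC63} give that the zeros of $D$ in each retarded strip $\V(\mu_n,w_n)$ are, for $|\zeta|$ large, simple and mutually separated (the separation being forced by the $\mp\pi/2$ and $\Ln\om_{n,j}$ terms distinguishing different $(n,j,\pm)$, and by Theorem \ref{t:ra}(iii) the strips themselves are disjoint outside a large disc). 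Passing back via $z = \ii\zeta$ yields an $R \ge 0$ such that every resonance $k$ with $|k| \ge R$ is a simple zero of $\det\Ga_{a,Y}$, i.e. simple.

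The main obstacle I anticipate is not (i) or the counting in (ii)--(iii), which are essentially bookkeeping on $\L$, but rather being careful that (A3) alone genuinely pins down every vertex of $\L$ and that no \emph{frequency cancellation} (in the sense defined after \eqref{e:al=-Pi}) among the terms of \eqref{e:Dterms} lowers some $\deg P_{-\Size_m}$ below $N-m$ or deletes the point $T_{-\Size_m}$ altogether — this is exactly what hypothesis (A3) is asserting for $m \ge 3$ and Lemma \ref{l:Pi2Pi3}(i) secures for $m = 2$, so the work is to invoke them in the right order and to check that the $m=1$ level, handled only through the convention $\Size_1 = \diam Y$, indeed stays strictly below the chord joining $T_{-\Size_2}$ to $T_{-\Size_0} = \ii N$ — which is precisely the inequality $\Size_1 < \Size_2$ implied by (A4), guaranteeing that no extra segment of slope $\ge \mu_{N-1}$ is created at the top and that $\L$ really does terminate at $T_0$ along a segment of slope $1/(\Size_2 - \Size_1) = 2/\Size_2 = 1/\diam Y = \mu_M$ as demanded by Theorem \ref{t:ra}(iii).
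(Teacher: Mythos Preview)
Your proposal is correct and follows essentially the same route as the paper's proof: you identify the vertices $\wt T_{-\Size_m}=-\Size_m+\ii(N-m)$ via (A3) and Lemma~\ref{l:Pi2Pi3}(i), use (A4) to force strict convexity so that all of them are genuine corners of $\L$, and then read off $M$, the $\mu_n$, and the $r_n$ directly from the diagram before invoking Theorem~\ref{t:ra} and the simplicity of the roots of $q_n$ for (iii). The paper does the same, only more tersely, by packaging your ``no $T_{\beta_j}$ lies above the polygonal line through the $\wt T_{-\Size_m}$'' step as a reference to the proof of Lemma~\ref{l:AinF} (via conditions (C1)--(C3)), whereas you reprove that claim inline.

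One small point of confusion you can drop: your repeated concern about ``the $m=1$ level'' is vacuous. Since $S_{N,1}=\varnothing$, no exp-monomial in \eqref{e:Dterms} has $\deg p^{[\si]}=N-1$, hence no $T_{\beta_j}$ has imaginary part $N-1$ (cf.\ \eqref{e:m_nh inZ}); there is nothing to place below the chord from $T_{-\Size_2}$ to $T_0$. Relatedly, your assertion that $\nu_n=2$ for every segment is correct but deserves a one-line justification: for $n\le N-2$ the segment $\L_n$ joins two points with consecutive integer heights, so no $T_{\beta_j}$ (whose height is an integer) can lie in its interior; for $n=N-1$ the only intermediate integer height is $N-1$, which is excluded as just noted.
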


Examples in Section \ref{ss:ex} illustrate how large can be the family of sets $Y$ having the property that 
$H_{a,Y}$ satisfies 
(A3)-(A4) for every $a \in \CC^N$.

\subsection{Proofs of Theorems \ref{t:gen} and \ref{t:MaxStr}}
\label{ss:proofs}

Let us introduce the points $\wt T_m := (-\Size_m)+\ii (N-m)$ for $m=0$ and for $2 \le m \le N$.
For $m=0$ and for $2 \le m \le N$, the point $\wt T_m $ is not an endpoint of 
any of the segments $\L_n$, $n=1,\dots,M$ (and so disappears from the computation of the parameters $\mu_n$)
if and only if exactly one of the following conditions hold:
\begin{itemize}
\item[(C1)] $\wt T_m \not \in \{  T_{\beta_j} \}_{j=0}^{\nu}$.
\item[(C2)] $\wt T_m \in \{  T_{\beta_j} \}_{j=0}^{\nu}$, 
$\wt T_m \in \L_n$ for a certain $n$, $1 \le n \le M$, and $\wt T_m$ is not an endpoint of $\L_n$.
\item[(C3)] $\wt T_m \in \{  T_{\beta_j} \}_{j=0}^{\nu}$ and 
$\wt T_m$ lies strictly below a certain segment $\L_n$, $1 \le n \le M$, of the distribution diagram 
(i.e., $(-\Size_m) \in \re [\L_n]$ and $(\wt T_m + \ii \RR_+) \cap \L_n \neq \emptyset$).
\end{itemize}

The main idea of the proofs of Theorems \ref{t:gen} and \ref{t:MaxStr} is that  some type of control 
over the cases (C1)-(C3)  is needed.
While for the proof of Theorem \ref{t:gen} we show that generically (C1) does not hold true,
the assumptions of Theorem \ref{t:MaxStr} exclude the possibility of each of (C1)-(C3).

\emph{Let us prove Theorem \ref{t:gen}.} 
Consider the family $\Aa_{\gen}$ consisting of all $Y \in \AAA$ such that, for a certain $a \in \CC^N$,
\begin{itemize} 
\item[(A5)] each of the points $\wt T_m $ (where $0 \le  m \le N$ and $m \neq 1$) belongs to the set 
$\{ T_{\beta_j} \}_{j=0}^\nu$ associated with $H_{a,Y}$.
\end{itemize}
It is easy to note from the form (\ref{e:Ga}) of $\det \Ga_{a,Y} (\cdot)$ that if $Y \in \Aa_\gen$,
then (A5) is valid for every $a \in \CC^N$. Indeed, (A5) implies  that all $\Size_m$, $2 \le m \le N$, are distinct and that there is no cancellation of
 the leading coefficients of the polynomials $p^{[\si]} (\zeta)$ in (\ref{e:Dterms}) with the degree $N-m$ and $\si$ such that $\al(\si) = - \Size_m$.
On the other hand, these leading coefficients do not depend on $a$.

\begin{prop} \label{p:wtT=T}
There exists a subset of $\Aa_\gen$ that is open and dense in $\AAA$.
\end{prop}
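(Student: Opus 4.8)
The plan is to pin down an explicit open dense subset of $\AAA$ on which no cancellation of the leading coefficient at any frequency $-\Size_m$ can occur, and then to observe that every such $Y$ automatically lies in $\Aa_{\gen}$. The starting remark is that the candidate frequencies of $D(\cdot)$ are the numbers $\al(\si)(Y)=-\sum_{j:\si(j)\neq j}|y_j-y_{\si(j)}|$, $\si\in S_N$, and that $\al(\si)$ depends only on the \emph{undirected} multigraph of $\si$ (an $\ell$-cycle contributing its $\ell$ edges, a transposition contributing its single edge twice). Call $\si,\si'\in S_N$ \emph{reversal-equivalent} if one is obtained from the other by reversing the orientations of some of its cycles; reversal-equivalent permutations then share the undirected multigraph, the set of fixed points, the sign $\ep_\si$ and the number $K_1(\si)$, so that $p^{[\si']}\equiv p^{[\si]}$ and $\al(\si')\equiv\al(\si)$. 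I would first verify the combinatorial reduction: if (A5) fails for some $Y$ at some level $m\in(\ZZ\cap[0,N])\setminus\{1\}$ — i.e.\ $\wt T_{-\Size_m}\notin\{T_{\beta_j}\}$, which forces either a cancellation of the degree-$(N-m)$ coefficient at frequency $-\Size_m$, or the survival of a term of degree $>N-m$ at that frequency coming from a permutation with fewer than $m$ moved points — then in either case there exist two permutations $\si,\si'$ that are \emph{not} reversal-equivalent (reversal-equivalent ones reinforce each other and have the same number of fixed points, hence cannot produce either effect) but satisfy $\al(\si)(Y)=\al(\si')(Y)$. Consequently $\AAA\setminus\Aa_{\gen}\subseteq B:=\bigcup\{\,Y\in\AAA:\al(\si)(Y)=\al(\si')(Y)\,\}$, the union taken over the finitely many non-reversal-equivalent pairs $\{\si,\si'\}\subseteq S_N$.

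It then suffices to show that $B$ is closed and nowhere dense, so that $\AAA\setminus B$ is an open dense subset of $\AAA$ contained in $\Aa_{\gen}$. Closedness is immediate because each $\al(\si):\AAA\to\RR$ is continuous. For nowhere density I would use that $\AAA$ is a connected real-analytic manifold (an open subset of $\RR^{3N}$, namely the complement of finitely many codimension-$3$ subspaces) and that each $\al(\si)$ is real-analytic there; hence the zero set of $\al(\si)-\al(\si')$ is closed with empty interior as soon as this function is not identically zero, and a finite union of such sets is again closed and nowhere dense. To see $\al(\si)-\al(\si')\not\equiv 0$ for a non-reversal-equivalent pair I would compute, at a configuration $Y$ in general position, the partial gradients $\pa_{y_p}\al(\si)=-\wh u_{p,\si(p)}-\wh u_{p,\si^{-1}(p)}$, where $\wh u_{p,r}:=(y_p-y_r)/|y_p-y_r|$ (the transposition convention giving $-2\wh u_{p,q}$, and $\pa_{y_p}\al(\si)=0$ when $p$ is fixed by $\si$). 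At a generic $Y$ these unit vectors admit no linear coincidences beyond the forced ones, so equality of all the gradients $\pa_{y_p}\al(\si)=\pa_{y_p}\al(\si')$ would force $\si$ and $\si'$ to have the same fixed points and $\{\si(p),\si^{-1}(p)\}=\{\si'(p),\si'^{-1}(p)\}$ as multisets for every moved $p$, i.e.\ the same undirected cycle multigraph, i.e.\ reversal-equivalence — a contradiction. Hence the gradients differ at $Y$, so $\al(\si)-\al(\si')$ is not the zero function.

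The main obstacle is exactly this last general-position step: one must exhibit a single $Y\in\AAA$ at which, for the finitely many relevant indices $p$, one has $\wh u_{p,a}+\wh u_{p,b}=\wh u_{p,c}+\wh u_{p,d}$ only when $\{a,b\}=\{c,d\}$, and $\wh u_{p,\si(p)}+\wh u_{p,\si^{-1}(p)}\neq 0$ for moved $p$; each exceptional identity is itself a nontrivial real-analytic constraint on $Y$, so only finitely many constraints must be avoided and a valid $Y$ exists. For bookkeeping it is worth noting that the levels $m=0$ and $m=2$ hold unconditionally, and $m=3$ excludes only the collinear configurations, by the remark preceding Lemma \ref{l:Pi2Pi3} and by Lemma \ref{l:Pi2Pi3} itself (these collinear configurations are in any case swept into $B$, e.g.\ by the pair $[1\,3]$, $[1\,2\,3]$), so the gradient analysis is really needed only for $m\ge 4$.
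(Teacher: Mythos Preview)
Your argument is correct and, unlike the paper's own proof (which simply invokes the family $\Aa_1$ and Lemma~3.4 of the companion paper \cite{AK18} as a black box), it is self-contained. The combinatorial reduction --- failure of (A5) at level $m$ forces two non-reversal-equivalent permutations to share the frequency $-\Size_m(Y)$ --- is exactly right: reversal-equivalent permutations have the same fixed-point set, sign, and $K_1$, so within a single class the leading coefficients at $\zeta^{N-m}$ reinforce rather than cancel, and a permutation with fewer than $m$ moved points hitting the same frequency is automatically in a different class. The real-analyticity argument on the connected open set $\AAA\subset(\RR^3)^N$ then finishes the job.

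One remark on efficiency: since $-\al(\si)(Y)=\sum_e m_e(\si)\,|y_{e_1}-y_{e_2}|$ is a $\ZZ_{\ge 0}$-linear combination of the pairwise distances with coefficients $m_e(\si)\in\{0,1,2\}$ recording edge multiplicities in the undirected cycle multigraph of $\si$, a quicker way to establish $\al(\si)\not\equiv\al(\si')$ for a non-reversal-equivalent pair is to exhibit a single $Y\in\AAA$ at which the $\binom{N}{2}$ pairwise distances are $\QQ$-linearly independent (such $Y$ exist, indeed are generic). At that $Y$ the equality $\al(\si)(Y)=\al(\si')(Y)$ would force $m_e(\si)=m_e(\si')$ for every edge $e$, i.e.\ reversal-equivalence. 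This bypasses the gradient computation and the general-position lemma you flag as the main obstacle; your gradient route is also valid, but verifying that each identity $\hat u_{p,a}+\hat u_{p,b}=\hat u_{p,c}+\hat u_{p,d}$ cuts out a proper subvariety, while routine, is less economical.
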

\begin{proof}
It is easy to see that the family $\Aa_1$ introduced in \cite[Section 3.2]{AK18} is a subset of $\Aa_\gen$.
So the proposition is an immediate corollary of \cite[Lemma 3.4]{AK18}.
\end{proof}

\begin{lem} \label{l:AinF}
$\Aa_\gen \subset \AAA_\gen$
\end{lem}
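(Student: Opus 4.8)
The plan is to show that the defining conditions (A1) and (A2) of $\AAA_\gen$ both follow once $Y$ lies in $\Aa_\gen$, i.e. once every point $\wt T_{-\Size_m}$ (for $m=0$ and $2 \le m \le N$) is an actual vertex-candidate point $T_{\beta_j}$ of the canonical form of $D$ with the expected degree $N-m$. First I would recall that the frequencies $\al(\si)$ of the exp-monomials in \eqref{e:Dterms} and the leading coefficients of the polynomials $p^{[\si]}$ depend only on $Y$ (not on $a$): indeed $\al(\si)$ is a pure geometric quantity and, by \eqref{e:al si}, the top-degree coefficient of $p^{[\si]}$ is $\ep_\si K_1(\si)$, which involves only the $|y_j - y_{\si(j)}|$. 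Consequently the points $T_{\beta_j}$ with $\beta_j \ne 0$ that actually survive cancellation, together with their imaginary parts $\deg P_{\beta_j}$, are determined by $Y$ alone — the only $a$-dependence sits in lower-order coefficients of the $P_{\beta_j}$ and in the single top point $T_0 = \ii N$, which is always present. This already gives (A1): the distribution diagram $\L$, being the upper convex hull of the set $\{T_{\beta_j}\}$, and hence $M(a,Y)$ and the slopes $\{\mu_n(a,Y)\}$, do not depend on $a$.

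Next, for (A2), the key observation is that when $Y \in \Aa_\gen$, condition (A5) holds, so for each admissible $m$ the point $\wt T_{-\Size_m} = (-\Size_m) + \ii(N-m)$ genuinely appears among the $T_{\beta_j}$. The endpoints of the segments $\L_n$ are a subset of $\{T_{\beta_j}\}$, and moreover — by the convexity (L3), (L4) and the fact (established via Lemma \ref{l:Pi2Pi3} and Proposition \ref{p:muM}) that $\wt T_{-\Size_2} = T_{-\Size_2}$ is always the leftmost or an interior vertex governing $\mu_M$ — every slope $\mu_n$ is realized as $\mu_n = \dfrac{\im(T_{n,\nu_n}) - \im(T_{n,1})}{\re(T_{n,\nu_n}) - \re(T_{n,1})}$ between two of these vertices. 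Since each surviving vertex is either $T_0 = \ii N$ or, by (A5), of the form $\wt T_{-\Size_m} = (-\Size_m)+\ii(N-m)$ for some $m \in (\ZZ\cap[0,N])\setminus\{1\}$, every slope $\mu_n$ is of the form
\[
\mu_n = \frac{(N-j) - (N-m)}{(-\Size_j) - (-\Size_m)} = \frac{m - j}{\Size_m - \Size_j}
\]
for suitable indices $j < m$ in $(\ZZ\cap[0,N])\setminus\{1\}$ with $\Size_j < \Size_m$ (strict inequality because $\L$ has no vertical segments, cf. the remark after (L4)). That is exactly membership of $\{\mu_n(a,Y)\}$ in the set displayed in (A2). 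Hence $Y \in \AAA_\gen$.

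I expect the main subtlety to be bookkeeping about \emph{which} of the $\wt T_{-\Size_m}$ actually end up as endpoints (vertices) of $\L$ versus lying strictly below it — this is precisely the dichotomy (C1)--(C3) discussed just before the lemma. The point to be careful about is that (A2) only asks that $\{\mu_n\}$ be \emph{contained} in the indicated set, not that every ratio $\frac{m-j}{\Size_m-\Size_j}$ occur; so one does not need to rule out (C2)--(C3), only to use that, by (A5), all the relevant points are present as candidates, so that the genuine vertices realizing the slopes are themselves among the $\wt T_{-\Size_m}$ and $T_0$. The index constraint $j \ne 1$ in (A2) matches the exclusion of $m=1$ in (A5) (and the special convention $\Size_1 = \diam Y$), and the constraint $0 \le j \le N-1$ together with $2 \le m \le N$ is automatic from \eqref{e:m_nh inZ}, since $\deg P_{\beta_{n,h}}$ never equals $N-1$. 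Assembling these observations yields $\Aa_\gen \subset \AAA_\gen$, which is the assertion of Lemma \ref{l:AinF}.
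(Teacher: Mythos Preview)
Your argument has two related gaps, both stemming from the same missing step.

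First, the claim in your opening paragraph that ``the points $T_{\beta_j}$ \dots\ together with their imaginary parts $\deg P_{\beta_j}$, are determined by $Y$ alone'' does not follow from the $a$-independence of the \emph{leading} coefficients of the $p^{[\si]}$. At a frequency $\beta$ where the top-degree coefficients of the contributing $p^{[\si]}$ happen to cancel (a purely $Y$-dependent event), the value of $\deg P_\beta$ is governed by the next-order coefficients, and these \emph{do} involve the $A_j = 4\pi a_j$. Membership of $Y$ in $\Aa_\gen$ rules out such cancellation only at the special frequencies $-\Size_m$, not at every $\beta$. So you have not shown that the full set $\{T_{\beta_j}\}$, and hence $\L$, is $a$-independent; your route to (A1) does not work as written.

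Second, in your argument for (A2) you assert that ``each surviving vertex is \dots\ of the form $\wt T_{-\Size_m}$'', citing (A5). But (A5) gives the inclusion in the wrong direction: it says each $\wt T_{-\Size_m}$ is among the $T_{\beta_j}$, not that each $T_{\beta_j}$ lying on $\L$ is a $\wt T_{-\Size_m}$. The missing observation --- which the paper makes explicit --- is that among all $T_{\beta_j}$ with $\im T_{\beta_j} = N-m$, the point $\wt T_{-\Size_m}$ has the \emph{minimal} real part (this is immediate from the definition of $\Size_m$ as a maximum over $S_{N,m}$ together with monotonicity of the sizes). Combined with the strict positivity of all slopes (Proposition~\ref{p:muM}) and (L4), this forces any $T_{\beta_j}$ actually lying on $\L$ at height $N-m$ to coincide with $\wt T_{-\Size_m}$; otherwise $\wt T_{-\Size_m}$ would sit strictly above $\L$. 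Once you have $\{T_{n,h}\} \subset \{\wt T_{-\Size_m}\}$, both (A1) and (A2) follow simultaneously: (A5) guarantees all $\wt T_{-\Size_m}$ are present, so $\L$ is the upper convex hull of the $a$-independent set $\{\wt T_{-\Size_m}\}$, and its slopes have the form required by (A2). This single argument replaces your separate (and incomplete) treatments of (A1) and (A2).
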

\begin{proof}
Let $Y \in \Aa_\gen$.
Then the validity of conditions (A1)-(A2) of Section \ref{ss:t-s} follows from the definitions of $M$ and $\mu_n$ (see Section \ref{ss:L}) 
and the following fact: if $Y \in \Aa_\gen$, then the set of points $T_{n,h}$, $n=1,\dots,M$, 
$h=1,\dots,\nu_n$, is a subset of the family of points 
$\wt T_m $, where $0 \le  m \le N$ and $m \neq 1$. Indeed, among all the points $T_{\beta_j}$ 
with $\im T_{\beta_j} = N-m$, the point $\wt T_m $ has minimal possible real part.
Since 
\begin{align} \label{e:wtT=T=T}
\wt T_0 = T_0 = T_{M,\nu_M}
\end{align} 
is the only point of the distribution diagram $\L$ 
on the line $\im z = N$, the convexity (L3) of the distribution diagram $\L$ (see Section \ref{ss:L}) implies the desired statement.
\end{proof}

Combining Proposition \ref{p:wtT=T} and Lemma \ref{l:AinF}, \emph{we conclude the proof of Theorem \ref{t:gen}}.

\emph{Let us prove Theorem \ref{t:MaxStr}.}  
It follows from (A3) that the cancellation (C1) does not happen for $m \ge 3$.
For $m=2$, (C1) does not happen because of Lemma \ref{l:Pi2Pi3}. 
For $m=0$, (C1) does not happen because (\ref{e:wtT=T=T})
holds for all $H_{a,Y}$. Thus, (A3) implies (A5). 

It follows from the proof of Lemma \ref{l:AinF} that 
only the points $\wt T_m $ participate in the computation of $M$ and of the parameters $\mu_n$, $n =1 \dots,M$.
The assumption (A4) ensures that all the parameters $\wt T_m $, 
where $0 \le m \le N$ and $m \neq 1$, do participate in this computation.
Thus, $M = N-1$. Now,  $\mu_{N-m} (a,Y) = \frac1{\Size_{m+1}  - \Size_{m} } $ 
and the other conclusions of statement (ii) follow directly from (\ref{e:mun}) and Theorem \ref{t:ra}.

To prove statement (iii) of Theorem \ref{t:MaxStr} it is enough to apply Theorem \ref{t:ra} and to notice that $\om_{M,1} \neq \om_{M,2}$ 
(this follows from Lemma \ref{l:Pi2Pi3} (i) and the procedure for the computation of $\om_{n,j}$ described in Section \ref{ss:ra}).
This completes the proof of Theorem \ref{t:MaxStr}.

\subsection{Examples and the thickness of the case (A3)-(A4).}
\label{ss:ex} 


Let $N=3$. In this case it is easy to give a complete classification of all possible structures of asymptotics of 
$\Si (H_{a,Y})$ due to the fact that $\AAA_\gen = \AAA$. That is, the distribution diagram $\L$ as well as the parameters $\mu_n$, and $\om_{n,j}$ of the sequences 
(\ref{e:k njt}) do not depend on the `strength' tuple $a$. They also satisfy (A2). Note also that 
for $N=3$ the set of tuples 
$Y$ satisfying (A3)-(A4) is generic in the sense that it is open and dense in $(\RR^{3})^3$.
In more details, there are 3 sub-cases of mutual positions of 3 centers 
$\{ y_j\}_{j=1}^3$ composing $Y$, which are described by the corresponding lengths  $\ell_{j,m} = |y_j - y_m|$:
\begin{itemize}
\item[Case 1.] $Y$ consists of the vertices of an  equilateral triangle, i.e., $\ell_{1,2} = \ell_{2,3} = \ell_{3,1}=\diam Y$. 
The asymptotic sequences in this case were found in \cite{AH84} (formally, under the restriction 
$a_1=a_2=a_3 \in \RR$ on the tuple $a$). 
Condition (A3) is satisfied, but the strict convexity part of (A4) is not; $M=1$, $\mu_1 = 1/\diam Y$, and 
$r_1 = 3$.
\item[Case 2.] The points $y_1$, $y_2$, and $y_3$ of $Y$ are on one line. 
Let us number them such that $\ell_{1,2} + \ell_{2,3} = \ell_{1,3} $.
Then $\Size_3=\Size_2$ and condition (A3) fails; $M=1$, $\mu_1 = 1/\ell_{1,3} $, and $r_1 =2$.
\item[Case 3.] The generic case, where $y_1$, $y_2$, and $y_3$ do not fit in the Cases 1 and 2. 
 Let us number them such that $\ell_{1,2} + \ell_{2,3} > \ell_{1,3} = \diam Y$. Conditions (A3)-(A4) are satisfied;
$M=2$; $\mu_1 = 1/(\ell_{1,2} + \ell_{2,3} - \ell_{1,3})$ with the multiplicity $r_1 =1$; 
$\mu_2 = 1/\ell_{1,3} $ with the multiplicity $r_2 =2$.
\end{itemize}

The case of Theorem \ref{t:MaxStr} is not generic when $N=4$. To show this in Proposition \ref{p:nonGen}, we 
start from an example of $H_{a,Y}$ where both of (A3) and (A4) fail.

\begin{ex} \label{ex:nW}
Let us fix an arbitrary $a \in \CC^4$. Consider $Y=\{ y_j \}_{j=1}^4$ with the properties that it is symmetric 
w.r.t. the origin, lies on one line passing through the origin, and satisfy
$|y_1|=|y_4|=c_1$ and $|y_2|=|y_3|=c_2<c_1$. Then the exp-monomials of (\ref{e:Dterms})
that have the lowest possible frequency $(-\Size_4)= - 4 c_1 - 4 c_2 $ do not cancel each other. Thus,
 the following statements are valid for this example:
\begin{itemize}
\item[(i)] $\Si (H_{a,Y})$ has the asymptotics of Weyl type (i.e., $W(a,Y) = \Size_N$);
\item[(ii)] each of the statements (A3), (A4), (A5) does not hold (since $\Size_3 = \Size_2$).
\end{itemize}
Note that, $Y \in \AAA_\gen$.
\end{ex}

\begin{prop} \label{p:nonGen}
Let $a \in \CC^4$. Let $Y$ be as in Example \ref{ex:nW}. 
Then there exists a neighborhood  $\BB_{\vep} (Y)$ of $Y$ such that for all 
$\wt Y \in \BB_{\vep} (Y)$ at least one of the conditions (A3)-(A4) fails for $H_{a,\wt Y}$.
\end{prop}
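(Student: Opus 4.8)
The plan is to show that it is the convexity part of (A4), and not condition (A3), that breaks down on a whole neighbourhood of $Y$, and with a definite quantitative margin; condition (A3) need not be examined at all. First I would record the relevant $m$-sizes of $Y$. Being symmetric about the origin and contained in a line through the origin, $Y$ is the set $\{\pm c_1,\pm c_2\}$ with $c_1>c_2>0$, so $\diam Y=2c_1$ and hence $\Size_1(Y)=2c_1$ and $\Size_2(Y)=4c_1$. Since the four points are collinear, Lemma \ref{l:Pi2Pi3}(ii) gives $\Size_3(Y)=\Size_2(Y)=4c_1$. On the other hand, the double transposition $\si_0$ that interchanges the two points of modulus $c_1$ and, separately, the two points of modulus $c_2$, satisfies $\sum_{j}|y_j-y_{\si_0(j)}|=2(2c_1)+2(2c_2)=4(c_1+c_2)$, whence $\Size_4(Y)\ge 4(c_1+c_2)$ and therefore
\[
\Size_3(Y)-\Size_2(Y)=0 , \qquad \Size_4(Y)-\Size_3(Y)\ge 4c_2>0 .
\]
In particular the inequality $\Size_3-\Size_2>\Size_4-\Size_3$, which for $N=4$ is exactly the $m=3$ instance of the strict-convexity condition in (A4), already fails at $Y$, and fails by at least $4c_2$.

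Next I would upgrade this to a neighbourhood by a continuity argument. Each map $\Size_m(\cdot)$ is the maximum of the finitely many continuous functions $\wt Y\mapsto\sum_{j}|\wt y_j-\wt y_{\si(j)}|$ over $\si\in S_{4,m}$ (and $\Size_1=\diam$), hence is continuous on $\AAA$; moreover, since the four points of $Y$ are distinct, a sufficiently small ball around $Y$ is contained in $\AAA$, so $H_{a,\wt Y}$ is defined there. Choose $\vep>0$ so small that $\BB_\vep(Y)\subset\AAA$ and $|\Size_m(\wt Y)-\Size_m(Y)|<c_2$ for $m=2,3,4$ whenever $\wt Y\in\BB_\vep(Y)$. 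Then for every such $\wt Y$,
\[
\Size_3(\wt Y)-\Size_2(\wt Y)<2c_2<\Size_4(\wt Y)-\Size_3(\wt Y),
\]
so the $m=3$ convexity inequality required by (A4) is violated for $H_{a,\wt Y}$. As (A4) involves only $Y$ and not the strength tuple, this holds for every $a\in\CC^4$, and hence at least one of (A3)--(A4) fails for $H_{a,\wt Y}$ for all $\wt Y\in\BB_\vep(Y)$; this is the assertion of Proposition \ref{p:nonGen}, and it exhibits an open subset of $\AAA$ on which the hypotheses of Theorem \ref{t:MaxStr} are never met.

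Finally I would comment on the choice of route and on where, if anywhere, care is needed. One cannot argue instead through (A3): for $\wt Y$ off the line one typically has $\Size_3(\wt Y)>\Size_2(\wt Y)$ with $\deg P_{-\Size_3(\wt Y)}=N-3=1$, while the lowest frequency $-\Size_4(\wt Y)$ is, for generic $\wt Y$ near $Y$, attained by a single derangement (or a single inverse pair $\{\si,\si^{-1}\}$), so that $P_{-\Size_4(\wt Y)}$ is a nonzero constant and (A3) survives; the cancellation $P_{-\Size_4(Y)}\equiv 0$ of Example \ref{ex:nW}(iii) is therefore not stable under perturbation of $Y$. Note also that the proof uses only the collinearity of $Y$ (through $\Size_3(Y)=\Size_2(Y)$) and not the particular ratio $c_1/c_2=3+2\sqrt2$; that ratio is kept in the statement merely so that the ball $\BB_\vep(Y)$ is centred at a genuinely non-Weyl Hamiltonian in the sense of Example \ref{ex:nW}(i). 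There is no serious obstacle here; the only computations requiring a moment's care are the evaluation $\Size_3(Y)=\Size_2(Y)$ and the exhibition of a derangement of length strictly larger than $\Size_3(Y)$, both immediate from the triangle inequality.
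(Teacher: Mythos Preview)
Your proof is correct and proceeds along a cleaner route than the paper's. The paper argues by a case distinction: either the frequency $(-\Size_4)$ cancels for $\wt Y$ (so (A3) fails outright), or it does not, in which case for small $\vep$ one of the points $\wt T_{-\Size_m}$ falls into configuration (C3) relative to the distribution diagram of $H_{a,\wt Y}$, and one then infers that (A4) must fail. That route invokes the geometry of the distribution diagram and the classification (C1)--(C3) from Section~\ref{ss:proofs}.

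You bypass all of this by observing that (A4) depends only on the $m$-sizes, that $\Size_3(Y)=\Size_2(Y)$ (collinearity) together with $\Size_4(Y)-\Size_3(Y)\ge 4c_2$ violates the $m=3$ convexity inequality of (A4) with a definite margin, and that the $\Size_m$ are continuous on $\AAA$. This is more elementary and more robust: as you correctly remark, it uses neither the specific ratio $c_1/c_2=3+2\sqrt 2$ nor any information about (A3) or the distribution diagram, and it shows directly that it is always (A4) that fails near $Y$. What the paper's approach offers in exchange is a description of the failure in the language it has been developing---tracing it to the (C3) alternative---but for the bare statement of the proposition your argument is both shorter and sharper.
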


\begin{proof}
If the frequency $(-\Size_4)$ cancels, then (A3) fails. 
Assume now that $\wt Y \in \BB_{\vep} (Y)$ and (A3) holds for $H_{a,\wt Y}$.
Then it is easy to see that for small enough $\vep$, (C3) is valid and (A4) fails.
\end{proof}

\section{High-energy asymptotics of `physical resonances'}
\label{s:Physics}

The existence of asymptotic structures 
means that the main quantities that characterize the high-energy asymptotics of $\Si (H_{a,Y})$  
from the point of view of Physics are:
the minimum $\mu^{\min}$ of the support of the measure $d \d^{\log} (\cdot)$ 
associated with the asymptotic density function $\d^{\log} (\cdot)$
and the height of the jump at this point $\d^{\log} (\mu^{\min}+0) - \d^{\log} (\mu^{\min}-0)$. 
Indeed, in scattering experiments usually only `sufficiently narrow' and `threshold' resonances $k$ are detected, 
and for this reason they are mostly considered to be the `physically relevant' resonances \cite{E84}. 
Such resonances have small (or, for some models, even zero) value of \emph{resonance width} $|\im k^2 |$
(see \cite{RSIV78}).

In the case of the point interaction Hamiltonian $H_{a,Y}$, with the growth of the energy $\re k^2$,
the resonance width  growth faster for sequences that have larger leading parameter $\mu_n$.
Thus, only the asymptotic sequences corresponding to $ \mu^{\min} = \mu_M = 1 / \diam Y$  
have chances to be detected with reasonable confidence. It is natural to call the density 
\begin{align}  \label{e:anar}
\d^{\log} (\mu_M+0) - \d^{\log} (\mu_M-0) = \frac{r_M  \diam Y}{\pi } 
\end{align} 
of the resonances in the logarithmic strip $\ii \V (\mu_M,w_M)$ (see (\ref{Vmuw}))
\emph{the asymptotic density of narrow resonances}.

\begin{rem}
The study of the density  (\ref{e:anar}) of `more narrow' resonances 
has some similarity with one of the questions considered in 
\cite{HL94,SZ95,SZ99}, where in the context of scattering by a strictly convex $C^\infty$-obstacle, 
a semi-cubic resonance free region 
$\Fr = \{  - C_0 |\re z|^{1/3} + C_1 \le \im z \le 0 \}$  was described \cite{HL94,SZ95}
 and the asymptotics of the counting function in a certain cubic strip 
 $\{  - C_2 |\re z|^{1/3} - C_3 \le \im z \le - C_0 |\re z|^{1/3} + C_1 \}$ (with $C_2>C_0>0$) 
adjacent to $\Fr$ was found \cite{SZ99} under an additional pinching condition on curvatures.
Note that in the case of point interactions the information of the structure of $\Si (H_{a,Y})$ 
given in Theorem 
\ref{t:ra} is more detailed than that of \cite{SZ99} in the case of obstacle scattering.
In particular, Theorem \ref{t:ra}  says that the decay rates $(- \im k_{M,j,t})$ for different 
indices $j$ of the sequences associated with the smallest leading parameter $\mu_M=\mu^{\min}$ are asymptotically equivalent as $|t| \to \infty$. 
In other words, the vertical cross sections of $\ii \V (\mu_M,w_M)$, which 
is the closest to $\RR$ logarithmic strip containing infinite number of resonances, 
remain uniformly bounded.
\end{rem}

Summarizing, we see that, for $H_{a,Y}$, the diameter $\diam Y$ and  the multiplicity parameter $r_M \in \NN$, which take values between 
$2$ and $N$, are the most 
important characteristics of high-energy asymptotics of `physical resonances'.
The following lemma implies that $r_M$ does not depend on the `strength' tuple $a \in \CC^N$.

\begin{lem} \label{l:rM}
The multiplicity parameter $r_M$ is equal to the  maximal integer $m \in [2 ,N]$ such that $\Size_m = m \diam Y$ 
and $\wt T_m \in \L$.
\end{lem}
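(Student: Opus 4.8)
The plan is to characterize $r_M$ as the jump height of the first-order structural parameter $\mu_M = 1/\diam Y$, translated through the distribution diagram $\L$. By Theorem \ref{t:ra}(iii) and Proposition \ref{p:muM}, $\mu_M = 1/\diam Y = 2/\Size_2$, and by Corollary \ref{c:dHaY} (or directly from the construction in Section \ref{ss:ra}), $r_M = m_{M,\nu_M} - m_{M,1} = \beta_{M,\nu_M} - \beta_{M,1}$ times $\mu_M$; more usefully, $r_M$ is the quantity $\im T_{M,\nu_M} - \im T_{M,1} = N - m_{M,1}$, i.e. $r_M$ is determined by the leftmost vertex $T_{M,1}$ of the final segment $\L_M$. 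Since $T_{M,\nu_M} = T_0 = \ii N$ and $\L_M$ has slope $\mu_M = 1/\diam Y$, the left endpoint $T_{M,1}$ is the point of the distribution diagram at which the diagram last bends away from this slope; equivalently, $T_{M,1} = (-\Size_2/2)(N - \im T_{M,1}) \cdot \tfrac{2}{\Size_2}$-scaled, but the clean statement is that $T_{M,1}$ lies on the half-line through $\ii N$ with slope $1/\diam Y$, so its imaginary part is $N - r_M$ and its real part is $-r_M \diam Y$.

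First I would observe that for every $m \in \ZZ \cap [2,N]$, the point $\wt T_{-\Size_m} = (-\Size_m) + \ii(N-m)$ lies on or strictly below the half-line $\{ \ii N + s(-\diam Y - \ii) : s \ge 0 \}$, because $\Size_m \ge m \diam Y$ would be needed for equality while in general $\Size_m \le m \diam Y$ (each of the $m$ displaced indices contributes a length $|y_j - y_{\si(j)}| \le \diam Y$, exactly as in the contradiction argument in the proof of Proposition \ref{p:muM}). Hence $\wt T_{-\Size_m}$ sits strictly below that half-line unless $\Size_m = m \diam Y$, in which case it sits exactly on it. Next, recall from the proof of Lemma \ref{l:AinF} that among all $T_{\beta_j}$ with $\im T_{\beta_j} = N-m$, the point $\wt T_{-\Size_m}$ has the minimal real part; so $\wt T_{-\Size_m}$ is the leftmost point of the diagram configuration at height $N-m$, whether or not it is actually a vertex of $\L$.

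The core of the argument is then: the segment $\L_M$ has slope $\mu_M = 1/\diam Y$, its right endpoint is $\ii N$, and $r_M = N - \im T_{M,1}$. Write $m^* := r_M$, so $T_{M,1} = -m^* \diam Y + \ii(N - m^*)$. I claim $T_{M,1} = \wt T_{-\Size_{m^*}}$ and $\Size_{m^*} = m^* \diam Y$ with $\wt T_{-\Size_{m^*}} \in \L$; and that $m^*$ is the largest integer in $[2,N]$ with both properties. For the first claim: since $T_{M,1}$ is a vertex of $\L$ it is some $T_{\beta_j}$, hence of the form $-\Size_m + \ii(N-m)$-type only if it is in fact the leftmost point at its height, i.e. $T_{M,1} = \wt T_{-\Size_{m^*}}$ with $\beta_{M,1} = -\Size_{m^*}$; but $\beta_{M,1} = -r_M \diam Y = -m^* \diam Y$, forcing $\Size_{m^*} = m^* \diam Y$. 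And $\wt T_{-\Size_{m^*}} = T_{M,1} \in \L$ by construction. For maximality: if $m > m^*$ satisfied $\Size_m = m \diam Y$ and $\wt T_{-\Size_m} \in \L$, then $\wt T_{-\Size_m} = -m\diam Y + \ii(N-m)$ lies on the half-line of slope $1/\diam Y$ through $\ii N$, at height $N-m < N - m^*$, hence strictly to the left of $T_{M,1}$ on $\L_M$'s supporting line; being a point of the convex diagram $\L$ on that supporting line below $T_{M,1}$, it would have to lie on $\L_M$ (by convexity (L3) and the fact that $\L_M$ is the maximal segment of that slope containing $\ii N$), contradicting that $T_{M,1}$ is the left endpoint of $\L_M$. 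Conversely any $m \in [2,N]$ with $\Size_m = m\diam Y$ and $\wt T_{-\Size_m} \in \L$ gives a diagram point on the supporting line of $\L_M$, hence on $\L_M$, so $N - m \ge \im T_{M,1} = N - m^*$, i.e. $m \le m^*$; thus $m^*$ is exactly the maximum. The base case $m = 2$ always works by Lemma \ref{l:Pi2Pi3}(i) and (\ref{e:PJPi2}), so the maximum is over a nonempty set. The independence of $r_M$ from $a$ then follows because the conditions ``$\Size_m = m\diam Y$'' and ``$\wt T_{-\Size_m} \in \L$'' are conditions on the leading coefficients of the $p^{[\si]}$, which by (\ref{e:al si}) and (\ref{e:Ga}) do not depend on $a$, exactly as noted in the proof of Lemma \ref{l:AinF}. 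The step I expect to be the main obstacle is the careful bookkeeping showing that a diagram point lying on the supporting line of $\L_M$ must actually lie on the segment $\L_M$ itself — this uses convexity (L3) together with (L4) and the maximality in the definition of the $\L_n$, and one must be slightly careful that such a point cannot instead be a point strictly below $\L$ at a height already covered by $\L_M$; here the minimal-real-part property of $\wt T_{-\Size_m}$ among points at its height is what rules this out.
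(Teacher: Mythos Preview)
Your argument is correct and is precisely the elaboration the paper has in mind; the paper's own proof is a single sentence referring back to Sections~\ref{ss:L}--\ref{ss:ra} and the definition of $\wt T_{-\Size_m}$, and you have unpacked exactly those ingredients.

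One step is written in a slightly circular order. You deduce $T_{M,1}=\wt T_{-\Size_{m^*}}$ from ``$T_{M,1}$ is leftmost at its height'', and then derive $\Size_{m^*}=m^*\diam Y$ from that; but ``leftmost at height $N-m^*$'' does not by itself force $\beta_{M,1}=-\Size_{m^*}$, since $\wt T_{-\Size_{m^*}}$ need not a priori be one of the $T_{\beta_j}$. The clean order is the one you essentially already have the pieces for: from $\beta_{M,1}=-m^*\diam Y$ and $\deg P_{\beta_{M,1}}=N-m^*$, every $\si$ with $\al(\si)=\beta_{M,1}$ has at least $m^*$ non-fixed points (each displacement is $\le\diam Y$), and at least one has exactly $m^*$ (otherwise $\deg P_{\beta_{M,1}}<N-m^*$); this $\si\in S_{N,m^*}$ gives $\Size_{m^*}\ge m^*\diam Y$, hence equality, hence $\wt T_{-\Size_{m^*}}=T_{M,1}\in\L$. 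Your maximality step via the supporting line of $\L_M$ and convexity (L3) is fine, as is the remark on $a$-independence (the relevant coefficients are leading coefficients of the $p^{[\si]}$ and do not involve the $A_j$).
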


\begin{proof}
The lemma follows immediately from Sections \ref{ss:L}-\ref{ss:ra} and the definition of the 
points $\wt T_m$ in Section \ref{ss:proofs}.
\end{proof}

\begin{defn}
Let us denote by $r^\narrow (Y)$ the value of $r_M$ for $H_{a,Y}$, and by $\AAA_2$ the set of $Y \in \AAA$
such that  $r^\narrow (Y) = 2$ (i.e., such that $r^\narrow (Y) $ is minimal possible;
we use here the notation of Section \ref{ss:t-s}).
\end{defn}

The following theorem states that, generically, $r^\narrow (Y)=2$.

\begin{thm} \label{t:F2}
There exists a subset of $\AAA_2$ that is open and dense in $\AAA$. 
\end{thm}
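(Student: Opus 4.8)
The plan is to exhibit an open dense subset of $\AAA$ on which $r^\narrow(Y)=2$, using Lemma \ref{l:rM}, which characterizes $r_M$ as the largest $m\in[2,N]$ with $\Size_m(Y)=m\diam Y$ and $\wt T_{-\Size_m}\in\L$. Since $r^\narrow(Y)\ge 2$ always (by Theorem \ref{t:ra}(iv)), it suffices to show that, generically, $\Size_m(Y)<m\diam Y$ for every $m\in[3,N]$; this already forces $r^\narrow(Y)=2$ regardless of whether $\wt T_{-\Size_m}$ lies on $\L$. So the real goal is the purely metric-geometric statement: the set $\AAA_{<}:=\{Y\in\AAA:\Size_m(Y)<m\diam Y\text{ for all }3\le m\le N\}$ is open and dense in $\AAA$.

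First I would prove openness. Fix $Y\in\AAA_{<}$. Each map $y\mapsto\Size_m(y)$ is continuous on $\AAA$ (it is a finite max of sums of the continuous functions $y\mapsto|y_j-y_{\si(j)}|$), and $y\mapsto\diam y$ is continuous as well; hence $y\mapsto m\diam y-\Size_m(y)$ is continuous, strictly positive at $Y$ for each of the finitely many values $m=3,\dots,N$, and therefore strictly positive on a neighborhood of $Y$. Intersecting these finitely many neighborhoods (and the open set $\AAA$) gives an open neighborhood of $Y$ contained in $\AAA_{<}$, so $\AAA_{<}$ is open.

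Next I would prove density, which is the main point. Given an arbitrary $Y\in\AAA$ and $\vep>0$, I want $\wt Y\in\BB_\vep(Y)\cap\AAA_{<}$. The key observation is that $\Size_m(\wt Y)=m\diam\wt Y$ can only occur when every transposition-type summand realizing the maximum in the definition of $\Size_m$ uses a pair of points at distance exactly $\diam\wt Y$ and, moreover, all $N-m$ "fixed" points also participate in suitably long pairs — but in fact, if $\Size_m(\wt Y)=m\diam\wt Y$ with $m\ge 3$, then in the optimizing permutation $\si\in S_{N,m}$ every moved point $j$ satisfies $|y_j-y_{\si(j)}|=\diam\wt Y$, so in particular there are at least $m\ge 3$ indices all lying at mutual maximal distance along matched pairs; this is a highly non-generic coincidence among the pairwise distances $\{|\wt y_j-\wt y_{j'}|\}$. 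Concretely, I would argue that the condition "$\Size_m(\wt Y)=m\diam\wt Y$ for some $m\ge 3$" is contained in a finite union of sets cut out by nontrivial analytic equations of the form $|\wt y_{j_1}-\wt y_{j_2}|=|\wt y_{j_3}-\wt y_{j_4}|$ with $\{j_1,j_2\}\ne\{j_3,j_4\}$ (forced because, with $m\ge 3$, the optimizing permutation pairs at least two distinct pairs of indices at the common value $\diam\wt Y$). Each such set is a proper analytic subvariety of $\AAA$, hence closed with empty interior; their finite union is nowhere dense, so its complement — which contains $\AAA_{<}$ up to the remaining (generic) configurations — is dense. A small perturbation of $Y$ that slightly dilates the pair realizing the diameter while leaving other distances essentially unchanged lands in $\AAA_{<}$, giving the desired point of $\BB_\vep(Y)\cap\AAA_{<}$.

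The main obstacle is making precise the claim that $\Size_m(\wt Y)=m\diam\wt Y$ (for $m\ge 3$) forces a nontrivial equality among pairwise distances: one must carefully analyze the optimizing permutation $\si\in S_{N,m}$ and use the triangle inequality to see that $\sum_{j:\si(j)\ne j}|\wt y_j-\wt y_{\si(j)}|\le m\diam\wt Y$ with equality only if each moved pair is a diameter-realizing pair; since a permutation with exactly $m\ge 3$ non-fixed points moves at least $2$ (in fact $\ge 2$, and whenever $m\ge 3$ at least two distinct unordered pairs, or a $2$-cycle plus more) one extracts two distinct index pairs sharing the value $\diam\wt Y$. Handling the bookkeeping for cycles of length $\ge 3$ (where $|\wt y_j-\wt y_{\si(j)}|$ and $|\wt y_{\si(j)}-\wt y_{\si^2(j)}|$ are different pairs automatically) versus products of transpositions is the one place care is needed; once that is done, the analytic-subvariety argument and the continuity argument for openness are routine. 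Alternatively, and perhaps more cleanly, one can simply invoke Theorem \ref{t:gen}: on the open dense set furnished there, the parameters $\mu_n$ are all of the form $\frac{m-j}{\Size_m-\Size_j}$ and the combinatorics of the distribution diagram is rigid, so a further generic restriction (perturbing to destroy any equality $\Size_m=m\diam Y$ for $m\ge 3$, which persists as an open condition) yields $r^\narrow(Y)=2$; I would present the direct metric argument as the primary route and mention this as a shortcut.
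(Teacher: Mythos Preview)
Your proposal is correct and follows essentially the same route as the paper: both arguments reduce to the observation that $\Size_m(Y)=m\diam Y$ for some $m\ge 3$ forces at least two distinct unordered pairs $\{j_1,j_2\}\ne\{j_3,j_4\}$ with $|y_{j_1}-y_{j_2}|=|y_{j_3}-y_{j_4}|=\diam Y$, and then use that the set of $Y$ with all pairwise distances distinct is open and dense. The paper packages this more tersely by defining $\Aa_2:=\{Y:\al(\si_1)\ne\al(\si_2)\text{ for all distinct transpositions }\si_1,\si_2\}$ (equivalently, all pairwise distances distinct), citing \cite[Lemma~3.4]{AK18} for its open-denseness, and invoking Lemma~\ref{l:rM} for the inclusion $\Aa_2\subset\AAA_2$; your version spells out the cycle bookkeeping and the analytic-subvariety argument that the paper leaves implicit or outsources.
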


\begin{proof}
Let us denote by $\Aa_2$ the set of $Y \in \AAA$ such that $\al (\si_1) \neq \al (\si_2)$ holds for any two different transpositions 
$\si_1$, $\si_2 \in S_{N,2}$ (recall that $\al (\si)$ was defined by \ref{e:al si}). Then it follows from \cite[Lemma 3.4]{AK18} that $\Aa_2$ is open and dense in $\AAA$.
On the other hand, it is obvious from Lemma \ref{l:rM} that $\Aa_2 \subset \AAA_2$.
\end{proof}

So, $r_M >2$ means that the asymptotic density of narrow resonances is atypically large.
Let us consider in more details how this can happen. First, note that for $N \ge 3$,
\begin{align} \label{e:s3=3d}
\text{ $\Size_3 = 3 \diam Y$ \quad  implies \quad $r_M \ge 3$. }
\end{align}
Indeed, $\Size_3 = 3 \diam Y$ means exactly that three of the centers $y_j$ form an equilateral triangle with the side of the length $\diam Y$.
It follows from Lemma \ref{l:Pi2Pi3} (ii), the properties (L3)-(L4) of the distribution diagram $\L$, and Proposition \ref{p:muM}
that $\wt T_3 \in \L$, and so  there is no need to check this condition in Lemma \ref{l:rM} (for $m=3$).

By the following example we show that for $N \ge 4$ the condition 
$\Size_4 = 4 \diam Y$ (i.e., $Y$ contains four distinct centers $y_{j_n}$, $n=1, \dots, 4$,
such that $\diam Y = \ell_{j_1,j_2} = \ell_{j_3,j_4}$)
does not imply $r_M \ge 4$, and, moreover, it does not imply $r_M \ge 3$.
Here and below, we shall set $\ell_{i,j} = |y_{i} - y_{j}|$.

\begin{ex} \label{ex:Y4}
Let $ \wt Y_4 = \{ y_j \}_{j=1}^4$ be the tuple of distinct $\RR^3$ points 
such that 
\[
\ell_{1,2} = \ell_{2,3}=\ell_{3,4}=\ell_{4,1} = \diam \wt Y_4 > \ell_{2,4} \ge \ell_{1,3}
\] 
and let $a \in \CC^4$ be arbitrary.
Then $\Size_4 (\wt Y_4) = 4 \diam \wt Y_4$, but the frequency $\Size_4 (\wt Y_4) $ cancels in the process of summation of 
the exp-monomials (\ref{e:Dterms}). This can be seen by a 
simple modification of arguments of \cite[Section 5.2]{LL17}
or by direct calculations. Since $\Size_3 < 3 \diam \wt Y_4$, we see that $M (a,\wt Y_4) = 2$ and $r^\narrow (\wt Y_4) =2$.
\end{ex}

Let us introduce one more condition on $Y$ with $N \ge 4$:
\begin{itemize}
\item[(A6)] there exist four distinct centers 
$y_{j_n} \in Y$, $n=1, \dots, 4$, such that $\diam Y = \ell_{j_1,j_2} = \ell_{j_3,j_4}$
and $\{y_{j_i}\}_{i=1}^4$ cannot be reordered to satisfy the condition of Example \ref{ex:Y4}.
\end{itemize}

\begin{thm} \label{t:rM>2}
Let $N \ge 4$. Assume that condition (A6) holds true or $\Size_3 (Y) = 3 \diam Y$.
Then $r^\narrow (Y) > 2 $.
\end{thm}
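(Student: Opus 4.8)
The plan is to reduce the statement to an application of Lemma~\ref{l:rM}, which characterizes $r^\narrow(Y) = r_M$ as the maximal integer $m \in [2,N]$ with $\Size_m(Y) = m\diam Y$ and $\wt T_{-\Size_m} \in \L$. Since we want only $r^\narrow(Y) > 2$, it suffices to exhibit a single value $m \in \{3,4\}$ for which both conditions hold. I would split into the two hypotheses of the theorem.

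First, suppose $\Size_3(Y) = 3\diam Y$. Then three of the centers form an equilateral triangle with side $\diam Y$, and the argument already recorded in the discussion around~(\ref{e:s3=3d}) applies verbatim: by Lemma~\ref{l:Pi2Pi3}(ii) the number $(-\Size_3)$ is a frequency of $D$ with $\deg P_{-\Size_3} = N-3$, so $\wt T_{-\Size_3} = T_{-\Size_3} \in \{T_{\beta_j}\}$; and by Proposition~\ref{p:muM} together with the convexity and maximality properties (L3)--(L4) of the distribution diagram (recall $\mu_M = 1/\diam Y$ and $\wt T_{-\Size_2}$ is its left-looking extreme vertex on the bottom segment), the point $\wt T_{-\Size_3}$ cannot lie strictly below $\L$, hence $\wt T_{-\Size_3} \in \L$. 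Lemma~\ref{l:rM} then gives $r^\narrow(Y) \ge 3 > 2$.

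Second, suppose (A6) holds. Pick the four distinct centers $y_{j_1},\dots,y_{j_4}$ with $\diam Y = \ell_{j_1,j_2} = \ell_{j_3,j_4}$ that cannot be reordered into the configuration of Example~\ref{ex:Y4}. The key point is that $\Size_4(Y) = 4\diam Y$ and the frequency $(-\Size_4) = -4\diam Y$ does \emph{not} cancel in $D$: the exp-monomials of~(\ref{e:Dterms}) contributing to this lowest possible frequency come from permutations $\si \in S_{N,4}$ moving exactly $\{j_1,\dots,j_4\}$ by distances all equal to $\diam Y$, i.e. the products of transpositions among these four points using only diameter-length edges; by the failure of the Example~\ref{ex:Y4} condition the relevant leading coefficients do not sum to zero (this is where one invokes the computation sketched there and in \cite[Section 5.2]{LL17}, now with the sign pattern of $\ep_\si$ arranged so that no cancellation occurs). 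Hence $\deg P_{-\Size_4} = N-4$, so $\wt T_{-\Size_4} = T_{-\Size_4} \in \{T_{\beta_j}\}$, and again by (L3)--(L4) and Proposition~\ref{p:muM} it lies on $\L$ (it is in fact the left endpoint $T_{\beta_0}$ since $\Size_4$ realizes the maximal possible frequency magnitude compatible with $\mu_M = 1/\diam Y$ at height $N-4$). Lemma~\ref{l:rM} with $m=4$ yields $r^\narrow(Y) \ge 4 > 2$.

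The main obstacle I anticipate is the non-cancellation claim in the second case: one must show that condition (A6)'s exclusion of the Example~\ref{ex:Y4} ordering is exactly what prevents the diameter-edge matchings on $\{y_{j_1},\dots,y_{j_4}\}$ from producing a vanishing coefficient of $e^{-\Size_4 \zeta}\zeta^{N-4}$. Concretely, if $\{y_{j_1},\dots,y_{j_4}\}$ supports three distinct diameter-length perfect matchings (the $K_4$-with-all-edges-equal case is impossible in $\RR^3$ beyond $K_3$, but two or three such matchings can coexist, e.g. a ``rhombus'' with both a side and the shorter diagonal equal), then several permutations $\si$ contribute with signs $\ep_\si$ that might cancel; one checks that this cancellation is precisely the degenerate configuration of Example~\ref{ex:Y4} (up to reordering), so its exclusion guarantees survival of the frequency. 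I would handle this by enumerating the possible multigraphs of diameter-length edges on four points, which is a short finite case analysis, rather than a general computation. Everything else is bookkeeping with the distribution diagram already set up in Sections~\ref{ss:L}--\ref{ss:ra}.
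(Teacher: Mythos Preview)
Your first case ($\Size_3 = 3\diam Y$) matches the paper's argument and is fine.

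The second case has a genuine gap. You assert that the exp-monomials with frequency $-\Size_4 = -4\diam Y$ and polynomial degree $N-4$ ``come from permutations $\si \in S_{N,4}$ moving exactly $\{j_1,\dots,j_4\}$''. This is false: any $\si \in S_{N,4}$ whose four displaced points all move a distance $\diam Y$ contributes, and there may be many such $4$-subsets besides the one supplied by (A6). Your proposed finite case analysis on the single quadruple $\{y_{j_1},\dots,y_{j_4}\}$ therefore cannot establish non-cancellation of the leading coefficient of $P_{-\Size_4}$, which is a global sum over all contributing $\si$. (A minor side error: $\wt T_{-\Size_4}$ lies on the last segment $\L_M$, but it need not be $T_{\beta_0}$ when $N>4$.)

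The paper's proof addresses exactly this by a pairing argument over \emph{all} contributing permutations. One first reduces to the case $\Size_3 \neq 3\diam Y$ (which you never invoke, but which is essential). The $4$-cycles $[m_1 m_2 m_3 m_4]$ with frequency $-\Size_4$ (sign $-1$) arise precisely from $4$-sets that can be ordered as in Example~\ref{ex:Y4}; each such ordered quadruple is matched to the two double-transpositions $[m_1 m_2][m_3 m_4]$ and $[m_2 m_3][m_4 m_1]$ (sign $+1$), and the assumption $\Size_3 \neq 3\diam Y$ guarantees this correspondence is well-defined (no regular-tetrahedron ambiguity). These paired contributions cancel in the leading coefficient. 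Condition (A6) then says that $[j_1 j_2][j_3 j_4]$ is a double-transposition with frequency $-\Size_4$ that is \emph{not} produced by this pairing, so its contribution survives and $\deg P_{-\Size_4} = N-4$. The missing idea in your proposal is this global matching between $4$-cycles and ``Example~\ref{ex:Y4}-type'' double-transpositions across all $4$-subsets of $Y$.
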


\begin{proof}
The case $\Size_3 = 3 \diam Y$ follows from (\ref{e:s3=3d}).
Assume now that $\Size_3  \neq  3 \diam Y$ and that (A6) holds. Note that (A6) implies $\Size_4 = 4 \diam Y$.
Therefore, due to Lemma \ref{l:rM}, it is enough to prove that $(-\Size_4) $ does not cancel
in the process of summation of the exp-monomials (\ref{e:Dterms}) corresponding to permutations $\si\in S_{n,4}$.

Consider the family $B_1$ of unordered sets $Y_4$ consisting of 4 points of $Y$ such that a certain ordering 
$\wt Y_4 := \{y_{m_i}\}_{i=1}^4$ of $Y_4$ satisfies the conditions of Example \ref{ex:Y4}.
To such $\wt Y_4 $, we put into correspondence the two families, $E_1 (\wt Y_4)$ and $E_2 (\wt Y_4)$,
each of which is an unordered pair of unordered pairs of centers: $E_1 (\wt Y_4) = \{ \{y_{m_1},y_{m_2}\}, \{y_{m_3},y_{m_4}\}\}$ and 
$E_2 (\wt Y_4)= \{ \{y_{m_2},y_{m_3}\}, \{y_{m_4},y_{m_1}\}\}$.
The assumption $\Size_3  \neq  3 \diam Y$ implies that for each $Y_4 \in B_1$ there exists a unique
unordered pair $E (Y_4)$  
of the form $ \{E_1 (\wt Y_4),E_2 (\wt Y_4)\}$ 
i.e., $\Size_3  \neq  3 \diam Y$ implies that various reorderings of $\wt Y_4$ 
that preserve the conditions of Example \ref{ex:Y4} produce the same unordered pair 
$E (Y_4)$ (of pairs of pairs of centers). 
(For the case where $\Size_3  =  3 \diam Y$ and $\{y_{m_i}\}_{i=1}^4$ 
are vertices of a regular tetrahedron, the above statement obviously fails.)

The family of all  pairs of pairs of centers produced by 
the maps $E_i (\cdot)$, $i=1,2$, shall be denoted by $B_2$, i.e., $B_2 = E_1 [B_1] \cup E_2 [B_2] $.

The class $\Sf_1$ of all permutations $\si \in S_{N,4}$ that have the sign $\ep_{\si} = -1$ 
and produce  exp-monomials (\ref{e:Dterms}) with the frequency $(-\Size_4)$ 
consists of permutations $[m_1 m_2 m_3 m_4]$ and $[m_4 m_3 m_2 m_1]$ 
associated with 
a certain ordered tuple $\wt Y_4$ having the property that its unordered version $Y_4$ belongs to 
$B_1$. These permutations produce the exp-monomials 
$\frac{-e^{-\Size_4} }{(\diam Y)^4} \prod_{j=\si (j)} (-\zeta -A_j) $. To each pair 
$[m_1 m_2 m_3 m_4]$ and $[m_4 m_3 m_2 m_1]$ of such permutations 
we put into correspondence the two permutations 
$[m_1 m_2][m_3 m_4]$ and $[m_2 m_3][m_4 m_1]$ associated with $E_1 (\wt Y_4)$ and $E_2 (\wt Y_4)$, resp., 
and we unite all permutations generated in this way by $E_i (\wt Y_4)$, $i=1,2$, into the class $\Sf_2$.
The permutations $[m_1 m_2][m_3 m_4]$ and $[m_2 m_3][m_4 m_1]$ produce the exp-monomials (\ref{e:Dterms})
of the form $\frac{e^{-\Size_4} }{(\diam Y)^4} \prod_{j=\si (j)} (-\zeta -A_j) $.
So after summation of all exp-monomials generated by permutations of the classes $\Sf_1$ and $\Sf_2$ the 
coefficients of the highest order cancel and the associated polynomial has a degree $< N-4$.

Assumption (A6) means that $\{ \{y_{j_1},y_{j_2}\},\{y_{j_3},y_{j_4}\}\} $ does not belong to $B_2$.
However, the exp-monomial produced by the associated permutation $\si =[j_1 j_2][j_3 j_4]$ has 
the sign $\ep_{\si} = 1$ and the frequency $(-\Size_4)$. 
Hence the frequency $(-\Size_4)$ does not cancel and, 
moreover, $\deg P_{-\Size_4} = N-4$. 
This proves $\wt T_4 \in \L$, and due to Lemma \ref{l:rM}, completes the proof of the theorem.
\end{proof}

Under symmetries of $Y$ we understand isometries $\I$ of $\RR^3$ such that $\I [Y]=Y$.
Moreover, two symmetries $\I_1$ and $\I_2$ will be considered identical (equivalent) if 
$\I_1 (y_j) = \I_2 (y_j)$ for all $y_j \in Y$.
So the group $\Gr$ of symmetries of $Y$ consists of the defined above classes of equivalence of 
isometries of $\RR^3$ that map $Y$ onto $Y$.

\begin{cor} \label{c:Gr}
Let $\Gr$ be the group of symmetries of $Y = \{ y_j \}_{j=1}^N$.
Assume that there exist $\I \in \Gr$ and $y_{j_1}, y_{j_2} \in Y$ 
with the following properties:
\item[(i)] $|y_{j_1} - y_{j_2}| = \diam Y$, 
\item[(ii)] the sets $\{y_{j_1},y_{j_2}\}$ and $\{\I (y_{j_1}) , \I (y_{j_2})\}$ are disjoint, 
\item[(iii)] the 4-tuple $\{y_{j_1},y_{j_2}, \I (y_{j_1}) , \I (y_{j_2}) \}$ cannot be reordered in such a way that it satisfies 
the conditions of Example \ref{ex:Y4}.

Then $r^\narrow (Y) >2$.
\end{cor}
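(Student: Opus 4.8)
The plan is to derive Corollary \ref{c:Gr} from Theorem \ref{t:rM>2} by verifying that its hypotheses (i)--(iii) guarantee condition (A6). First I would set $j_1, j_2$ as in the statement and put $j_3, j_4$ to be the indices with $y_{j_3} = \I(y_{j_1})$ and $y_{j_4} = \I(y_{j_2})$; these exist and are distinct from $j_1, j_2$ by property (ii), so $\{y_{j_1}, y_{j_2}, y_{j_3}, y_{j_4}\}$ is a genuine $4$-tuple of distinct centers of $Y$. Since $\I$ is an isometry of $\RR^3$ with $\I[Y]=Y$, we have $\ell_{j_3, j_4} = |\I(y_{j_1}) - \I(y_{j_2})| = |y_{j_1} - y_{j_2}| = \diam Y$ by property (i). Thus $\diam Y = \ell_{j_1, j_2} = \ell_{j_3, j_4}$, which is the first half of (A6).

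Next I would observe that property (iii) is \emph{verbatim} the second half of (A6): the $4$-tuple $\{y_{j_1}, y_{j_2}, \I(y_{j_1}), \I(y_{j_2})\}$ cannot be reordered to satisfy the conditions of Example \ref{ex:Y4}. Hence (A6) holds for $Y$, and Theorem \ref{t:rM>2} (applied in its `(A6) holds true' alternative; note $N \ge 4$ is automatic since we have exhibited four distinct centers) immediately yields $r^\narrow(Y) > 2$.

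The only point requiring a word of care is the implicit claim that four distinct centers exist at all, i.e.\ that $N \ge 4$ — this is forced by (ii) together with the fact that $y_{j_1} \ne y_{j_2}$ (which is built into (i), as $\diam Y > 0$ whenever $N \ge 2$), so $\{y_{j_1}, y_{j_2}\}$ and its disjoint image $\{\I(y_{j_1}), \I(y_{j_2})\}$ together give four distinct points of $Y$. I do not anticipate a genuine obstacle here: the corollary is essentially a repackaging of Theorem \ref{t:rM>2} in which the abstract combinatorial condition (A6) is realized through a concrete symmetry $\I$ of the configuration, the role of $\I$ being solely to manufacture the second long edge $\{y_{j_3}, y_{j_4}\}$ of length $\diam Y$ out of the first one. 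The short proof would simply make these three identifications explicit and invoke the theorem.
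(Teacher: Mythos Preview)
Your proposal is correct and follows exactly the paper's own approach: the paper's proof consists of the single sentence ``The statement follows easily from Theorem \ref{t:rM>2}'', and you have simply spelled out the (entirely straightforward) verification that hypotheses (i)--(iii) yield condition (A6) and that $N\ge 4$ is automatic.
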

\begin{proof}
The statement follows easily from Theorem \ref{t:rM>2}.
\end{proof}

Theorem \ref{t:rM>2} and Corollary \ref{c:Gr} suggest that $r^\narrow (Y) >2$ 
whenever $Y$ possesses a rich enough group of symmetries $\Gr$. 
This claim can be supported by the following examples.

\begin{ex} \label{ex:sym}
Let the tuple $a$ be arbitrary. 
Then $r^\narrow (Y) >2$ in each of the following cases:
\item[(i)] $Y$ is the set of all vertices of a Platonic solid;
\item[(ii)] $Y$ is the set of all vertices of a right prism.

These statements can be shown by using Theorem \ref{t:rM>2} or by the direct calculation of the exp-monomials
with the frequencies $(-m)\diam Y$, $m \in \NN \cap [3,N]$.
\end{ex}

The $N-1$ possible values $2,3,\dots,N$ of $r^\narrow$ naturally 
split the family $\AAA$ of all possible tuples $Y$ of interaction centers into $N-1$ classes.
In our opinion, the interplay between these classes and the groups of symmetries of $Y$ deserves 
an additional study.

\section{Remarks on resonances of quantum graphs
}
\label{s:G}

To study the asymptotic structure of $\Si (H)$ for quantum graph Hamiltonians, 
the above approach requires some adaptation that is done in the next subsection.
Subsection \ref{ss:PhC} briefly addresses the resonances of 1-D photonic crystals considering them 
as a simple case of `weighted' quantum graphs.

\subsection{Structure of resonance asymptotics for quantum graphs} 

\label{ss:QG}

The asymptotics $\N (R) = \frac{C R}{\pi}+O(1)$ of the resonance counting function $\N (R) $ 
for a Hamiltonian $H$ associated with a noncompact quantum graph $\G$ consisting of a finite 
number of edges have been derived in \cite{DEL10,DP11}, where, 
in particular, Weyl and non-Weyl types of asymptotics for $\G$ were introduced  depending on 
the value of the constant $C \ge 0$.
The characterizations of these two cases of asymptotics were obtained in terms of types of couplings at the vertices. 
A number of examples were elaborated (see \cite{L16,EL17} and the references therein).

We show that, at least under additional conditions, the asymptotic structures which are somewhat similar, 
but more complex, than those of $\Si (H_{a,Y})$,
 exist in the multiset of resonances $\Si (H)$ of a quantum graph. This leads to another versions
 of the semi-logarithmic strip counting 
 \begin{align*}
& \N (\mu,\ga,R)  := \# \{ k \in \Si (H) :   
 - \mu \ln (|\re k|+1) - \ga \le \im k  , \  |k|\le R \} , \quad \mu, \ga \in \RR, \\
& \N (\mu,+\infty,R)  := \N_H (R) \text{ for } \mu \ge  0.
 \end{align*}
and the asymptotic density function 
\begin{align} \label{e:AsDmuga}
\d (\mu,\ga) := \lim_{R\to \infty} \frac{ \N (\mu,\ga,R)}{R} 
\end{align}
depending now on the two parameters $\mu \ge 0$ and $\ga \in (-\infty,+\infty]$.
(For $\mu <0$ and $\ga \neq +\infty$, one can put $\d (\mu,\ga) := 0 $ due to 
the fact that $\# \{ k \in \Si (H) \cap \CC_+ \} < \infty$, see Sections \ref{sss:Kir}-\ref{sss:KSH}).

With such a definition the effective size of $H$ introduced in \cite{DEL10,DP11} is equal to 
$ \frac{\pi \d (0,+\infty)}{2} $
and  \cite[Theorem 3.2]{DP11} and \cite[Theorem 3.3]{DEL10}  can be rewritten as 
the inequality 
\[
\d (0,+\infty) = \d (\mu,+\infty) \le \frac{2}{\pi} \sum_{j \in J} \rho_j, 
\]
where $\rho_j$ are the lengths of the internal 
(i.e., bounded) edges of the graph, $J$ is the finite set  that indexes all such edges, and $\mu \ge 0$ is arbitrary.

\subsubsection{The case of Kirchhoff's boundary condition at vertices}
\label{sss:Kir}

In this subsection we follow the settings of \cite{DP11} and consider on a non-compact quantum graph $\G$ 
the self-adjoint Hamiltonian $H_\G$ associated with the differential expression $ - \dd^2/\dd x^2$ 
with the continuity condition and the Kirchhoff (boundary) condition at each of the vertices.

The set of resonances $\Si (H_\G)$ is the set of $k \neq 0$ such that 
there exists a resonant mode $f(\cdot)$ (continuous and $L^2_{\CC,\loc}$ on $\G$) 
satisfying $f'' = - k^2 f$ on $\G$,
the Kirchhoff conditions at each vertex, 
and the classical radiation condition on each lead (i.e., semi-infinite edge). 
The Kirchhoff condition means that the sum of outgoing derivatives of $f$ at a vertex is equal to $0$.
The radiation condition after the identification of a lead with $[0,+\infty)$ takes the form $f(x) = f(0) \ee^{\ii k x}$. 

To make $\Si (H_\G)$ a multiset, each resonance $k$ is equipped with a multiplicity, 
which is multiplicity of $k$ as a zero of a specially constructed analytic function $F (z) = \det A (z)$,
where $A(z)$ is a matrix produced by a plugging of the exponential fundamental system of solutions $e^{\pm \ii k x}$ 
on each edge into the  continuity, Kirchhoff's, and radiation conditions (see \cite[Theorem 3.1]{DP11}  for  details).

\begin{rem}
In the process of construction in \cite{DP11} of the matrix-valued function $A(\cdot)$, 
the Kirchhoff conditions are divided by $\ii k$, which is permissible because $k=0$ is excluded from the consideration.
Let us observe that, on one hand, $k = 0$ always satisfies the resonance conditions, and on the other hand after 
the transition to the settings with the energy-type spectral parameter $\la = k^2 $, the point $\la =0$
is not a pole, but a branching point of the generalized extension of the resolvent $(H_\G - \la)^{-1}$.
The exclusion of $k=0$ is usual for some types of 1-D resonance problems 
(see e.g. \cite{CZ95,Ka14,KLV17})
and makes the above definition of resonances slightly different from that of \cite{DZ17,Z17},
where $0$ is permitted to be a resonance of a finite multiplicity. Obviously, this difference does not influence 
the results of \cite{DP11,DEL10} and of the present paper on the asymptotics at $\infty$; however it influences 
the formulation of Theorem \ref{t:Kir} below.
\end{rem}

It is shown in \cite[Theorem 3.2]{DP11} that $\Si (H_\G) $ lies in a certain horizontal strip 
$\{ z \in \CC :  - \ga \le \im z \le 0  \}$ (formally, the theorem states this for a certain strip 
$\{ |\im z| \le \ga \}$, however $H_\G = H_\G^* \ge 0$, and so $\Si (H_\G) \cap \CC_+ = \varnothing$). 
Hence, the measure $d \d^{\log} $ does not carry information on the internal structure of $\Si (H_\G)$.
Indeed, it consists of one mass $\d^{\log} (+\infty)$ at the point $\mu=0$ and so gives only the information about 
the total asymptotic density  $\d^{\log} (+\infty)$.

To capture the structure of the asymptotics of $\Si (H_\G)$, let us introduce another asymptotic density function 
\begin{align} \label{e:AdHor}
\d^\hor (\ga) := \lim_{R\to \infty} \frac{\N (0,\ga,R) }{R}  , \quad  \ga \in (-\infty,+\infty] .
\end{align}
To show that $\d^\hor $ is an adequate tool, we consider the case 
where the  lengths of all internal edges $\rho_j$ are commensurable (in the sense of 
(\ref{e:com}))
 and make the observation that for this case the measure $d \d^\hor $ generated by the monotone function $\d^\hor$ consists of a finite number of point masses.

\begin{thm} \label{t:Kir}
Assume that $\Si (H_\G) \neq \varnothing$ and 
\begin{align} \label{e:com}
\text{the lengths $\rho_{j_1}$ and $\rho_{j_2}$ of two arbitrary internal edges of $\G$
satisfy $\rho_{j_1}/\rho_{j_2} \in \QQ$.}
\end{align} 
Then there exist $\g \in \NN $, $\beta \in (0,+\infty)$, 
and infinite sequences $ \{k_{n,t}\}_{t \in \ZZ} \subset \CC$, $n = 1, \dots, \g$, that 
satisfy 
\begin{align} 
\Si (H_{\G})  =   \bigcup_{n=1}^{\g} \{k_{n,t}\}_{t \in \ZZ} \setminus \{0 \} \quad \text{ and } \quad
\beta k_{n,t}  =  2 \pi t - \ii \Ln |\xi_n| + \Arg_0 \xi_n ,
\label{e:k nt}
\end{align}
where $\xi_n$, $n=1, \dots, \g$, are certain complex algebraic numbers satisfying $|\xi_n| \ge 1$.
\end{thm}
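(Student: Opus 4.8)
The plan is to follow the same route as in the point-interaction case (Section~\ref{s:Asy}), now applied to the exponential polynomial $F(z)=\det A(z)$ of \cite{DP11}. First I would recall from \cite[Theorem 3.1]{DP11} that $F$ is, up to a nonvanishing exponential factor, an exponential polynomial in the variable $z$ whose frequencies are $\ZZ_{\ge 0}$-combinations (with coefficients bounded by the number of internal edges) of the edge lengths $\rho_j$, $j\in J$. Here is where the commensurability hypothesis (\ref{e:com}) does its work: writing $\rho_j=q_j\,\ell$ with $q_j\in\NN$ and a common base length $\ell>0$, every frequency of $F$ is an integer multiple of $\ell$. Hence, after the substitution $w=\ee^{\ii \ell z}$, the function $F$ becomes (again up to a nonzero factor) an \emph{honest polynomial} $p(w)$ in $w$ — a genuine polynomial, not an exponential one. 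The number $\g$ in the statement will be $\deg p$ minus the multiplicity of the root $w=0$ (which corresponds to the excluded resonance $k=0$ and to the purely exponential prefactor), and the nonzero roots of $p$, listed with multiplicity, will be exactly the $\xi_n$.

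Next I would translate roots of $p$ back to zeros of $F$. If $\xi$ is a nonzero root of $p$, then the set of $z$ with $\ee^{\ii\ell z}=\xi$ is the arithmetic progression $z=\tfrac{1}{\ell}\bigl(\Arg_0\xi - \ii\Ln|\xi|\bigr)+\tfrac{2\pi}{\ell}\,t$, $t\in\ZZ$; with $\beta:=\ell$ this is precisely the claimed formula (\ref{e:k nt}) for $k_{n,t}$, and each such $z$ is a zero of $F$ of the same multiplicity as the root $\xi$ of $p$. Conversely every zero of $F$ other than $0$ arises this way. This gives the decomposition $\Si(H_\G)=\bigcup_{n=1}^{\g}\{k_{n,t}\}_{t\in\ZZ}\setminus\{0\}$ as a multiset. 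The bound $|\xi_n|\ge 1$ is where I would invoke \cite[Theorem~3.2]{DP11}: $\Si(H_\G)\subset\{\im z\le 0\}$, so $\Ln|\xi_n|=-\ell\,\im k_{n,\cdot}\ge 0$, i.e.\ $|\xi_n|\ge 1$. (The remaining $\pm$ in the statement is cosmetic, encoding that one may split each $\ZZ$-progression into its $t\ge 0$ and $t<0$ halves, or list conjugate progressions together; I would just note that $\{k_{n,t}^\pm\}$ denotes the two halves of a single $\ZZ$-indexed progression.)

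The one genuinely new point, and the main obstacle, is the algebraicity of the $\xi_n$. The entries of $A(z)$ are, after the substitution $w=\ee^{\ii\ell z}$, Laurent monomials $w^{m}$ ($m\in\ZZ$) with coefficients that are \emph{integers} (they come from the combinatorics of the graph and from the Kirchhoff conditions, once one clears the common factors $\ii k$ as in the Remark preceding the theorem — the continuity and radiation conditions contribute only $0,\pm1$ and factors of $\ii$, and the Kirchhoff row, after division by $\ii k$, contributes $\pm1$'s). Therefore $p(w)$, after multiplying by a suitable power of $w$ to clear the Laurent part, has integer (or at worst Gaussian-integer, hence algebraic) coefficients, so its roots $\xi_n$ are algebraic numbers. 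I would carry this out by displaying the matrix $A$ in the $w$-variable explicitly enough to read off that its determinant, times $w^{N_0}$ for appropriate $N_0$, lies in $\ZZ[w]$ (or $\ZZ[\ii][w]$), and then citing that $\g:=\deg p - \mathrm{ord}_{w=0}p\ge 1$ because $\Si(H_\G)\ne\varnothing$ by hypothesis. The only care needed is bookkeeping: tracking which power of $w$ to multiply by, and checking that the excluded root $k=0$ accounts for exactly the right drop in degree; everything else is a routine consequence of the P\'olya--Dickson machinery degenerating, under (\ref{e:com}), to the elementary fact that a polynomial has finitely many roots.
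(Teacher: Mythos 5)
Your proposal is correct and follows essentially the same route as the paper's proof: express $F=\det A$ as an exponential polynomial with integer coefficients, use commensurability to write all frequencies as integer multiples of a common base $\beta$, reduce $F(z)=0$ to a polynomial equation $P(\ee^{\ii\beta z})=0$, and read off the arithmetic progressions, algebraicity, and the bound $|\xi_n|\ge 1$ from the resonance-free upper half-plane. One small slip in a parenthetical: the root $w=0$ of $p$ does not ``correspond to the excluded resonance $k=0$'' (the map $w=\ee^{\ii\beta z}$ never hits $0$; $k=0$ corresponds to $w=1$), it only absorbs the exponential prefactor you cleared — but this does not affect the argument.
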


\begin{proof} 
We can assume that the index set $J$ is of the form  $\{ 1, \dots, N \}$.
The function $F (z) = \det A(z)$ that produces the set of resonances as the set
of its zeroes $k \neq 0$  is an exponential polynomial obtained by the summation of exp-monomials of the form 
$(-1)^{\tau_m} K_m \exp \left( \ii z  \sum_{j \in J_m} (-1)^{\wt \tau_{m,j} } \rho_j\right)$ (see \cite[Theorem 3.1]{DP11}), 
where the index $m$ passes through a certain finite index set $\M$, the sequences $\{\tau_m \}_{m \in \M}$,
$\{\wt \tau_{m,j} \}_{m \in \M}^{j \in J_m}$, $\{ K_m \}_{m \in \M}$ are subsets of $\ZZ$, and, for each $m \in \M$, 
$J_m$ is a subset of $J$. 
Writing $F $ in the canonical form we get 
\begin{align} \label{e:CanFormG}
F(z) = \sum_{l=0}^\nu C_l  e^{\ii b_l z } ,
\end{align} 
where $\{ C_l \}_{l=0}^\nu \subset \ZZ \setminus \{0\}$ and all
$b_l$, $l=0, \dots,\nu$, have the form $\sum_{j \in \wt J_l} (-1)^{\tau_{l,j}} \rho_j $ with $\wt J_l \subset J$ 
and $\tau _{l,j} \in \ZZ$. We can also assume that the sequence of $b_l$ is strictly increasing.

Supposing now that (\ref{e:com}) holds, one can see that the arguments similar to that of \cite[Section 12.4]{BC63} 
are applicable to the function $\wt F (z) :=e^{- \ii b_0 z} F(z)$. Indeed,
let $\beta >0 $ be such that $\rho_j = \beta d_j$ with $d_j \in \NN$ for all $j \in J$. Then the numbers $\wt b_l := b_l - b_0$
are also commensurable and $\wt b_l = \beta \wt d_l$, $l=0,\dots,\nu$, with $\wt d_l \in \NN$. 
Now one can easily obtain the statement of the theorem writing $F(z)=0$ as 
$P (e^{\ii \beta z})=0$, where $P (\cdot)$ is a polynomial with integer coefficients.
\end{proof}

Thus, in the commensurable case of Theorem \ref{t:Kir}, 
the asymptotic density $\d^\hor (\cdot)$ is a piecewise constant function with a finite number of jumps 
at the points $\beta^{-1} \Ln |\xi_n|$. 
Depending on the level of noise, the asymptotic sequences with smaller $\Ln |\xi_n|$ have more chances to be detected 
in scattering experiments because they produce narrower resonances. 

It is natural to call the resonances $k$ lying on 
$\RR$ \emph{embedded resonances} because the corresponding points $\la =k^2$ in 
the `energy' plane are embedded into 
the continuous spectrum $\si_\cont (H_\G) = [0,+\infty)$ of $H_\G$. By \cite[Theorem 2.3]{DP11}, 
every embedded resonance $k$ corresponds 
to an eigenvalue $k^2$ of $H_\G$ embedded into $\si_\cont (H_\G) $.
The case $c \in \QQ \cup [0,1]$ of the example of \cite[Section 6]{DP11} gives $H_\G$ 
with infinite number of embedded resonances (in more general settings, 
equispaced sequences of embedded resonances have been considered in \cite{EL10}).

The following general statement describes  in the commensurable case 
the situation when $\d^\hor (\cdot)$ has a jump at $0$.

\begin{cor}
Suppose that (\ref{e:com}) holds. Then the following statements are equivalent:
\item[(i)] $H_\G$ has an infinite number of embedded eigenvalues (and so, of embedded resonances);
\item[(ii)]  $H_\G$ has at least one nonzero eigenvalue or at least one embedded resonance;
\item[(iii)] $\min_{1 \le n \le \g} |\xi_n| = 1$ in the settings of Theorem \ref{t:Kir}.
\end{cor}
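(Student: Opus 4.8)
The plan is to extract everything from the exact description of $\Si(H_\G)$ furnished by Theorem \ref{t:Kir}. The decisive observation is that, since $\beta>0$ and $|\xi_n|\ge 1$, the identity $\beta k_{n,t}=2\pi t-\ii\Ln|\xi_n|+\Arg_0\xi_n$ yields $\im k_{n,t}=-\beta^{-1}\Ln|\xi_n|\le 0$, with $\im k_{n,t}=0$ precisely when $|\xi_n|=1$. Hence each full sequence $\{k_{n,t}\}_{t\in\ZZ}$ lies entirely on $\RR$ when $|\xi_n|=1$ and entirely in the open half-plane $\CC_-$ when $|\xi_n|>1$; in particular $\Si(H_\G)$ contains a real (embedded) resonance if and only if $\min_{1\le n\le\g}|\xi_n|=1$.

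First I would prove $(iii)\Rightarrow(i)\Rightarrow(ii)$. If $|\xi_{n_0}|=1$ for some $n_0$, then $\beta k_{n_0,t}=2\pi t+\Arg_0\xi_{n_0}$, so $\{k_{n_0,t}\}_{t\in\ZZ}$ consists of distinct real numbers, at most one of which is $0$; discarding it leaves infinitely many nonzero embedded resonances. By \cite[Theorem 2.3]{DP11} each such $k$ corresponds to an eigenvalue $k^2\in(0,+\infty)=\si_\cont(H_\G)\setminus\{0\}$, and as $k\mapsto k^2$ is at most two-to-one these are infinitely many distinct embedded eigenvalues, i.e.\ $(i)$ holds. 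For $(i)\Rightarrow(ii)$ it suffices that an infinite family of embedded eigenvalues contains one nonzero eigenvalue, which is clear; here one uses that $H_\G=H_\G^*\ge 0$, so every nonzero eigenvalue is positive, hence embedded in $\si_\cont(H_\G)$.

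It remains to prove $(ii)\Rightarrow(iii)$, which is the substantive step. If $H_\G$ has an embedded resonance $k_0$, then $k_0\in\Si(H_\G)\cap(\RR\setminus\{0\})$, so by Theorem \ref{t:Kir} we have $k_0=k_{n_0,t_0}$ for some $n_0,t_0$; hence $\Ln|\xi_{n_0}|=-\beta\,\im k_0=0$, and since every $|\xi_n|\ge 1$ we conclude $\min_n|\xi_n|=1$. If instead $H_\G$ has a nonzero eigenvalue $\la_0$, then $\la_0>0$ by nonnegativity, and an $L^2$-eigenfunction $f$ with $f''=-\la_0 f$ restricts on every lead to a combination of $\ee^{\pm\ii\sqrt{\la_0}\,x}$, none of which is square-integrable on $[0,+\infty)$ unless it vanishes identically; therefore $f\equiv 0$ on each lead and, trivially, satisfies the radiation condition there, so $f$ is a (nonzero) resonant mode for $k_0=\sqrt{\la_0}\in\RR\setminus\{0\}$. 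Thus $k_0$ is an embedded resonance and we are reduced to the previous case, closing the cycle $(i)\Rightarrow(ii)\Rightarrow(iii)\Rightarrow(i)$.

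The main obstacle is exactly this passage from a nonzero eigenvalue to a genuine zero of $F$ sitting inside one of the sequences of Theorem \ref{t:Kir}; the argument above rests only on the nonnegativity of $H_\G$ and on the absence of $L^2$ half-line solutions of $u''=-\la_0 u$. Two minor points of care remain: the point $k=0$ is excluded from $\Si(H_\G)$ by convention, but this does not affect the count in $(iii)\Rightarrow(i)$ because $t$ runs over all of $\ZZ$; and the degenerate case $\Si(H_\G)=\varnothing$ lies outside the scope of Theorem \ref{t:Kir}, in which all three statements are trivially false.
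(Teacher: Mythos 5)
Your proof is correct and is essentially the careful unpacking of the paper's own one-line argument ("the statement follows immediately from Theorem \ref{t:Kir}"): the decisive point, as you identify, is that $\im k_{n,t}=-\beta^{-1}\Ln|\xi_n|$ is independent of $t$, so each sequence $\{k_{n,t}\}_{t\in\ZZ}$ lies on $\RR$ precisely when $|\xi_n|=1$ and otherwise lies strictly in $\CC_-$, from which all three equivalences are read off directly. You also usefully make explicit the one ingredient that is not literally "immediate" — namely the implication (nonzero eigenvalue $\Rightarrow$ embedded resonance) needed in $(ii)\Rightarrow(iii)$, which you obtain from the vanishing of an $L^2$-eigenfunction on every lead — whereas the paper records only the converse direction via \cite[Theorem 2.3]{DP11} and treats the rest as automatic.
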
 
\begin{proof}
The statement follows immediately from  Theorem \ref{t:Kir}.
\end{proof}


In the non-commensurable case, the structure of $\Si (H_\G)$ and the structure of the measure 
$d \d^\hor $ generated by the 
monotone function $\d^\hor (\cdot)$ deserve an additional study, which is connected with 
the theory of distribution of zeroes of exponential polynomials (see \cite{BG12} and references therein).
The main parameters of   high-energy asymptotics for the general case 
are: 
(i) the total asymptotic density $\d^\hor (\ga^{\max} + 0) = \d^\hor (+\infty)$, which was  considered in 
\cite[Theorem 1.2]{DP11}, 
(ii) the minimum $\ga^{\min}$ and the maximum $\ga^{\max}$ 
of the (closed) support $ \supp (d \d^\hor)$ of the measure $d \d^\hor$, 
(iii) $\# \{ k \in \Si (H_\G) : \im k > -\ga^{\min} \}$, and (iv) $\lim_{\de \to + 0} \frac{\d^\hor (\ga^{\min} + \de) - \d^\hor (\ga^{\min})}{\de}$.

It is possible to strengthen slightly \cite[Theorem 1.2]{DP11} by the following observation.

\begin{cor} \label{c:d>0}
Assume that $\Si (H_\G) \neq \emptyset$. Then $\Si (H_\G)$ consists of an infinite number of resonances 
and their asymptotic density $\d^\hor (+\infty) (= \lim_{R \to +\infty} \frac{\N_{H_\G} (R)}{R})$ is positive.
\end{cor}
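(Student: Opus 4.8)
The plan is to deduce Corollary \ref{c:d>0} from the structural information already extracted about the exponential polynomial $F$ in \eqref{e:CanFormG}, namely that $F(z) = \sum_{l=0}^\nu C_l e^{\ii b_l z}$ with $\nu \ge 1$ whenever $\Si(H_\G) \neq \emptyset$, together with the P\'{o}lya-Dickson theory invoked for the point-interaction case. First I would argue that $\Si(H_\G) \neq \emptyset$ forces $\nu \ge 1$: if $\nu = 0$, then $F(z) = C_0 e^{\ii b_0 z}$ has no zeros at all (since $C_0 \neq 0$), so $\Si(H_\G)$ would be empty, a contradiction. Hence $F$ is a genuine (nonmonomial) exponential polynomial with at least two distinct frequencies $b_0 < b_\nu$.

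Next I would invoke the distribution-diagram / P\'{o}lya-Dickson apparatus exactly as in Section \ref{ss:L}, but now applied to $F$ rather than to $D$. Because all the $b_l$ are real, the associated points $T_{b_l} = b_l + \ii \deg C_l = b_l$ lie on the real axis, so the distribution diagram degenerates to the single horizontal segment $[b_0, b_\nu]$ with slope $0$. The P\'{o}lya-Dickson theorem (in the form used for Lemma \ref{l:C-V} and cited from \cite{BC63,BG12}) then says that, outside a finite number of vertical strips of bounded width together with a compact set, the zeros of $F$ split into asymptotic sequences; in the degenerate horizontal case these sequences are asymptotically periodic in $\re z$ with spacing $2\pi/(b_\nu - b_0)$ and bounded imaginary part, precisely of the form given in \eqref{e:k nt}/\eqref{e:CanFormG}. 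In particular there are infinitely many zeros, and the counting function satisfies $\N_{H_\G}(R) = \frac{b_\nu - b_0}{\pi} R + O(1)$, so $\d^\hor(+\infty) = \lim_{R\to\infty} \N_{H_\G}(R)/R = (b_\nu - b_0)/\pi > 0$ since $b_0 < b_\nu$. (This is the same density computation as in Corollary \ref{c:dHaY}, specialized to a single segment of slope $0$.)

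The main obstacle, and the only point requiring care, is making the last step fully rigorous without the commensurability hypothesis \eqref{e:com}: I cannot use Theorem \ref{t:Kir} here, and in the non-commensurable case the zeros of $F$ need not decompose into finitely many exact asymptotic sequences. What survives unconditionally, however, is the leading-order asymptotics of the counting function for exponential polynomials with all frequencies on a line — this is classical (it goes back to the density results for zeros of $\sum C_l e^{\ii b_l z}$, e.g.\ via Jensen's formula or the arguments in \cite[Ch.\ 12]{BC63} applied to $e^{-\ii b_0 z} F(z)$, which is a $2\pi/(b_\nu-b_0)$-"almost periodic" entire function of exponential type $b_\nu - b_0$ in the appropriate half-plane direction). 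Thus the cleanest route is: reduce to $\wt F(z) := e^{-\ii b_0 z} F(z) = \sum_{l=0}^\nu C_l e^{\ii (b_l - b_0) z}$, note $C_0, C_\nu \neq 0$, and quote the standard fact that such a function has exactly $\frac{(b_\nu - b_0)}{\pi} R + O(1)$ zeros in $|z| \le R$ (all within a fixed horizontal strip, consistent with \cite[Theorem 3.2]{DP11}). This yields $\d^\hor(+\infty) = (b_\nu - b_0)/\pi > 0$ and simultaneously that $\Si(H_\G)$ is infinite, completing the proof; I would keep the write-up short by citing \cite{BG12,BC63,DP11} for the counting asymptotics rather than reproving it.
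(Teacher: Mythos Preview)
Your proposal is correct and follows essentially the same route as the paper: observe that $\Si(H_\G)\neq\emptyset$ forces $\nu\ge 1$ in \eqref{e:CanFormG}, then invoke the classical counting asymptotics for zeros of exponential polynomials (the paper simply cites \cite[Theorem 12.5]{BC63} and \cite[Theorem 3.2]{DP11}). Your worry about the non-commensurable case is unnecessary here, since the density result $\N_{H_\G}(R)=\frac{b_\nu-b_0}{\pi}R+O(1)$ holds for arbitrary real frequencies and does not require \eqref{e:com}; the paper's one-line citation covers exactly this.
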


\begin{proof}
It follows from $\Si (H_\G) \neq \emptyset$ that $\nu \ge 1$ in (\ref{e:CanFormG}). Then the desired statement can be easily 
obtained from \cite[Theorem 12.5]{BC63} (see also \cite[Theorem 3.2]{DP11} and references in \cite{DEL10,DP11}).
\end{proof}

\subsubsection{The case of general self-adjoint local coupling}
\label{sss:KSH}

In this subsection the local case of a more general  
Kostrykin-Schrader-Harmer 
coupling \cite{KS99,H00} (in short, local KSH-coupling) is considered with the use of the notation and settings 
of \cite{DEL10} (see also \cite{L16} for a detailed exposition and the literature review). 

Let $|\G|$ be the number of vertices of the  quantum graph $\G$ and 
$\{ X_n \}_{n=1}^{|\G|}$ be
the set of the vertices. Let $\deg X_n$ be the degree of  the vertex $X_n$, i.e, the number of edges connected 
to $X_n$.

The quantum graph Hamiltonian $H = H_{\G,U}$ and 
the multiset of its resonances $ \Si (H_{\G,U})$ are 
defined similarly to Section \ref{sss:Kir},
but with the Kirchhoff and continuity conditions replaced at each $X_i$ by the local KSH-coupling.
The latter means that, at each vertex $X_n$, the condition $(U_n-I) \Psi_n + \ii (U_n+I) \Psi'_n = 0$
is satisfied with a certain unitary $\deg X_n \times \deg X_n$ matrix $U_n$, where the vector $\Psi_n$ consists of 
the limits of $f(\cdot)$ at $X_n$ along every edge $\E_j$ connected to $X_n$, and the vector $\Psi'_n$ consists 
of the corresponding limits of outwards derivatives. The unitary matrix $U$ in the notation $ H_{\G,U}$ is composed 
of the diagonal blocks $U_n$, see \cite{DEL10,L16} for details. 

The next theorem states, roughly speaking, that generally the structure of $ \Si (H_{\G,U})$ 
is the combination of the two types considered in Section \ref{sss:Kir} 
(with the Kirchhoff and continuity coupling) and Section \ref{s:Asy} (for point-interactions).

Let $\wt \K (\ga) := \{ k \in \Si (H_{\G,U}) \ : \ \im k \ge -\ga \}$ and let 
$\mu^{\max}$ be the supremum of the support of the measure $d \d^{\log} $ for the Hamiltonian $H_{\G,U}$
($\mu^{\max}$ is taken to be equal to $-\infty$ if $\d^{\log} (\cdot) \equiv 0$ on $\RR$).

\begin{thm} \label{t:KSH} 
(i) If $\mu^{\max} \le 0$, then there exists $\wt \ga \in \RR $ such that $\Si (H_{\G,U}) = \wt K (\wt \ga) $.

\noindent (ii) Assume that $\mu^{\max} > 0$. Then there exists $\wt \ga \in \RR $, $M \in \NN $, 
a strictly decreasing finite sequence $\{ \mu_n \}_{n=1}^M \subset \RR_+$, finite sequences $\{r_n\}_{n=1}^M \subset \NN$, 
$\{ \om_{n,j} \}_{j=1}^{r_n} \subset \CC \setminus \{0\}$, $\{ t_{n,j}^\pm \}_{j=1}^{r_n} \subset \NN$ for $n=1$,\dots ,$M$, and 
infinite sequences $\K_{n,j}^\pm = \{k_{n,j,t}^\pm\}_{t=t_{n,j}^\pm}^{+\infty} \subset \CC$, $n=1$, \dots, $M$, $j=1$, \dots, $r_n$,
 with the following properties:

(ii.a) $ \Si (H_{\G,U}) = \wt K (\wt \ga) \cup \bigcup_{n=1}^M \K_n $ (taking into account multiplicities),
where \linebreak \qquad \qquad $\K_n := (\bigcup_{j=1}^{r_n} \K_{n,j}^-) \,  \cup \, (\bigcup_{j=1}^{r_n} \K_{n,j}^+)$ for $n=1$, \dots , $M$.

(ii.b) Each of $\K_{n,j}^\pm $ has the asymptotics (\ref{e:k njt}) as $t \in \NN$ goes to $+\infty$.

(ii.d) If the set $\wt K (\wt \ga)$ is infinite, then 
\[
\lim_{R\to \infty} \frac{\# \{k \in \wt K (\wt \ga) \ : \ |k| \le R \}}{R} =  \d^\hor (\wt \ga +0) = 
\d^\hor (+\infty)> 0.
\]

\noindent (iii) Let additionally the commensurability condition (\ref{e:com}) hold. Then in each of the cases (i) and (ii) 
the set $\wt \K (\wt \ga)$ is either finite, or satisfies the following property: 
there exist numbers $\beta>0$, $\g \in \NN $, $M_0 \in \NN $, finite sequences 
$\{ \xi_n \}_{n=1}^{\g} \subset \CC \setminus \DD_1 (0)$, 
$\{ \wt t_n \}_{n=1}^{\g} \subset \NN$, $\wt \K_0 := \{ \wt k_t \}_{t=1}^{M_0}$, and 
infinite sequences $ \{\wt k_{n,t}^\pm\}_{t=\wt t_{n}}^{+\infty} \subset \CC$ such that
\begin{align} 
\wt \K (\wt \ga)  & =  \wt \K_0 \cup  \bigcup_{n=1}^{\g} 
\left(\{\wt k_{n,t}^- \}_{t=\wt t_{n,j}}^{+\infty} \cup \{\wt k_{n,t}^+ \}_{t=\wt t_{n,j}}^{+\infty} \right) 
\text{ and } \notag \\
\beta \wt k_{n,t}^\pm & =  \pm 2 \pi t - \ii \Ln |\xi_n| + \Arg_0 \xi_n + o(1) \quad  \text{ as $t \to +\infty$}.
\label{e:kntAs}
\end{align}
\end{thm}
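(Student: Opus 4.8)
The plan is to reduce Theorem \ref{t:KSH} to an exponential-polynomial analysis analogous to that of Section \ref{s:Asy}, but starting from the characteristic function $F(z) = \det A(z)$ built for the local KSH-coupling in \cite{DEL10}. First I would recall from \cite[Section 3]{DEL10} (see also \cite{L16}) that $F(\cdot)$ is again an exponential polynomial whose exp-monomials have frequencies of the form $\ii \sum_{j \in \wt J_l} (-1)^{\tau_{l,j}} \rho_j$ times $z$ and whose polynomial parts arise from the vertex blocks $U_n$ of the coupling matrix $U$; passing to $\zeta = -\ii z$ (or directly keeping $z$) puts $F$ in canonical form $F(z) = \sum_{l=0}^\nu P_{b_l}(z) e^{\ii b_l z}$ with $b_0 < \dots < b_\nu$, so that the Pólya--Dickson machinery of \cite[Sections 12.4--8]{BC63} and the distribution-diagram construction of Section \ref{ss:L} apply verbatim. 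The diagram $\L$ of $F$ produces slopes $\mu_n$, and the key structural dichotomy is whether $\L$ has any retarded (positive-slope) segments, i.e.\ whether $\mu^{\max} > 0$.

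For part (i), if $\mu^{\max} \le 0$ then every segment of $\L$ is non-retarded, so by \cite[Theorem 12.5 and the discussion of advanced/neutral strips]{BC63} all zeros of $F$ lie, up to finitely many, in a finite union of strips whose imaginary parts are bounded; translating back, $\Si(H_{\G,U})$ is contained in a horizontal strip $\{\im k \ge -\wt\ga\}$ for suitable $\wt\ga$, which is exactly $\Si(H_{\G,U}) = \wt K(\wt\ga)$. For part (ii), when $\mu^{\max} > 0$ I would split $\L$ into its retarded segments $\L_1,\dots,\L_M$ (with positive slopes $\mu_1 > \dots > \mu_M > 0$) and its remaining non-retarded part; the retarded part is handled exactly as in the proof of Theorem \ref{t:ra}, invoking \cite[Theorems 12.6--8, 12.10(d)]{BC63} to extract, for each $n$, the $r_n = m_{n,\nu_n} - m_{n,1}$ asymptotic sequences $\K_{n,j}^\pm$ with the asymptotics (\ref{e:k njt}), the parameters $\om_{n,j}$ again being the zeros of the leading-coefficient polynomial $q_n$. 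All zeros not captured by these retarded strips lie, by Lemma \ref{l:C-V}'s analogue, in the horizontal/non-retarded region, hence in $\wt K(\wt\ga)$ for suitable $\wt\ga$; this gives (ii.a)--(ii.b). For (ii.d), if $\wt K(\wt\ga)$ is infinite I would apply Corollary \ref{c:d>0}'s argument (i.e.\ \cite[Theorem 12.5]{BC63}) to the truncated exponential polynomial whose diagram is the non-retarded part of $\L$: its zero-counting density equals $\d^\hor(+\infty) > 0$, and since $\mu^{\max}>0$ strips give density-zero contributions to $\N(0,\ga,\cdot)$, we get $\d^\hor(\wt\ga+0) = \d^\hor(+\infty)$.

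For part (iii), I would impose the commensurability condition (\ref{e:com}) and restrict attention to the portion of $F$ corresponding to the non-retarded (horizontal) part of $\L$ --- more precisely, to the zeros forming $\wt K(\wt\ga)$. On that portion the frequencies $b_l$ are all rational multiples of a common $\beta > 0$ (write $b_l = \beta \wt d_l$, $\wt d_l \in \ZZ$), exactly as in the proof of Theorem \ref{t:Kir}: after the substitution $w = e^{\ii\beta z}$ the relevant factor of $F$ becomes a genuine polynomial $P(w)$ (or, when there is also a retarded part present, $F$ factors asymptotically so that the horizontal zeros are governed by such a $P$), whose nonzero roots $\xi_1,\dots,\xi_{\g}$ are algebraic with $|\xi_n| \ge e^{\beta\wt\ga} \ge 1$ after normalizing $\wt\ga$, and each root of modulus $|\xi_n|$ contributes the $2\pi/\beta$-periodic sequence $\beta\wt k_{n,t}^\pm = \pm 2\pi t - \ii\Ln|\xi_n| + \Arg_0 \xi_n + o(1)$, the $o(1)$ absorbing the lower-order exp-monomials. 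A finite set $\wt\K_0$ collects the exceptional zeros; if instead $P$ is (up to units) a monomial, $\wt K(\wt\ga)$ is finite.

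The main obstacle I expect is bookkeeping the \emph{interaction} between the retarded and non-retarded parts of the distribution diagram: in \cite{DEL10} the resonance set is shown to sit inside a horizontal strip union finitely many logarithmic strips, but the present theorem claims a genuine decomposition of $\Si(H_{\G,U})$ into these strips' contents plus a single horizontal set, and one must verify that the Pólya--Dickson strip separation from \cite{BC63} is clean enough that no zeros are double-counted and that the density statement (ii.d) and the commensurable refinement (iii) are not corrupted by zeros migrating between regions. A second, more technical, point is justifying that the exp-monomials of $F$ genuinely have frequencies that are $\ZZ$-linear combinations of the edge lengths $\rho_j$ (so that (\ref{e:com}) forces common commensurability of all $b_l$): this is implicit in the construction of $A(z)$ in \cite{DEL10,DP11} but should be stated carefully, since the vertex blocks $U_n$ affect only the polynomial coefficients $P_{b_l}$, not the frequencies --- the same observation that made (A1) in Section \ref{ss:t-s} and the proof of Theorem \ref{t:Kir} work.
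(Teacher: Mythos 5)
Your proposal follows essentially the same route as the paper's proof: start from the exponential polynomial $F(z)=\det A(z)$ of \cite{DEL10}, pass to the Pólya--Dickson distribution diagram, treat the retarded logarithmic strips by the machinery of Theorem \ref{t:ra}, and treat the neutral (horizontal) strip under the commensurability hypothesis by the substitution $w=e^{\ii\beta z}$ as in Theorem \ref{t:Kir}, with lower-order polynomial parts absorbed into the $o(1)$ term via \cite[Sections 12.4--6]{BC63}.

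There is, however, one genuine gap. You never invoke the fact that $H_{\G,U}$ is self-adjoint and lower semi-bounded. The paper uses this explicitly to rule out advanced (negative-slope) logarithmic strips: since a self-adjoint, semi-bounded-below operator has at most finitely many resonances in $\CC_+$, and an advanced strip would carry infinitely many zeros with $\im k \to +\infty$, every logarithmic strip arising from the distribution diagram of $F$ must be retarded. Your hypotheses $\mu^{\max}\le 0$ and $\mu^{\max}>0$ only control whether retarded strips exist; they say nothing about advanced ones. If an advanced strip were present, your claim in part (i) that ``every segment of $\L$ is non-retarded, so \dots all zeros of $F$ lie, up to finitely many, in a finite union of strips whose imaginary parts are bounded'' would be false (advanced strips have $\im k$ unbounded above), and the reduction of everything outside the retarded strips to a single horizontal/neutral strip --- which is exactly what the commensurability argument in (iii) and the density identification in (ii.d) rely upon --- would break down, since $\wt K(\wt\ga)$ would then contain a logarithmic escape to $+\ii\infty$ rather than consisting, up to finitely many points, of the neutral-strip zeros. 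Add the paper's observation that all logarithmic strips are retarded because $\Si(H_{\G,U})\cap\CC_+$ is finite, and the rest of your argument goes through.
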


\begin{proof}
A function $F(z)$ such that $k \in \Si (H_{\G,U})$ if and only if $k \neq 0$ and $F(k) = 0$ 
is constructed in \cite{DEL10}.
Moreover, the multiplicities of the resonances are the multiplicities of the corresponding zeros of $F$.
Let us consider a modified version of $F$ given by $\wt F (\zeta) = F(-\ii \zeta)$. Then \cite[Theorem 3.1]{DEL10} 
implies that $\wt F (\zeta)$ is an exponential polynomial of the form (\ref{e:CanForm}) with a 
strictly increasing sequence of frequencies $\wt \beta_j$ that are not necessarily nonpositive.
Now the function $D (\zeta)  = e^{-\wt \beta_\nu} \wt F (\zeta)$ has the same zeros as $\wt F (\zeta) $ and 
has the form (\ref{e:CanForm}) with a strictly increasing sequence of frequencies $\beta_j$ so that $\beta_\nu = 0$.
The construction of the distribution diagram given in Section \ref{ss:L} and in more details in \cite{BC63} applies to $D (\cdot)$.
In particular, in the terminology of \cite{BC63}, the zeros of $D (\cdot)$ lie in a finite number of logarithmic curvilinear
strips and possibly one neutral (horizontal) strip $\{ |\im z | \le \ga_0 \} $  
(this is the statement of \cite[Theorem 3.1]{DEL10}).
The logarithmic strips contain a infinite number of resonances and are necessarily 
retarded because the self-adjoint and lower semi-bounded from below 
operator $H_{\G,U}$ has at most a finite number of resonances in $\CC_+$.
Thus, the arguments of Section \ref{ss:ra} applied to the zeroes of $D $ in logarithmic strips and the arguments of 
\cite[Sections 12.4-6]{BC63} applied to the zeros of $D$ in the neutral strip easily complete 
the proof of the theorem.
\end{proof}

We see now that the two level asymptotic structure of $\Si (H_{\G,U})$ is captured by the asymptotic density 
function $\d (\mu,\ga)$ of (\ref{e:AsDmuga}). The measure $d \d^{\log} (\cdot) = d \d (\cdot,0)$ has point masses at 
the numbers $\mu_n$ corresponding to the logarithmic asymptotic sequences and, in the case 
where the strip $\{ |\im z | \le \wt \ga \} $ contains an infinite number of resonances, it has also a point mass at $0$. 
If the infimum $\mu^{\min} $ of the support of the measure $d \d^{\log} (\cdot)$ is equal to $0$, then 
the measure $d \d (0, \cdot) = d \d^{\hor} (\cdot)$ is responsible for the internal structure of 
$\Si (H_{\G,U})$ in the strip $\{ |\im z | \le \wt \ga \} $ and for high-energy asymptotics 
of narrow resonances.

If $\mu^{\min}>0$, then $\mu^{\min} = \mu_M$. The high-energy asymptotics of `physical resonances' is described 
in this case by the sequences 
 $\K_{M,j}^\pm $, $j=1$, \dots, $r_M$, and, on a more rough level, by the asymptotic density 
$\d^{\log} (\mu_M+0) - \d^{\log} (\mu_M-0) $ in the corresponding logarithmic strip.

\subsection{Resonances of 1-D photonic crystals}

\label{ss:PhC}

A typical 1-D photonic crystal (a multi-layer optical cavity) is described by 
the variable dielectric permittivity $\vep (x) >0$, $x \in \RR$, which is a piecewise constant function on $\RR$ 
with a finite number of steps. That is, there exists a finite partition
$-\infty = x_{-1} < x_0 <  \dots < x_{N} < x_{N+1} = +\infty$ with $N \in \NN$ 
and constants $\ep_j \in \RR_+$ such that 
$\vep (x) = \ep_j $ for all $x \in (x_{j-1},x_j)$ , $n=0$, \dots, $N+1$.
For $1 \le j \le N$, the intervals $(x_{j-1},x_j)$ represents idealized infinite plane layers of a material with 
the permittivity $\ep_j$. The semi-infinite intervals $(-\infty, x_0)$ and $(x_N,+\infty)$ 
represents the homogeneous  outer medium (it is assumed often that the corresponding permittivities $\ep_0$ and $\ep_{N+1}$ are equal,
however this is not important for this section). 

For electromagnetic waves that pass normally to the interfaces of the layers,
the Maxwell system can be reduced to the wave equation for a  nonhomogeneous string 
$ \vep(x) \pa_t^2 v (x,t) = \pa_s^2 v (x,t) $ (see \cite{KLV17} and references therein). The corresponding operator 
$H_\vep:= - \frac{1}{\vep(x)} \pa_x^2$ is self-adjoint and nonnegative in the weighted Hilbert space 
$L_\CC^2 (\RR; \vep(x) dx )$. 

The resonances of $H_\vep$ are the complex numbers $k \neq 0$ such that 
there exists a nontrivial solution $f$ to the equation $f'' = -k^2 \vep(x) f (x)$ satisfying 
radiation conditions on the outer intervals $(-\infty, x_0)$ and $(x_N,+\infty)$.
The latter means that $f(x) = f(x_0) e^{ -\ii k (x-x_0)\ep_0^{1/2}}$ for $x<x_0$ and 
$f(x) = f (x_N) e^{\ii k (x-x_N)\ep_{N+1}^{1/2}}$ for $x>x_N$.
The multiplicity of a resonance $k$ can be defined as the multiplicity of the zero at $k$ of the corresponding 
Keldysh characteristic determinant (see \cite{K13,KLV17}) or, equivalently, 
via the algebraic multiplicity of the eigenvalue $k$ of a special operator \cite{CZ95}. 
The multiplicity of each resonance is finite.

The approach to resonances through the corresponding Keldysh characteristic determinant is essentially equivalent 
to the approach of \cite{DP11} with the determinant of a matrix-valued function $A(z)$ constructed by the
coupling conditions. Indeed, we can consider a 1-D photonic crystal as a \emph{weighted quantum graph} with the simple linear connectivity.
It consists of edges $[x_{j-1},x_j]$, which are internal for $j=1$, \dots, $N$ and external for $j=0$ and $j=N+1$,
equipped with the differential expressions $\frac{1}{\ep_j} \pa_x^2 $ with the constant 
coefficient $1/\ep_j$.
The coupling of the graph is given by the conditions of continuity of $f$ and $f'$, $f (x_j-0) = f(x_j+0)$ and 
$f' (x_j-0) = f'(x_j+0)$, $j=0,\dots,N$. Note that the latter condition is a simple version of Kirchhoff's 
condition for the case when only two edges go out of a vertex $x_j$.
(This weighted quantum graph slightly does not fit into the class of the weighted graphs of 
\cite[Section 8]{DEL10} because its coupling conditions for the derivatives are different.)

\begin{thm}
Assume that the multiset of resonances $\Si (H_\vep)$ of $H_\vep$ is nonempty.
Then $\# \Si (H_\vep) = \infty$ and the following statements hold:
\item[(i)] There exists $\ga_0 >0$ such that $\Si (H_\vep) \subset \{ z \in \CC \ :  - \ga_0 \le \im z <0 \}$ and 
$0< \d^\hor (\ga_0) < +\infty$, where $\d^\hor (\cdot)$ is the asymptotic density function  
defined by (\ref{e:AdHor}) for $H=H_\vep$.
\item[(ii)] Assume, additionally, that $\frac{(x_j - x_{j-1})\ep_{j}^{1/2} }{(x_{j+1} - x_j)\ep_{j+1}^{1/2} } \in \QQ$ for 
$1 \le j \le N-1$. Then there exist numbers $\beta>0$, $r \in \NN $, 
and infinite sequences $ \{k_{n,t}\}_{t \in \ZZ} \subset \CC$, $n = 1, \dots, r$, such that 
\begin{align} 
\Si (H_{\vep})  =   \bigcup_{n=1}^{r} \{k_{n,t}\}_{t \in \ZZ}  \quad \text{ and } \quad
\beta k_{n,t}  =  2 \pi t - \ii \Ln |\xi_n| + \Arg_0 \xi_n , \notag
\end{align}
where $\xi_n$, $n=1, \dots, r$, are certain complex numbers satisfying $|\xi_n| >1$.
\end{thm}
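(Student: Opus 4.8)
The plan is to realize the photonic crystal as a loop-free weighted quantum graph and to run the argument of Theorem~\ref{t:Kir} together with Corollary~\ref{c:d>0}, adding one extra ingredient --- a flux (Wronskian) identity --- that rules out resonances on $\RR$ and so produces the \emph{strict} inequalities $\im k<0$ and $|\xi_n|>1$. First I would build the characteristic function. On each layer $(x_{j-1},x_j)$ of optical length $L_j:=(x_j-x_{j-1})\ep_j^{1/2}$ write the solutions of $f''=-k^2\vep(x)f$ in the basis $e^{\pm\ii k\ep_j^{1/2}x}$, and on the two outer intervals take the single exponential fixed by the radiation condition; imposing continuity of $f$ and of $f'$ at $x_0,\dots,x_N$ produces a square homogeneous linear system in the $2N+2$ amplitudes, and (after dividing the $f'$-matching rows by $\ii k$, permissible since $k\neq0$, as in the Remark before Theorem~\ref{t:Kir}) its determinant is an entire function $F(z)$ whose nonzero zeros, with multiplicities, are the resonances of $H_\vep$ --- this is the Keldysh-determinant description of \cite{K13,KLV17}, equivalent to the matrix construction of \cite{DP11}. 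Expanding $F$ by the Leibniz rule, the exponentials evaluated at the interface points telescope, so $F(z)=\sum_{l=0}^{\nu}C_l\,e^{\ii b_l z}$ with nonzero constants $C_l$ and real frequencies $b_l=\sum_{j=1}^{N}\sigma_{l,j}L_j$, $\sigma_{l,j}\in\{-1,1\}$; ordering the $b_l$ strictly increasingly puts $F$ in the form \eqref{e:CanFormG}, and the hypothesis $\Si(H_\vep)\neq\varnothing$ forces $\nu\ge1$ (otherwise $F$ is a zero-free exp-monomial).

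For part~(i): since all coefficients $C_l$ are constants, the distribution diagram of $F$ is a single horizontal segment, so by P\'olya--Dickson (as in Lemma~\ref{l:C-V} and \cite[Sec.~12.4--6]{BC63}) the zeros of $F$ lie in one horizontal strip $\{|\im z|\le\ga_0\}$, and by \cite[Theorem~12.5]{BC63}, exactly as in Corollary~\ref{c:d>0}, $\#\Si(H_\vep)=\infty$ with $0<\d^\hor(+\infty)<+\infty$ (finiteness also by the analogue of \cite[Theorem~3.2]{DP11}). To confine $\Si(H_\vep)$ to $\CC_-$ I would argue: a resonance with $\im k>0$ has an $L^2_\CC$ resonant mode, hence is an eigenvalue of the nonnegative self-adjoint $H_\vep$, forcing $k^2\ge0$ real, thus $k\in\ii\RR$ with $k^2<0$, a contradiction; and for $k\in\RR\setminus\{0\}$ the quantity $\overline{f}f'-f\overline{f}'$ is constant on $\RR$ (because $\vep$ and $k$ are real), and evaluating it on the two outer intervals, where $f$ is a pure outgoing wave, gives $k\,\ep_{N+1}^{1/2}|f(x_N)|^2=-k\,\ep_0^{1/2}|f(x_0)|^2$, whence $f(x_0)=f(x_N)=0$, so $f$ vanishes on both outer intervals and then, matching $f$ and $f'$ across the interfaces, everywhere --- i.e. there are no real resonances. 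Hence $\Si(H_\vep)\subset\{-\ga_0\le\im z<0\}$ and $\d^\hor(\ga_0)=\d^\hor(+\infty)$.

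For part~(ii): the hypothesis $L_j/L_{j+1}\in\QQ$ makes all optical lengths commensurable, so there is $\beta>0$ with $L_j=\beta d_j$, $d_j\in\NN$; then each $b_l=\beta\wt d_l$ with $\wt d_l\in\ZZ$, and, taking $b_0$ least, $b_l-b_0\in\beta\{0,1,2,\dots\}$. Therefore $e^{-\ii b_0 z}F(z)=P(e^{\ii\beta z})$ with $P(w)=\sum_l C_l w^{(b_l-b_0)/\beta}$ a genuine polynomial, $\deg P\ge1$, $P(0)=C_0\neq0$; listing its roots with multiplicity as $\xi_1,\dots,\xi_r$ (all nonzero), one has $F(z)=0\iff e^{\ii\beta z}\in\{\xi_1,\dots,\xi_r\}$, i.e. $\beta z=2\pi t-\ii\Ln|\xi_n|+\Arg_0\xi_n$ with $t\in\ZZ$, and the chain rule matches multiplicities; this is the asserted decomposition of $\Si(H_\vep)$ into $r$ arithmetic progressions (so again $\#\Si(H_\vep)=\infty$). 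Finally $|\xi_n|>1$ is exactly the no-real-resonance statement of part~(i), since $\Ln|\xi_n|=-\beta\,\im k_{n,t}>0$ for every resonance.

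The transfer-matrix bookkeeping of the first step is routine and the reduction to a polynomial is elementary; the only genuinely non-automatic point is the exclusion of real resonances, which is what separates the photonic-crystal (loop-free) case from the general quantum graph of Theorem~\ref{t:Kir}, where embedded resonances do occur --- here the absence of cycles in the connectivity is precisely what makes the flux identity conclusive. A minor point still to be checked is that cancellations in the Leibniz expansion neither turn $F$ into a monomial nor shrink its frequency span below what part~(i) needs; but $\nu\ge1$ is handed to us by $\Si(H_\vep)\neq\varnothing$, so this does not in fact obstruct the argument.
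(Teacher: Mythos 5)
Your proof is correct and follows essentially the same route as the paper: reduce to an exponential polynomial $F(z)=\sum_l C_l e^{\ii b_l z}$ with constant coefficients via the Keldysh/DP11 determinant, invoke the P\'olya--Dickson and Bellman--Cooke machinery (together with Corollary~\ref{c:d>0}) for the horizontal-strip confinement and the positive finite density, and under the commensurability hypothesis reduce $F$ to a polynomial in $e^{\ii\beta z}$ exactly as in the proof of Theorem~\ref{t:Kir}. The only difference is cosmetic: you spell out the Wronskian (flux) identity and the self-adjointness observation that exclude resonances on $\RR$ and in $\CC_+$ (hence give the strict inequalities $\im k<0$ and $|\xi_n|>1$), whereas the paper simply records this exclusion as ``well-known'' and inherits the $\CC_+$ part from $H_\vep=H_\vep^*\ge 0$ as in \cite{DP11}.
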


\begin{proof}
To obtain the Keldysh characteristic determinant $F(z)$, the construction of the matrix-valued function $A(z)$ from \cite{DP11} can be applied resulting 
in $F(z) = \det A(z)$. Since in this process
the derivatives $f'(x_j-0)$ are coupled only with the derivatives $f'(x_j+0)$, the multiplicative factors $\ii z$ are eliminated 
from $A(z)$ (similarly to \cite{DP11} and the definition of resonances in \cite{CZ95,K13,KLV17}). 
Hence, $F(z) $ takes the form (\ref{e:CanFormG}) with certain $C_l \in \RR \setminus \{0\}$.
Thus, the arguments of \cite{DP11}, Theorem \ref{t:Kir}, and Corollary \ref{c:d>0} can be applied 
to obtain all the statements of the theorem except the statement that $\Si (H_\vep) \cap \RR = \varnothing$,
which is well-known.
\end{proof}

\section{Discussion on other classes of m-D Hamiltonians}
\label{s:Dis}

We have described two  structural levels of the asymptotic distribution of the set of resonances $\Si (H)$ 
for 3-D Schrödinger Hamiltonians $H=H_{a,Y}$ with point interactions and have shown how 
these results can be adapted to quantum graphs $H_{\G,U}$. 
This allows us to define parameters corresponding to the asymptotics of 
the `physically relevant' part of $\Si (H)$, which consists of resonances lying in the logarithmic or horizontal 
strip 
which situated closest  to the real line.
The method relies on the fact that, for these two classes, resonances are zeros of exponential 
polynomials.

One can notice similarities between some of the features of the asymptotic structures described above and 
the known facts about resonances (or scattering poles) for Dirichlet Laplacians $H_{\OO} = - \De$ arising 
in obstacle scattering. The study of resonances of  $H_{\OO}$ requires more involved analytical tools and 
the presently available results on the asymptotic structure of the set 
$\Si (H_\OO)$ of resonances of $H_\OO$ are obtained 
under various additional assumptions on the types of obstacles (see reviews in \cite{DZ17,I07,Z99,Z17}). Using the terminology of the present paper,
the works \cite{I83,G88,SZ95,SZ99,I07} give the description of one structural level 
for the intersection of $\Si (H_\OO)$ with a certain horizontal or semi-cubic strip adjacent to the real line.

 In more details, the case of two strictly convex obstacles \cite{I83,G88,I07} resembles 
 to some extend the case where a quantum graph has  
$\mu^{\min} =0$ and commensurable lengths of edges (see  Section \ref{sss:KSH}) 
because in the both cases there exists 
a horizontal strip containing an infinite number of resonances composing asymptotically horizontal sequences.
One of the differences between these two cases is that, for a quantum graph, there exists a horizontal strip 
$\{ |\im z| \le \wt \ga\}$ such that the consideration of any wider horizontal strip adds only a 
finite number of resonances, while in the case of two strictly convex obstacles \cite{G88,I07},
taking wider and wider horizontal strips one obtains more and more additional 
asymptotically horizontal sequences.

Reading the works on the case of one strictly convex obstacle $\OO$,
one could guess in \cite{SZ93,SZ95,SZ99} (see also references therein and \cite{J15})
 a program on the study  of the structure of $\Si (H_\OO)$ in regions 
adjacent to $\RR$.
One of the instruments in this program  
is the counting function in various shaped strips
\begin{gather}
\wt \N_\vphi (R) = \# \{ k \in \Si (H) \ : \ - \vphi (|\re k|) \le \im k , \ |k| \le R \} ,
\end{gather}
where the function $\vphi : [0,+\infty) \to \RR$ describes the shape of the strip 
$\{- \vphi (|\re k|) \le \im k \le  0 \}$ (see \cite{SZ95}). In particular, \cite{SZ95,SZ99} work with 
functions $\vphi $ of the form $\vphi (\xi)=\vphi_\mu (\xi) = \mu \xi^{1/3}$, $\mu \ge 0$;
let us denote the corresponding family of 
counting functions $\wt \N_\vphi (R)$ by $\N^\cub (\mu,\cdot)$.
Introducing similarly to (\ref{e:AdLog}) the asymptotic density functions for cubic semi-strips
\[
\d_{\al}^\cub (\mu) := \limsup_{R \to +\infty} \frac{\N^\cub (\mu,R)}{R^\al} , \quad \mu \in \RR , 
\]
where an additional parameter $\alpha >0$ takes into account the possible polynomial growth 
\cite{M83,Z89JFA,SZ93,SZ99},
we can infer from the results of \cite{HL94,SZ95,SZ99} that, in the case of a strictly convex obstacle $\OO \subset \RR^m$ 
satisfying additional pinching conditions of \cite{SZ99} on curvatures of the boundary of $\OO$,
the support of the measure $d \d_{m-1}^\cub (\mu)$ 
is separated from $0$ and there exists 
a partition $ 0 =\mu_0 < \mu_1 < \mu_2 < \dots < \mu_{2 n+1}  < \mu_{2n+2} = +\infty $ 
such that $\supp d \d_{m-1}^\cub (\mu) \cap [\mu_{2j-1}, \mu_{2j} ] \neq \varnothing$ and 
$\supp d \d_{m-1}^\cub (\mu) \cap [\mu_{2j}, \mu_{2j+1} ] = \varnothing$ , $j=0,\dots,n$.
This suggests that, from the point of view of the asymptotics of `narrow resonances' (see Section \ref{s:Physics}),
 the two following constants 
have to play a special role: $\mu^{\min} = \inf \,  \supp d \d_{m-1}^\cub (\mu)$ and 
$\lim_{\mu \to \mu^{\min}+0} \frac{\d_{m-1}^\cub (\mu)}{\mu - \mu^{\min}}$.

It seems that very little is known about the internal structure of $\Si (H)$ for multi-dimensional 
Schrödinger operators
$H=-\De + V $ in $L^2_\CC (\RR^m)$ in the case 
of real-valued compactly supported potentials $V \in L_{\RR,\comp}^\infty (\RR^m)$ 
and an odd number $m>1$.
Most of studies of this case were concentrated on 
the asymptotics of $\N_H (R)$ for $R \to \infty$.
On one hand, it follows from 
\cite{Z89DMJ} that $\limsup_{R\to\infty} \N_H (R)/R^m < +\infty$ for every $V \in L^\infty_{\RR,\comp} (\RR^m)$,
and from \cite{CH05} that $\limsup_{R\to\infty} \ln \N_H (R)/\ln R = m $ 
for generic $V $ in the same class. However, the best known lower bound for nontrivial smooth $V (\cdot)$
is 
$\limsup_{R\to\infty} \frac{\N_H (R)}{R} > 0$ \cite{SaB01}.
A very stimulating and intriguing discussion of the existing gap between known upper and lower bounds 
on the growth of $\N_H (R)$ can be found in \cite[Section 2.7]{Z17}. It seems that an example of 
a nontrivial $V \in L^\infty_{\RR,\comp} (\RR^m)$ with $\limsup_{R\to\infty} \ln \N_H (R)/\ln R < m $ 
is not known (see \cite[Conjecture 1 in Section 2.7]{Z17}). 
Looking from this point of view on the point interaction Hamiltonians $H_{a,Y}$, one sees that,
 loosely speaking, the lower bound of \cite{SaB01} is achieved on them in the sense that $\lim_{R\to\infty} N_{H_{a,Y}} (R)/R 
\in \RR_+$ \cite{LL17}. While $H_{a,Y}$ do not belong to the class of operators 
$H=-\De + V $ with $V \in L_{\RR,\comp}^\infty (\RR^3)$, it is reasonable to test on them any prospective 
method of proving the equality $\limsup_{R\to\infty} \ln \N_H (R)/\ln R= m $, which was conjectured in \cite[Section 2.7]{Z17}.

\vspace{1ex}
\noindent
\textbf{Acknowledgements.} 
The authors are grateful to Vladimir Lotoreichik for 
a stimulating discussion on the paper \cite{LL17}, to Olaf Post for an interesting discussion of spectral problems 
on graph-like structures, and to 
Plamen Stefanov for informing us about the program of \cite{SZ93,SZ95,SZ99,J15}. The second  named author (IK) is grateful to Herbert Koch for the hospitality of the University of Bonn, 
to Jürgen Prestin for the hospitality of the University of Lübeck, and to Dirk Langemann for the hospitality of 
TU Braunschweig.
During various parts of this research, IK was supported by the Alexander von Humboldt Foundation,
 by the VolkswagenStiftung project “Modeling, Analysis, and Approximation 
Theory toward applications in tomography and inverse problems”, and by 
the  WTZ grant 100329049 ''Mathematical Models for Bio-Medical Problems'' 
jointly sponsored by BMBF (Germany) and MES (Ukraine).

\end{document}